\newcolumntype{?}{!{\vrule width 1pt}}
\numberwithin{equation}{section}
\def\bqn{\begin{eqnarray}}
\def\eqn{\end{eqnarray}}
\newcommand*\xbar[1]{
  \hbox{
    \vbox{
      \hrule height 0.5pt 
      \kern0.5ex         
      \hbox{
        \kern-0.25em      
        \ensuremath{#1}
        \kern-0.25em
      }
    }
  }
}
\newtheorem{Theorem}{Theorem}[section]
\newtheorem*{Theorem*}{Theorem}
\newtheorem{Corollary}[Theorem]{Corollary}
\newtheorem{Lemma}[Theorem]{Lemma}
\newtheorem{Proposition}[Theorem]{Proposition}
\theoremstyle{definition}
\newtheorem{Definition}[Theorem]{Definition}
\newtheorem{Example}[Theorem]{Example}
\newtheorem{Remark}[Theorem]{Remark}
\newtcbox{\mymath}[1][]{    nobeforeafter, math upper, tcbox raise base,
    enhanced, colframe=blue!30!black,
       colback=blue!30!red!30!white, boxrule=1pt,   
    #1}
\newcommand{\bea}{\begin{eqnarray}}
\newcommand{\eea}{\end{eqnarray}}
\newcommand{\bi}{\begin{itemize}}
\newcommand{\ei}{\end{itemize}}
\newcommand{\be}{\begin{enumerate}}
\newcommand{\ee}{\end{enumerate}}
\newcommand{\nn}{\nonumber}
\newcommand\qq{{\quad,\quad}}
\newcommand{\From}{\colon}
\newcommand{\p}{\partial}
\newcommand{\Lag}{\mathcal{L}}
\newcommand{\om}{\omega}
\newcommand\form[1]{\Omega^1\pl #1\pr}
\newcommand{\ff}{\form{\M}}
\newcommand{\vfh}{\Gamma\big(T\M\big)}
\newcommand{\un}{^{-1}}
\newcommand{\Id}{I}
\newcommand{\tr}{\text{tr}}
\newcommand{\Ker}{\text{\rm Ker }}
\newcommand{\End}{\text{\rm End}}
\newcommand{\Aut}{\text{\rm Aut}}
\newcommand\Span[1]{\text{\rm Span}\hspace{1.2mm} #1}
\newcommand\Spannn[1]{\text{\rm Span}\left\lbrace #1\right\rbrace}
\newcommand{\M}{\mathscr{M}}
\newcommand{\pl}{\left(}
\newcommand{\pr}{\right)}
\newcommand{\br}[2]{\big[ #1,#2 \big]}
\newcommand{\tK}{\textbf{K}}
\newcommand{\tP}{\textbf{P}}
\newcommand{\tJ}{\textbf{J}}
\newcommand{\tD}{\textbf{D}}
\newcommand{\tS}{{S}}
\newcommand{\tV}{\textbf{V}}
\newcommand{\tX}{\textbf{X}}
\def\tSs#1{\tS^{(#1)}}
\def\tVs#1{\tV^{(#1)}}
\newcommand{\so}{\mathfrak{so}}
\newcommand{\ad}{\text{\rm ad}}
\newcommand{\alg}{\mathfrak g}
\newcommand{\alh}{\mathfrak h}
\newcommand{\aln}{\mathfrak n}
\newcommand{\alk}{\mathfrak k}
\newcommand{\alp}{\mathfrak p}
\newcommand{\ali}{\mathfrak i}
\newcommand{\alj}{\mathfrak j}
\newcommand{\alt}{\mathfrak t}
\newcommand{\IWun}{\alk}
\newcommand{\IWdeux}{\ali}
\newcommand{\gh}{{\alg/\alh}}
\newcommand{\bs}{\boldsymbol}
\newcommand{\GL}{\text{\rm GL}}
\newcommand\GLR[1]{\GL(#1|\mathbb R)}
\newcommand{\SL}{\text{\rm SL}}
\newcommand\SLR[1]{\SL(#1|\mathbb R)}
\newcommand{\PSL}{\text{\rm PSL}}
\newcommand\PSLR[1]{\PSL(#1|\mathbb R)}
\newcommand{\PGL}{\text{\rm PGL}}
\newcommand\PGLR[1]{\PGL(#1|\mathbb R)}
\newcommand{\SO}{\text{\rm SO}}
\newcommand{\GroupR}{{\mathbb R^\times}}
\newcommand{\T}{\text{\rm T}}
\newcommand\TR[1]{\T(#1|\mathbb R)}
\newcommand{\mirroredinplus}{\reflectbox{$\inplus$}}
\newcommand{\ie}{\textit{i.e.}~}
\newcommand{\cf}{\textit{cf.}~}
\newcommand{\eg}{\textit{e.g.}~}
\newcommand{\etal}{\textit{et al.}~}
\newcommand{\vs}{\textit{vs.}~}
\newcommand{\ala}{\textit{\`a la}~}
\newcommand{\etc}{\textit{etc.}~}
\newcommand\pset[1]{\left \lbrace #1\right \rbrace}
\newcommand\Prop[1]{Proposition \ref{#1}}
\newcommand\Defi[1]{Definition \ref{#1}}
\newcommand{\Milnegh}{\fhpsi}
\newcommand\fhpsi{\Gamma(\Ker\psi)}
\newcommand\IW{\.{I}n\"{o}n\"{u}--Wigner }
\definecolor{rougef}{rgb}{0.56,0,0}	
\definecolor{LightCyan}{rgb}{0.88,1,1}
\definecolor{Gray}{gray}{0.9}
\definecolor{alizarin}{RGB}{234, 182, 118}
\definecolor{Blue}{RGB}{63, 118, 135}
\definecolor{WarmGray}{RGB}{188, 186, 190}
\definecolor{Overcast}{RGB}{241, 241, 242}
\definecolor{GlacierBlue}{RGB}{25, 149, 173}
\definecolor{Ice}{RGB}{161, 214, 226}
\definecolor{LightIce}{RGB}{217, 255, 255}
\definecolor{lightblue}{rgb}{0.22,0.45,0.70}
\definecolor{mygreen}{rgb}{0.04, 0.76, 0.53}
\definecolor{color1}{rgb}{0.255,0.53, 0.53} \definecolor{linkcolor}{RGB}{ 6, 57, 112}
  \definecolor{linkcolor}{RGB}{ 6, 57, 112}
  \definecolor{color2}{RGB}{  112, 6, 57}
  \definecolor{color3}{RGB}{  6, 99, 112}
\newcommand\btitle[1]{\newblock ``\textit{{#1}}''}
\newcommand\barxiv[1]{\newblock \href{http://arXiv.org/abs/#1}{\texttt{arXiv:#1}}}
\newcommand{\UnskipRef}[1]{\unskip~\textnormal{[\ref{#1}]}}
\newcommand{\UnskipRefs}[2]{\unskip~\textnormal{[\ref{#1} and \ref{#2}]}}
\newcommand{\SectionRef}[1]{\unskip~[\hyperref[#1]{\S\thinspace\ref{#1}}]}
\newcounter{rowcntr}[table]
\renewcommand{\therowcntr}{\thetable.\arabic{rowcntr}}
\newcommand{\mybox}[3]{\begin{tcolorbox}[ams align, colback=#1!25!white,colframe=#1,title={#2}]
  {#3} \nn
\end{tcolorbox}}
\newcolumntype{N}{>{\refstepcounter{rowcntr}\therowcntr}c}
\begin{document}
\tcbset{highlight math style={colback=blue!30!red!30!white}}
\thispagestyle{empty}

\vspace{1cm}

 \begin{centering}

{\large {\bfseries 
Possible ambient kinematics
}
}
\\\vspace{0.5cm}
Kevin Morand
\\
\vspace{0.5cm}
Department of Physics, Sogang University, Seoul  04107, South Korea\\
Center for Quantum Spacetime, Sogang University, Seoul  04107, South Korea\\
\vspace{0.5cm}
{\tt morand@sogang.ac.kr}

\vspace{0.5cm}

\end{centering}
\begin{abstract}
\medskip

\centering\begin{minipage}{\dimexpr\paperwidth-6.5cm}
\noindent
In a seminal paper, Bacry and L\'evy--Leblond classified kinematical algebras, a class of Lie algebras encoding the symmetries of spacetime.
Homogeneous spacetimes (infinitesimally, Klein pairs) associated with these possible kinematics can be partitioned into four families---riemannian, lorentzian, galilean and carrollian---based on
the type of invariant metric structure they admit. 
In this work, we classify possible ambient kinematics---defined as extensions of kinematical algebras by a scalar ideal---as well as their associated Klein pairs. Kinematical Klein pairs arising as 
 quotient space along the extra scalar ideal are said to admit a lift into the corresponding ambient Klein pair.
While all non-galilean Klein pairs admit a unique---trivial and torsionfree---higher-dimensional lift, galilean Klein pairs are constructively shown to admit lifts into two distinct families of ambient Klein pairs. The first family includes the bargmann algebra as well as its curved/torsional avatars while the second family is novel and generically allows lifts into torsional ambient spaces. 
We further comment on the relation between these two families and the maximally symmetric family of leibnizian Klein pairs.
\end{minipage}
\end{abstract}
\pagenumbering{gobble}
\tableofcontents

\newpage
\setcounter{secnumdepth}{0}
\section{Introduction}
\label{sectionintro}
\setcounter{secnumdepth}{3}

\pagenumbering{arabic}

On November $3^\text{rd}$ 1823, the Hungarian mathematician J\'anos Bolyai wrote---in a letter addressed to his father---about his latest discovery in the following terms \cite{Halsted}: 
`\emph{From nothing, I have created another entirely new world}'. As is known today, this `new world' coincides with the one of hyperbolic geometry---discovered separately and independently by the Russian mathematician Nikolai Lobachevsky---first occurrence of a \emph{non-euclidean} geometry in a universe hitherto dominated by Euclid's magisteria. The paradigm shift brought about by the advent of hyperbolic geometry hence challenged the millennial monopoly held by euclidean space as sole candidate to represent physical space.\footnote{Inasmuch that the philosopher Immanuel Kant, distancing himself from his previously held leibnizian views from the pre-critical period, endowed euclidean space---in his \emph{Critique of Pure Reason} (1781)---with the status of paradigmatic example of {\it a priori} synthetic knowledge, deeming it to be an unavoidable necessity of thought. Note that this view was famously challenged in \cite{Borges} where it was contradistinctly argued that `$[\ldots]$\emph{ hexagonal rooms are the necessary shape of absolute space, or at least of our perception of space}'. } A natural question that emerges from these considerations is then the leibnizian interrogation \cite{Leibniz}:
\vspace{-3mm}
\begin{center}
\textit{
What are the possible worlds?
}
\end{center}
The few decades following the wake of hyperbolic geometry\footnote{The emergence of these new geometries caused a stir in European intellectual life beyond mathematical circles, raising in particular important debates in the Victorian era regarding the pedagogy of geometry that was heretofore predicated on Euclid's \emph{Elements}. Among the protagonists of this controversy features Charles Lutwidge Dodgson---\emph{alias} Lewis Carroll---who penned a defence of the \emph{Elements} in the form of a play wherein Euclid's ghost argues against his `modern rivals' \cite{Carroll1879}.} witnessed the flourishing of other non-euclidean geometries, such as elliptic, affine and projective geometries. Faced with this newly found profusion of geometries, Felix Klein formulated in 1872 a vast synthesis \cite{Klein1872} based on a renewed conception of geometry envisaged as the study of invariant properties of space under a group of transformations. The proposed emphasis on the role played by the underlying symmetry groups in the classification and relation between different geometries constituted the initial impetus for H.~Bacry and J.~M.~L\'evy--Leblond \cite{Bacry1968} to rephrase the above question in mathematical terms. Explicitly, this was done by defining the notion of \emph{kinematical algebras}---\ie Lie algebras encoding the infinitesimal kinematical symmetries of any free particle---such that each of the associated generators finds its origin in a physical property of spacetime (in $d+1$ dimensions), according to the following table:
\begin{table}[ht]
\centering
\resizebox{12cm}{!}{
\begin{tabular}{c|c|c|c}
\rule[-0.2cm]{0cm}{0.6cm}{\bf Physical requirement}&{\bf Generator}&{\bf Transformation}&{\bf Dimension}\\
\hline
Time homogeneity&$H$&Time translation&1\\
\rowcolor{Gray}
Space homogeneity&$\tP$&Spatial translations&$d$\\
Relativity principle&$\tK$&Inertial boosts&$d$\\
\rowcolor{Gray}
Space isotropy&$\tJ$&Spatial rotations&$\frac{d\pl d-1\pr}{2}$
  \end{tabular}
  }
\end{table}

Remarkably, imposing the Jacobi identity highly constrains the space of possible kinematical algebras, allowing for their explicit classification \cite{Bacry1968,Bacry1986}. Such classification was recently revisited in arbitrary dimension in a series of works \cite{Figueroa-OFarrill2017c,Figueroa-OFarrill2017a,Andrzejewski2018} due to J.~Figueroa--O'Farrill and collaborators [\cf \SectionRef{section:Beginning at the beginning: kinematical algebras} below as well as \cite{FigueroaOFarrill2017} for a summary]. These authors also addressed the classification of the corresponding Klein geometries \cite{Figueroa-OFarrill2018,Figueroa-OFarrill2019}, thereby offering an exhaustive and panoramic view on the landscape of possible kinematics. In the spirit of Klein's Erlangen program, the kinematical Klein geometries delineated in these works can be partitioned according to the type of invariant structures living on the associated homogeneous spacetimes, thus enabling a refined partition into four distinct categories, namely, \emph{riemannian}, \emph{lorentzian}, \emph{galilean} and \emph{carrollian}. Whereas the former two families exhibit an invariant non-degenerate bilinear form [of signature $(0,d+1)$ and $(1,d)$, respectively], the galilean and carrollian homogeneous geometries are characterised by the existence of a \emph{degenerate} structure. The paradigmatic example of such degenerate Klein geometries is given by the flat galilei spacetime, the latter being foundational to Newtonian mechanics, wherein time and space are independent and absolute. The underlying galilei algebra also plays a crucial role in nonrelativistic quantum mechanics, whereby the symmetries of a physical system are projective unitary representations of the galilei group, or equivalently ordinary representations of the central extension of the galilei group, known as the bargmann group  \cite{Bargmann1952}. 

The significance of the bargmann group and its corresponding algebra in relation to nonrelativistic physics and the covariant formulation thereof has been emphasised in \cite{Duval1977,Duval1982a}, and more recently revisited in \cite{Andringa:2010it}. In addition to the generators listed in the above table, the bargmann algebra is characterised by an extra, central, generator $M$, conventionally interpreted `internally' in relation to the inertial mass of the system. However, an alternative interpretation---put forward in the works of L.~P.~Eisenhart \cite{Eisenhart1928} and independently elaborated on by C.~Duval and collaborators \cite{Duval1985,Duval1991}---construes this extra generator as encoding the translational symmetry for an additional spacetime direction. In this interpretation, the homogeneous space modelled on the bargmann Klein pair is of dimension $d+2$ and characterised by an invariant lorentzian metric structure---isomorphic to the minkowski metric---together with an invariant lightlike vector field, thereby forming a \emph{bargmannian} metric structure \cite{Duval1985,Duval1991}. Upon quotienting along the lightlike direction, one recovers the flat galilei spacetime in $d+1$ dimensions, the latter being said to admit a \emph{lift} into the \emph{ambient} bargmannian Klein geometry. Following these pioneering works, other lifts of homogeneous galilean spacetimes within the bargmannian framework have been proposed: in \cite{Gibbons:2003rv}, the curved homogeneous spaces modelled on the \emph{euclidean/lorentzian newton} algebras\footnote{
The euclidean/lorentzian newton algebras and their corresponding spacetimes are referred to by various names in the literature, most notably as \emph{Newton--Hooke} spacetimes, \emph{non-relativistic cosmological} spacetimes, or \emph{galilean (anti) de Sitter} spacetimes. The bivalent terminology has the advantage to reflect the existence of two possible signs for the (non-vanishing) associated `cosmological constant'.
} have been shown to admit a lift into Hpp-waves modelled on their respective central extensions; more recently, bargmannian Klein pairs allowing to lift the torsional galilean Klein pairs classified in \cite{Figueroa-OFarrill2018} \big[see also \cite{FigueroaOFarrill2022d}\big] have been exhibited and studied in \cite{FigueroaOFarrill2022,FigueroaOFarrill2022e}. As an alternative to the bargmannian framework, a new class of possible ambient geometries---dubbed \emph{leibnizian}---was introduced in \cite{Bekaert2015b} \big[\cf also the recent work \cite{FigueroaOFarrill2022f}\big] and argued to provide a minimal ambient unification of galilean and carrollian geometries. The paradigmatic example thereof---called \emph{leibniz} algebra\footnote{Note that the leibniz algebra introduced in \cite{Bekaert2015b} is a Lie algebra and as such should not be confused with the unrelated notion of Leibniz algebras \cite{Loday1993} which refers to vector spaces endowed with a  bilinear but non-necessarily skewsymmetric bracket satisfying the Leibniz identity. }---is maximal\footnote{That is, the leibniz algebra is of dimension $\frac{(d+2)(d+3)}{2}$ in $(d+2)$-spacetime dimension.} in spacetime dimension $d+2$ and contains the bargmann algebra as subalgebra. 
This recent flurry of ambient spaces naturally prompts the following variation:
\begin{center}
\textit{
What are the possible ambient worlds?
}
\end{center}
The object of the present work\footnote{Together with its companion paper \cite{Morand2023}.} will be to cast this problem in mathematical terms by presenting a notion of possible ambient kinematics, which we shall henceforth classify both at the level of algebras and associated Klein pairs. Such classification will see the advent of new families of ambient Klein pairs, thus disputing bargmann's long-held magisteria.

\paragraph{A first example}
\renewcommand{\theequation}{\roman{equation}}
\hfill

\medskip

Let us consider by way of illustration the \emph{euclidean/lorentzian newton spacetime} defined as the $(d+1)$-dimensional spacetime $\bar\M\simeq \mathbb R^{d+1}$ endowed with the following structure of galilean manifold\footnote{Recall that a \emph{galilean structure} \cite{Kuenzle:1972zw} is a triplet $(\bar\M,\bar\psi,\bar\gamma)$ where:
\bi
\item $\bar{\M}$ is a manifold of dimension $d+1$.
\item $\bar{\psi}$ is a nowhere vanishing 1-form.
\item $\bar\gamma$ is a {positive definite} covariant metric of rank $d$ acting on ${\Gamma(\Ker\bar\psi)}$.
\ei
A galilean structure endowed with a compatible connection $\bar\nabla$ is called a \emph{galilean manifold} [\cf \eg \cite{Morand2018} for a review].
}:
\bea
\label{equation:euclidean/lorentzian newton structure}
 \begin{tikzcd}[row sep = small,column sep = small,ampersand replacement=\&]
\bar\psi=dt\& \bar\gamma=\delta_{ij}\, dx^i\otimes dx^j\&\bar\Gamma^i_{tt}=\epsilon\, x^i
\end{tikzcd}
\eea
where $\epsilon=1$ (resp. $\epsilon=-1$) in the euclidean (resp. lorentzian) case. The above spacetime is maximally symmetric in dimension $d+1$, \ie the associated isometry algebra\footnote{Here and in the following, isometry algebras are assumed to preserve every ingredient making up the manifold [including the connection thus ensuring the algebra to be finite-dimensional].}
$\alg_0$ is of maximal dimension $\frac{(d+1)(d+2)}{2}$ and isomorphic to the kinematical algebra known as the \emph{euclidean/lorentzian newton algebra}, with non-trivial commutators\footnote{The commutators involving the $\so(d)$-generators $\tJ$ are fixed by definition of kinematical algebras (and generalisations) and are therefore omitted in this introductory section. }:
\bea
\label{equation:euclidean/lorentzian newton algebra}
 \begin{tikzcd}[row sep = small,column sep = small,ampersand replacement=\&]
\br{\tK}{H}=\tP\&\br{H}{\tP}=\epsilon\, \tK\, .
\end{tikzcd}
\eea
The Duval--Eisenhart lift of the euclidean/lorentzian newtonian spacetime \eqref{equation:euclidean/lorentzian newton structure} has been investigated in \cite{Gibbons:2003rv} where it was found to arise as the quotient manifold of the following $(d+2)$-dimensional \emph{bargmannian manifold}\footnote{A \emph{bargmannian structure}  \cite{Duval1985,Duval1991}  is a triplet $(\M,\xi,g)$ where:
\bi
\item $\M$ is a manifold of dimension $d+2$.
\item $\xi$ is a nowhere vanishing vector field. 
\item $g$ is a lorentzian metric such that $\xi$ is lightlike with respect to $g$.
\ei
A bargmannian structure endowed with a compatible connection
is called a  \emph{bargmannian manifold}. Whenever the connection is torsionfree, the latter coincides with the Levi--Civita connection associated with the metric $g$. 
}:
\bea
\label{equation:euclidean/lorentzian bargmannian structure}
 \begin{tikzcd}[row sep = small,column sep = small,ampersand replacement=\&]
{\xi}=\p_u\& g=-\epsilon\, \textbf{x}^2\, dt\otimes dt+ du\otimes dt+dt\otimes du +\delta_{ij}\, dx^i\otimes dx^j
\end{tikzcd}
\eea
corresponding to a Hpp-wave endowed with a privileged lightlike parallel direction. The corresponding isometry algebra
reproduces the unique (non-trivial) central extension $\alg=\alg_0\inplus\mathbb R$ of the Lie algebra \eqref{equation:euclidean/lorentzian newton algebra} by an extra generator $M$ and is characterised by the additional commutator:
\bea
\label{equation:euclidean/lorentzian bargmannian algebra}
 \begin{tikzcd}[row sep = small,column sep = small,ampersand replacement=\&]
\br{\tK}{\tP}=M\, .
\end{tikzcd}
\eea
Crucial for the present analysis is the fact that the additional generator spans a scalar ideal $\ali=\Span M$ of the extended algebra $\alg$. The corresponding quotient is thus a well-defined Lie algebra being isomorphic to the euclidean/lorentzian newton algebra, \ie  $\alg_0=\alg/\ali$. This fact can be seen as the algebraic rationale underlying the geometric fact that the lightlike dimensional reduction of the above bargmannian geometry is well-defined and yields the euclidean/lorentzian newtonian spacetime \eqref{equation:euclidean/lorentzian newton structure} as quotient $\bar\M=\M/\mathbb R$ [\cf \eg \cite{Morand2018} for details]. 
\medskip

Note that the Duval--Eisenhart construction presupposes the ambient geometry to be bargmannian. An alternative, more minimal, embedding scheme has been put forward in \cite{Bekaert2015b} where the corresponding ambient geometry is assumed to be a \emph{leibnizian manifold}\footnote{\label{footnote:leibnizian manifold}
A \emph{leibnizian structure} \cite{Bekaert2015b} is a quadruplet $(\M,\xi,\psi,\gamma)$ where:
\bi
\item ${\M}$ is a manifold of dimension $d+2$.
\item $\xi$ is a nowhere vanishing vector field. 
\item $\psi$ is a nowhere vanishing 1-form annihilating ${\xi}$.
\item $\gamma$ is a {positive semi-definite} covariant metric of rank $d$ acting on $\Milnegh$ whose radical is spanned by ${\xi}$. 
\ei
A leibnizian structure endowed with a compatible connection
is called a \emph{leibnizian manifold}. Note that any bargmannian manifold induces a leibnizian manifold by defining $\psi=g(\xi,-)$ and $\gamma=g|_{\Ker\psi}$. For example, the leibnizian manifold associated with the bargmannian manifold \eqref{equation:euclidean/lorentzian bargmannian structure} reads explicitly:
\bea
\label{equation:induced leibnizian structure}
 \begin{tikzcd}[row sep = small,column sep = small,ampersand replacement=\&]
{\xi}=\p_u\& \psi=dt\&\gamma=\delta_{ij}\, dx^i\otimes dx^j\&\Gamma^u_{ti}=\Gamma^u_{it}=-\epsilon\, x^i\&\Gamma^i_{tt}=\epsilon\, x^i\nn
\end{tikzcd}
\eea
and can be checked to admit a well-defined projection isomorphic to the galilean manifold \eqref{equation:euclidean/lorentzian newton structure}. 
}. Specifically, defining the leibnizian manifold $\M\simeq\mathbb R^{d+2}$ endowed with:
\bea
\label{equation:leibnizian structure}
 \begin{tikzcd}[row sep = small,column sep = small,ampersand replacement=\&]
{\xi}=\p_u\& \psi=dt\&\gamma=\delta_{ij}\, dx^i\otimes dx^j\&\Gamma^u_{tt}=\epsilon\, u\&\Gamma^i_{tt}=\epsilon\, x^i\, ,
\end{tikzcd}
\eea
the quotient manifold $\M/\mathbb R$ [\ie the space of integral curves of $\xi$] can be checked to be a well-defined galilean manifold isomorphic to the above euclidean/lorentzian newtonian spacetime \eqref{equation:euclidean/lorentzian newton structure}. Furthermore, the isometry algebra of \eqref{equation:leibnizian structure}
is of maximal dimension\footnote{
Explicitly, the isometry algebra $\alg$ has underlying vector space $\alg=M\oplus H\oplus C\oplus \tP\oplus\tD\oplus\tK\oplus\tJ$ and is thus of maximal dimension $(d+2)(d+3)/2$, in contradistinction with the centrally extended euclidean/lorentzian newtonian algebra $\alg=M\oplus H\oplus \tP\oplus\tK\oplus\tJ$ being of non-maximal dimension $(d+1)(d+2)/2+1$. 
}
 in dimension $d+2$ and characterised by the following non-trivial commutators:
 \bea
 \label{equation:euclidean/lorentzian leibnizian algebra}
 \begin{tikzcd}[row sep = small,column sep = small,ampersand replacement=\&]
\br{\tD}{\tK}= C\&\br{\tK}{H}=\tP\&\br{\tD}{\tP}= M\&\br{C}{H}= M\&\br{H}{\tP}=\epsilon\, \tK\&\br{H}{M}=\epsilon\, C\, .
\end{tikzcd}
 \eea
 
Defining the abelian ideal $\ali= M\oplus C\oplus\tD$, the quotient $\alg/\ali$ is again isomorphic to the euclidean/lorentzian newton algebra $\alg_0$, thus providing an \emph{a posteriori} justification of the fact that the geometric projection is well-defined. 

\medskip


The two constructions mentioned above serve as examples of lifts of our original homogeneous galilean manifold into higher-dimensional homogeneous geometries.\footnote{Note that the two above constructions are independent in the following two (roughly equivalent) senses:
\be
\item The euclidean/lorentzian leibnizian algebra \eqref{equation:euclidean/lorentzian leibnizian algebra} does not contain the euclidean/lorentzian bargmannian algebra \eqref{equation:euclidean/lorentzian newton algebra}+\eqref{equation:euclidean/lorentzian bargmannian algebra} [corresponding to the central extension of the  euclidean/lorentzian newton algebra \eqref{equation:euclidean/lorentzian newton algebra}] as a subalgebra. 
\item The leibnizian manifold underlying \big[see footnote \ref{footnote:leibnizian manifold} as well as \cite{Bekaert2015b,Morand2023} for details\big] the above euclidean/lorentzian bargmannian manifold \eqref{equation:euclidean/lorentzian bargmannian structure} is not isomorphic to the above euclidean/lorentzian leibnizian manifold \eqref{equation:leibnizian structure}. 
\ee
} The objective of this series of works is to investigate and, where possible, classify such $(d+2)$-dimensional homogeneous ambient geometries capable of lifting $(d+1)$-dimensional homogeneous kinematical geometries. To achieve this, we reduce the problem to an algebraic one, in the spirit of the original work \cite{Bacry1968} which dealt with $(d+1)$-dimensional kinematical geometries.
\medskip

Specifically, our focus lies in classifying geometrically realisable `ambient' Klein pairs that admit kinematical Klein pairs as quotients (such ambient Klein pairs will hereafter be referred to as \emph{projectable triplets} \UnskipRef{Definition:Projectable triplet}). These ambient Klein pairs come in two flavors, depending on the assumed dimensionality of the corresponding algebras (\eg bargmann \vs leibniz). Importantly, we refrain from making assumptions about the underlying geometric structure, our main aim being precisely to classify these geometries.

\pagebreak
Below is an informal list of our main results, as illustrated in terms of the above example:

\be
\item We classify projective triplets having same dimensionality as the centrally extended euclidean/lorentzian newton algebra (or equivalently, as the bargmann algebra). The classified Klein pairs are compiled in Table \ref{Table:Projectable ambient triplets} which constitutes the main result of this first volume. Remarkably, every (effective) galilean Klein pair, as classified in \cite[Table 1]{Figueroa-OFarrill2018} [see also Table \ref{Table:Effective kinematical Klein pairs} below] is shown to admit a lift into two distinct geometric types [\cf Table \ref{Table:Possible lifts of effective galilean Klein pairs}]. 

Coming back to our example, the euclidean/lorentzian newton algebra is shown to admit:
\be
\item a unique bargmannian lift into its central extension \eqref{equation:euclidean/lorentzian newton algebra}+\eqref{equation:euclidean/lorentzian bargmannian algebra}, realised geometrically as the isometry algebra of the bargmannian manifold \eqref{equation:euclidean/lorentzian bargmannian structure}. 
\item a 1-parameter family of lifts into a novel ambient class, dubbed \emph{$\mathsf{G}$-ambient}, with commutators: 
 \bea
 \label{equation:euclidean/lorentzian G-ambient}
 \begin{tikzcd}[row sep = small,column sep = small,ampersand replacement=\&]
\br{\tK}{H}=\tP\&\br{H}{\tP}=\epsilon\, \tK\&\br{H}{M}=\lambda M
\end{tikzcd}
 \eea
 where the free parameter $\lambda\in\mathbb R$ encodes the arbitrariness in the choice of ambient torsion. This new family of $\mathsf{G}$-ambient algebras can be geometrically realised as isometry algebras of leibnizian manifolds endowed with a privileged parallel \emph{Ehresmann connection} \big[\ie a 1-form $A\in\ff$ satisfying $A(\xi)=1$ and $\nabla A=0$\big]. We refer to the companion paper \cite{Morand2023} for explicit constructions. 
\ee

\item We show that every galilean Klein pair admits leibnizian lifts, both in the reductive [Table \ref{Table:Reductive leibnizian lifts of galilean algebras}] and non-necessarily reductive case [Table \ref{Table:Leibnizian lifts of galilean algebras}]. In particular, the euclidean/lorentzian newton algebra \eqref{equation:euclidean/lorentzian newton algebra} is shown to admit:
\be
\item a unique reductive leibnizian lift given by \eqref{equation:euclidean/lorentzian leibnizian algebra}, geometrically realised as isometry algebra of the leibnizian manifold \eqref{equation:leibnizian structure}.
\item a 1-parameter family of non-necessarily reductive leibnizian lifts, with additional commutators:
 \bea
 \label{equation:additional non-reductive commutators}
 \begin{tikzcd}[row sep = small,column sep = small,ampersand replacement=\&]
\br{H}{M}=r\, M\&\br{H}{C}= r\, C\&\br{H}{\tD}=r\, \tD
\end{tikzcd}
 \eea
 where the free parameter $r\in\mathbb R$ encodes the obstruction to reductivity.\footnote{The centrally extended lorentzian newton algebra \eqref{equation:euclidean/lorentzian newton algebra}+\eqref{equation:euclidean/lorentzian bargmannian algebra} (\ie with $\epsilon=-1$) can be realised as subalgebra of the non-reductive \eqref{equation:euclidean/lorentzian leibnizian algebra}+\eqref{equation:additional non-reductive commutators}, with $\epsilon=-1$ and $r=\pm1$. Similarly, the $\mathsf{G}$-ambient algebra \eqref{equation:euclidean/lorentzian G-ambient} with $\epsilon=-1$ is realised as subalgebra of  \eqref{equation:euclidean/lorentzian leibnizian algebra}+\eqref{equation:additional non-reductive commutators} for $r=\lambda\pm1$. Note however that such realisations do not carry in the euclidean case, \cf Section \ref{section:Non-reductive lifts of galilean Klein pairs} for details. }
\ee
\ee
\renewcommand{\theequation}{\arabic{equation}}
\numberwithin{equation}{section}
\paragraph{Plan of the paper}
\hfill

\medskip
We start by reviewing the basic definitions of kinematical algebras and Klein pairs in Section \ref{section:Beginning at the beginning: kinematical algebras}, as well as their classification\footnote{Throughout this work, we will focus on so-called \emph{universal} algebras \ie algebras whose commutators are valid in generic spatial dimension $d$, without relying on any dimension-specific structure. } from \cite{Bacry1968,Bacry1986,Figueroa-OFarrill2017c,Figueroa-OFarrill2017a,Andrzejewski2018,Figueroa-OFarrill2018,Figueroa-OFarrill2019}. Particular attention will be devoted to the partition of kinematical Klein pairs according to the type of invariant metric-like structures they admit. We then proceed in Section \ref{section:Curiouser and curiouser: aristotelian algebras} by mimicking the review of the above-mentioned results in the aristotelian and lifshitzian cases.

\medskip
Section \ref{section:A Tangled Tale: (s,v)-Lie algebras} serves to both justify the diverse array of algebras and Klein pairs employed in the present study [as succinctly presented in Table \ref{Table:Summary of Klein pairs}] and unify the corresponding terminology. [As such, it may be skipped during the initial reading.] We introduce the notion of $(s,v)$-Lie algebras and discuss the particular subfamily of $k$-kinematical algebras and related Klein pairs \UnskipRefs{Proposition:k-kinematical algebras}{Definition:k-kinematical Klein pairs}, and similarly in the aristotelian case \UnskipRefs{Proposition:k-aristotelian algebras}{Definition:k-aristotelian Klein pairs}. The latter will be argued to provide a natural generalisation of the notions of kinematical and aristotelian Klein pairs, respectively, on a spacetime of dimension $d+k$. We conclude the section by discussing a scheme allowing to relate different families of algebras via dimensional reduction that will later prove relevant \SectionRef{section:Drink me} in order to generate generalisations of the leibniz algebra. 

\medskip
Section \ref{section:Climbing up one leg of the table: an ambient perspective on Klein pairs} will see us \textit{try \emph{[our]} best to climb up one of the legs of the table}, an endeavour hopefully made less slippery by the preliminary work engaged in the preceding sections.  More precisely, our aim will be to provide a comprehensive account of the ambient approach at the algebraic level of Klein pairs.  We will start by unpacking the previously defined `$k$-generalisations' of usual notions in the specific $k=2$ case---which will prove to be the case suitable for the ambient approach---and for which the restriction of the previous notions will be dubbed \emph{ambient kinematical} \UnskipRefs{Definition:Ambient kinematical algebra}{Definition:Ambient kinematical Klein pairs} and \emph{ambient aristotelian} \UnskipRefs{Definition:Ambient aristotelian algebra}{Definition:Ambient aristotelian Klein pairs}, respectively. We will then make a detour into the realm of Klein pair reduction along an ideal in order to define the notion of \emph{projectable triplets} \UnskipRef{Definition:Projectable triplet}, which, we shall argue, constitutes the natural algebraic abstraction underlying the usual Duval--Eisenhart lift of nonrelativistic structures into codimension one `ambient' geometries. We conclude the section by discussing two prominent examples of lifts of the galilei Klein pair---\ie into the leibniz \UnskipRef{Example:Leibniz projectable triplet} and bargmann \UnskipRef{Example:Bargmann projectable triplet} Klein pairs, respectively---thus providing two potential paths for generalisation. At this point, we will tuck the leibnizian side of the story in our collective pocket-watch, only to revisit it when the clock strikes the appropriate hour [approximately, when approaching Section \ref{section:Drink me}] and focus in the meantime on the bargmannian route and its ambient aristotelian generalisation.

\medskip
Section \ref{section:Bargmann and his modern rivals} contains our main results. After defining the notion of \emph{projectable ambient triplets} \UnskipRef{Definition:Projectable ambient triplets} at the intersection of projectable triplets \UnskipRef{Definition:Projectable triplet} and ambient aristotelian Klein pairs \UnskipRef{Definition:Ambient aristotelian Klein pairs}, we undertake the classification of the former, proceeding in three steps. We start by classifying all ambient aristotelian algebras admitting a scalar ideal [\cf Table \ref{Table:Ambient aristotelian algebras admitting a scalar ideal}] and provide comparison with the parts of this classification already available in the literature. This first step constitutes the starting point to classify all associated Klein pairs \UnskipRef{Definition:Ambient kinematical Klein pairs}, both effective [\cf Table \ref{Table:Effective ambient aristotelian Klein pairs with scalar ideal}] and non-effective, in which case we exhibit the associated effective lifshitzian Klein pair  [\cf Table \ref{Table:Effective lifshitzian Klein pairs associated with ambient aristotelian algebras}]. We proceed by identifying effective ambient aristotelian Klein pairs [with scalar ideal] whose projection along the scalar ideal yields an effective [kinematical] Klein pair. Such projectable ambient Klein pairs are collected in Table \ref{Table:Projectable ambient triplets} and further categorised according to their invariant ambient metric structure. The classification admits a five-fold partition. Three of these classes coincide with the unique [and trivial] scalar extension of non-galilean kinematical algebras [via direct sum $\alg\oplus\mathbb R$]. One additional class corresponds to the unique [albeit non-trivial] lift of galilean kinematical algebras into the bargmannian algebras recently exhibited and studied in \cite{FigueroaOFarrill2022,FigueroaOFarrill2022e}. Apart from the paradigmatic lift of the galilei Klein pair into the bargmann Klein pair studied by Duval \etal \cite{Duval1977,Duval1985,Duval1991} as well as its curved [euclidean/lorentzian newton] generalisations into Hpp-waves \cite{Gibbons:2003rv}, this class includes embedding algebras for the torsional avatars of the galilei Klein pair, as recently classified in \cite{Figueroa-OFarrill2018} and further studied in \cite{FigueroaOFarrill2022d}. 
This bargmannian class is privileged among all five, being the only one admitting an invariant \emph{non-degenerate} bilinear form defined on the whole $(d+2)$-dimensional tangent space [in the corresponding Cartan geometry]. One primary motivation for the present series of works indeed consists in integrating such bargmannian Klein pairs into an exhaustive classification, as well as to revisit the corresponding geometric realisation and projection \cite{Morand2023}. The last class of the classification is given by so-called \emph{$\mathsf{G}$-ambient} Klein pairs which provide an alternative ambient lift for all galilean kinematical Klein geometries. This $\mathsf{G}$-ambient class is distinguished as the only class allowing to lift kinematical Klein pairs in a non-unique way, the arbitrariness being encoded in the ambient torsion. The galilean family is thus privileged from an ambient perspective\footnote{We should stress that atop the ambient perspective adopted in the present work, which indeed accords primacy to galilean structures, there exists an alternative---dual---perspective \cite{Duval:2014uoa,Morand2018} that instead privileges carrollian structures. We hope to delve deeper into this latter approach in future work. } as the only family of kinematical Klein pairs admitting a non-unique [and non-trivial] lift into two different classes of ambient structures [\cf Table \ref{Table:Possible lifts of effective galilean Klein pairs}]. 

\medskip
Not unlike Alice, we will \textit{give \emph{[ourselves]} very good advice} [specifically in this case, `one should aim at integrating ambient Klein pairs into rigorous classifications'] \emph{only to not follow it} in Section \ref{section:Drink me} by displaying projectable leibnizian Klein pairs obtained through a non-classificatory method, namely by combining the idea of dimensional reduction \SectionRef{section:A Tangled Tale: (s,v)-Lie algebras} together with the one of \IW contraction \SectionRef{section:Contraction of Klein pairs}. The output of this procedure will be a set of Klein geometries endowed with a leibnizian metric structure and allowing to lift all kinematical galilean structures in a maximally symmetric fashion [\cf Table \ref{Table:Reductive leibnizian lifts of galilean algebras}]. Finally, we will go one step forward into our transgression by displaying novel---yet largely \emph{ex nihilo}---lifts of galilean Klein geometries into \emph{non-reductive} leibnizian Klein pairs [\cf Table \ref{Table:Leibnizian lifts of galilean algebras}], the introduction of which will allow to generalise the above-mentioned realisation of the bargmann algebra as subalgebra of the leibniz algebra to some of the previously classified ambient aristotelian Klein pairs. 

\section{Beginning at the beginning: kinematical algebras}
\label{section:Beginning at the beginning: kinematical algebras}

In a seminal paper \cite{Bacry1968}, Bacry and L\'evy--Leblond introduced the notion of kinematical algebras, namely Lie algebras encoding the infinitesimal kinematical symmetries of any free particle. A precise definition can be articulated as follows \cite{Bacry1986,Figueroa-OFarrill2017c}:

\begin{Definition}[Kinematical algebra]\label{Definition:Kinematical algebra}A kinematical algebra with $d$-dimensional spatial isotropy is a Lie algebra $\alg$ containing a $\so(d)$-subalgebra and admitting the following decomposition of $\so(d)$-modules:
\vspace{-6mm}

\bea
\alg=\tS\oplus\bigoplus_{a=1}^2\tVs{a}\oplus\tJ\label{equation:decomposition of kinematical algebras}
\eea
\vspace{-8mm}

where:
\be
\item The generators $\tJ$ span the $\so(d)$-subalgebra\footnote{Here and in the following, $\br{\tJ}{\tJ}\sim \tJ$ stands for $\br{J_{ij}}{J_{kl}}=\delta_{jk}J_{il}-\delta_{jl}J_{ik}+\delta_{il}J_{jk}-\delta_{ik}J_{jl}$, with $i,j,k,l\in\pset{1,\ldots,l}$, while $\br{\tJ}{\tV}\sim \tV$ stands for $\br{J_{ij}}{V_k}=-\delta_{ik}V_j+\delta_{j k}V_i$, with $\delta_{ij}$ the Kronecker delta in $d$ dimensions.}  \ie $\br{\tJ}{\tJ}\sim \tJ$.
\item The generators $\tVs{a}$'s are in the vector representation of $\mathfrak{so}(d)$, \ie $\br{\tJ}{\tVs{a}}\sim \tVs{a}$.
\item The generator $\tS$ is in the scalar representation of $\mathfrak{so}(d)$, \ie $\br{\tJ}{\tS}\sim0$.
\ee
\end{Definition}

Two kinematical algebras $\alg_1,\alg_2$ are isomorphic if there is an isomorphism of Lie algebras $\varphi\From\alg_1\to\alg_2$ preserving the $\so(d)$-decomposition, \ie $\varphi(\tJ_1)= \tJ_2$, $\varphi(S_1)= S_2$ and $\varphi(\tVs{1}_1\oplus\tVs{2}_1)= \tVs{1}_2\oplus\tVs{2}_2$.
\begin{Remark}
\label{Remark:indiscernable}
Note that the $\so(d)$-decomposition treats the two vector modules $\tVs{a}$ as `indiscernible'. In particular, isomorphisms of kinematical algebras are \emph{not} assumed to preserve $\tVs{1}$ and $\tVs{2}$ individually. 
\end{Remark}
It is customary to relabel the above generators as $S=H$, $\tVs{1}=\tP$ and $\tVs{2}=\tK$ so that the above $\so(d)$-decomposition reads more familiarly:
\[
\alg=H\oplus\tP\oplus\tK\oplus\tJ\, .
\]
Let us stress, consistently with the above remark, that the labels of the vectorial generators are given for definiteness only but are not imparted with any geometric meaning at this stage. Such geometric interpretation will require additional structure, namely a subalgebra $\alh\subset\alg$ making $(\alg,\alh)$ a Klein pair, see Definition \ref{Definition:Infinitesimal Klein pair} below. 

\medskip
Kinematical algebras in all dimension $d$ have been classified up to isomorphisms in a series of works due to J.~Figueroa--O'Farrill and collaborators [\cf \cite{FigueroaOFarrill2017} for a summary]:
\bi
\item $d=0$: There is a unique kinematical algebra, which is $1$-dimensional and hence abelian.
\item $d=1$: There is no skewsymmetric tensor in dimension $d=1$, hence no candidate for the generator $\tJ$ of spatial rotations. The defining conditions of a kinematical algebra are then automatically satisfied, so that any 3-dimensional real Lie algebra is a kinematical algebra. The classification of $d=1$ kinematical algebras thus identifies with the Bianchi classification \cite{Bianchi1898}. 
\item $d=2$: Kinematical algebras in $d=2$ were classified in \cite{Andrzejewski2018}.
\item $d=3$: Kinematical algebras in $d=3$ were classified in \cite{Bacry1986,Figueroa-OFarrill2017c}.
\item $d>3$: Kinematical algebras in $d>3$ were classified in \cite{Bacry1968,Figueroa-OFarrill2017a}. 
\ei
A \emph{universal kinematical algebra} is  a sequence of kinematical algebras indexed by an integer $d\in\mathbb N$ -- referred to as the spatial isotropy dimension -- such that the associated commutation relations take the same form for any value of $d$. In plain words, kinematical algebras are said to be universal if their commutators do not rely on any dimension-specific structure (\eg the Levi--Civita tensor) and are hence valid in generic dimension. Universal kinematical algebras are classified\footnote{Our choice of parameterisation differs from the one used in \cite[Table 17]{Figueroa-OFarrill2017a}. The Rosetta stone between the two characterisations is given by the following table: 
\medskip

\begin{center}
\begin{tabular}{l|l}
Table 17 of \cite{Figueroa-OFarrill2017a}&Table \ref{Table:Universal kinematical algebras}\\\hline
Eq.5 (static)& \hyperlink{Table:K1}{$\mathsf{K1}$}\\
Eq.22 (galilei)\cellcolor{Gray}&\cellcolor{Gray}\hyperlink{Table:K3}{$\mathsf{K3}$}\\
Eq.20 with $\gamma=-1$ (lorentzian newton)&\hyperlink{Table:K6}{$\mathsf{K6}$}\\
Eq.20 with $-1<\gamma<0$\cellcolor{Gray}&\cellcolor{Gray}\hyperlink{Table:K8}{$\mathsf{K8}_{\alpha_-}$} with $\alpha_->0$ where $\alpha_-=\frac{\gamma+1}{\sqrt{-\gamma}}$\\
Eq.20 with $\gamma=0$&\hyperlink{Table:K4}{$\mathsf{K4}$}\\
Eq.20 with $0<\gamma<1$\cellcolor{Gray}&\cellcolor{Gray}\hyperlink{Table:K7}{$\mathsf{K7}_{\alpha_+}$} with $\alpha_+>2$ where $\alpha_+=\frac{\gamma+1}{\sqrt{\gamma}}$\\
Eq.20 with $\gamma=1$&\hyperlink{Table:K2}{$\mathsf{K2}$}\\
Eq.23 \cellcolor{Gray}&\cellcolor{Gray}\hyperlink{Table:K7}{$\mathsf{K7}_{\alpha_+}$} with $\alpha_+=2$\\
Eq.21 with $\alpha=0$ (euclidean newton)&\hyperlink{Table:K5}{$\mathsf{K5}$}\\
Eq.21 with $\alpha>0$\cellcolor{Gray}&\cellcolor{Gray}\hyperlink{Table:K7}{$\mathsf{K7}_{\alpha_+}$} with $0<\alpha_+<2$ where $\alpha_+=\frac{2\alpha}{\sqrt{1+\alpha^2}}$
  \end{tabular}
  \end{center}
}
 in \cite[Table 17]{Figueroa-OFarrill2017a}, classification that we reproduce in Table \ref{Table:Universal kinematical algebras}.

\begin{table}[ht]
\centering
\resizebox{14cm}{!}{
\begin{tabular}{l|l|lllll}
\multicolumn{1}{c|}{\textbf{Label}}&\multicolumn{1}{c|}{\textbf{Comments}}&\multicolumn{5}{c}{\textbf{Non-trivial commutators}}\\\hline
\hypertarget{Table:K1}{$\mathsf{K1}$}&static&&&&&\\
\rowcolor{Gray}
\hypertarget{Table:K2}{$\mathsf{K2}$}&&&$\br{H}{\tK}= \tK$&$\br{H}{\tP}=\tP$&&\\
\hypertarget{Table:K3}{$\mathsf{K3}$}&galilei&&&$\br{H}{\tP}=\tK$&&\\
\rowcolor{Gray}
\hypertarget{Table:K4}{$\mathsf{K4}$}&&&&$\br{H}{\tP}=\tP$&&\\
\hypertarget{Table:K5}{$\mathsf{K5}$}&euclidean newton&&$\br{H}{\tK}=-\tP$&$\br{H}{\tP}=\tK$&&\\
\rowcolor{Gray}
\hypertarget{Table:K6}{$\mathsf{K6}$}&lorentzian newton&&$\br{H}{\tK}=\tP$&$\br{H}{\tP}=\tK$&&\\
\hypertarget{Table:K7}{$\mathsf{K7}_{\alpha_+}$}&$\alpha_+>0$&&$\br{H}{\tK}=-\tP$&$\br{H}{\tP}=\tK+\alpha_+\tP$&&\\
\rowcolor{Gray}
\hypertarget{Table:K8}{$\mathsf{K8}_{\alpha_-}$}&$\alpha_->0$&&$\br{H}{\tK}=\tP$&$\br{H}{\tP}=\tK+\alpha_-\tP$&&\\
\hypertarget{Table:K9}{$\mathsf{K9}$}&carroll&$\br{\tK}{\tP}=H$&&&&\\
\rowcolor{Gray}
\hypertarget{Table:K10}{$\mathsf{K10}$}&$\mathfrak{iso}(d+1)$&$\br{\tK}{\tP}=H$&&$\br{H}{\tP}=-\tK$&$\br{\tP}{\tP}=-\tJ$&\\
\hypertarget{Table:K11}{$\mathsf{K11}$}&$\mathfrak{iso}(d,1)$&$\br{\tK}{\tP}=H$&&$\br{H}{\tP}=\tK$&$\br{\tP}{\tP}=\tJ$&\\
\rowcolor{Gray}
\hypertarget{Table:K12}{$\mathsf{K12}$}&$\so(d+1,1)$&$\br{\tK}{\tP}=H$&$\br{H}{\tK}=\tP$&$\br{H}{\tP}=\tK$&$\br{\tP}{\tP}=\tJ$&$\br{\tK}{\tK}=-\tJ$\\
\hypertarget{Table:K13}{$\mathsf{K13}$}&$\so(d+2)$&$\br{\tK}{\tP}=H$&$\br{H}{\tK}=\tP$&$\br{H}{\tP}=-\tK$&$\br{\tP}{\tP}=-\tJ$&$\br{\tK}{\tK}=-\tJ$\\
\rowcolor{Gray}
\hypertarget{Table:K14}{$\mathsf{K14}$}&$\so(d,2)$&$\br{\tK}{\tP}=H$&$\br{H}{\tK}=-\tP$&$\br{H}{\tP}=\tK$&$\br{\tP}{\tP}=\tJ$&$\br{\tK}{\tK}=\tJ$
  \end{tabular}
   }
        \caption{Universal kinematical algebras}
      \label{Table:Universal kinematical algebras}
\end{table}

\paragraph{Kinematical Klein pairs}
Our interest for kinematical algebras lies more precisely in geometrical incarnations thereof as homogeneous spacetimes. Infinitesimally, such homogeneous spacetimes are described by kinematical algebras endowed with an additional structure---namely, a particular subalgebra---thus forming an \emph{infinitesimal Klein pair}.
\begin{Definition}[Infinitesimal Klein pair]
\label{Definition:Infinitesimal Klein pair}
An {infinitesimal Klein pair} (hereafter Klein pair for short) is a pair $(\alg,\alh)$ where $\alg$ is a Lie algebra and $\alh\subset\alg$ a Lie subalgebra.
\bi
\item A Klein pair  is said \emph{reductive} if $\alg$ admits a canonical decomposition $\alg=\alh\oplus \alp$ as $\alh$-modules \ie $\br{\alh}{\alp}\subset\alp$.
\item A reductive Klein pair will be said to be \emph{symmetric} (resp. \emph{flat}) if $\br{\alp}{\alp}\subseteq\alh$ (resp. if $\br{\alp}{\alp}\subseteq\alp$).
\item The \emph{ideal core} $\mathfrak{n}$ of a Klein pair is the largest ideal of $\alg$ contained in $\alh$.
\item A Klein pair is said \emph{effective} if it has zero ideal core \ie $\alh$ does not contain any non-trivial ideal of $\alg$.
\item Let $(\alg,\alh)$ be a Klein pair with ideal core $\mathfrak{n}$. The Klein pair $(\alg/\aln,\alh/\aln)$ is effective and called the \emph{associated effective} Klein pair. 
\item A morphism of Klein pairs  $\varphi\From(\alg_1,\alh_1)\to(\alg_2,\alh_2)$ is a homomorphism of Lie algebras $\varphi\From\alg_1\to\alg_2$ such that $\varphi(\alh_1)\subseteq\alh_2$.
\item Two Klein pairs $(\alg_1,\alh_1)$ and $(\alg_2,\alh_2)$ are isomorphic if there is an isomorphism of Lie algebras $\varphi\From\alg_1\to\alg_2$ such that $\varphi(\alh_1)=\alh_2$. 
\ei
\end{Definition}

Let $(\alg,\alh)$ be a Klein pair. There is a short exact sequence of vector spaces\footnote{We do not assume that $\alh$ is an ideal in $\alg$, so that $\gh$ is not assumed to be a Lie algebra and the sequence is not a sequence of Lie algebras {\it a priori}.}:
\[
\begin{tikzcd}
\label{equation:SHS}
0 \arrow[r] & \alh  \arrow[r, hook, "i"] & \alg \arrow[r, two heads,"\pi"]        & \alg/\alh\arrow[r]  & 0
\end{tikzcd}
\]
where $i\From\alh\hookrightarrow\alg$ is the natural embedding of Lie algebras and $\pi\From\alg\twoheadrightarrow\alg/\alh$ the projection map on the quotient of $\alg$ by $\alh$. The adjoint action of $\alh$ on $\alg$ projects on $\alg/\alh$ so that one can define an action $\ad_\alh\From\alh\to\End(\gh)$.

We will now consider $\ad_\alh$-invariant structures defined on $\gh$\footnote{According to the holonomy principle of Cartan geometries  \cite{Sharpe1997}, $\ad_\alh$-invariant tensors on $\gh$ are associated with (canonical or gauge-invariant) tensors on the base manifold of Cartan geometries modelled on the Klein pair $(\alg,\alh)$. Furthermore, whenever the Klein geometry is reductive \UnskipRef{Definition:Infinitesimal Klein pair}, such canonical tensors are parallelised by the canonical Koszul connection induced on the base manifold by the Ehresmann connection obtained from the part of the reductive Cartan connection taking values in $\alh$.} as a means to partition Klein pairs.

\begin{Definition}[Kinematical metric structures on Klein pairs]
\label{Definition:Metric structures on Klein pairs}
A Klein pair $(\alg,\alh)$ will be said:
\begin{description}
\item[galilean] if $\gh$ is endowed with a pair $(\bs{\psi},\bs{h})$ where:
\be
\item $\bs{{\psi}}\in(\gh)^*$ is a non-zero $\ad_\alh$-invariant linear form
\item $\bs{h}\in(\gh)^{\otimes2}$ is a symmetric, positive semi-definite and $\ad_\alh$-invariant bilinear form whose radical is spanned by $\bs{\psi}$ \ie $\bs{h}(\alpha,\beta)=0$ for all $\beta\in(\gh) ^*$ $\Leftrightarrow$ $\alpha\sim\bs{\psi}$
\ee
\item[carrollian] if $\gh$ is endowed with a pair $(\bs{\xi},\bs{\gamma})$ where:
\be
\item $\bs{\xi}\in\gh$ is a non-zero $\ad_\alh$-invariant vector
\item $\bs{\gamma}\in(\gh^*){}^{\otimes2}$ is a symmetric, positive semi-definite and $\ad_\alh$-invariant bilinear form whose radical is spanned by $\bs{\xi}$ \ie $\bs{\gamma}(X,Y)=0$ for all $Y\in\gh$ $\Leftrightarrow$ $X\sim\bs{\xi}$
\ee
\item[riemannian] if $\gh$ is endowed with a symmetric, positive-definite and $\ad_\alh$-invariant bilinear form $\bs{g}\in(\gh^*){}^{\otimes2}$
\item[lorentzian] if $\gh$ is endowed with a symmetric, non-degenerate and $\ad_\alh$-invariant bilinear form $\bs{g}\in(\gh^*){}^{\otimes2}$ of signature $(-1,\underbrace{+1,\ldots,+1}_d)$ with $\dim\alg-\dim\alh=d+1$
\end{description}
\end{Definition}

\begin{Remark}
\label{Remark:galilean metric2}
It is sometimes useful to equivalently define the galilean metric structure as a pair $(\bs{\psi},\bs{\gamma})$ where $\bs{\gamma}\in(\Ker\bs{\psi})^*{}^{\otimes2}$ is a symmetric, positive-definite and $\ad_\alh$-invariant bilinear form on the kernel of the $\ad_\alh$-invariant linear form $\bs{\psi}$.
\end{Remark}

\begin{Remark}
Note that the epithets galilean,  carrollian, riemannian and lorentzian adopted in Definition \ref{Definition:Metric structures on Klein pairs} only make sense for Klein pairs, as the corresponding metric structures live on the quotient space $\gh$. This is in contradistinction with the definitions adopted in \cite{FigueroaOFarrill2022f} which hold for kinematical algebras, as the corresponding metric structures live on the full space $\alg$ rather that on a quotient thereof.
\end{Remark}

A {Klein pair} $(\alg,\alh)$ such that the Lie algebra $\alg$ is a kinematical algebra and the subalgebra $\alh$ admits the $\so(d)$-decomposition $\alh=\tK\oplus\tJ$ is called a \emph{kinematical Klein pair}.\footnote{Note that the definition of kinematical Klein pair adopted here \big[following \cite{Figueroa-OFarrill2018}\big] is less constraining that the one elicited in the original work \cite{Bacry1968} which furthermore required the following endomorphisms of vector spaces:
\bi
\item \emph{time-reversal} $\Theta \From\alg\to\alg$
\[
 \begin{tikzcd}[row sep = small,column sep = small,ampersand replacement=\&]
\Theta :\& H\mapsto -H\& \tP\mapsto \tP\& \tK\mapsto -\tK\& \tJ\mapsto \tJ
\end{tikzcd}
\]
\item \emph{space-reversal} $\Pi\From\alg\to\alg$
\[
 \begin{tikzcd}[row sep = small,column sep = small,ampersand replacement=\&]
\Pi:\& H\mapsto H\& \tP\mapsto -\tP\& \tK\mapsto -\tK\& \tJ\mapsto \tJ
\end{tikzcd}
\]
\ei 
to be Lie algebra automorphisms of $\alg$. In more modern terms, these two conditions are equivalent for the Klein pair $(\alg,\alh)$ to be both reductive and symmetric \UnskipRef{Definition:Infinitesimal Klein pair}. We will consistently relax these two conditions throughout this work.} An isomorphism of kinematical Klein pairs is an isomorphism of Klein pairs that preserves the $\so(d)$-decomposition. A Klein pair $(\alg,\alh)$ is called \emph{universal} whenever both $\alg$ and $\alh$ are universal algebras.

Having geometric realisations in mind, we are more particularly interested in \textit{effective}\footnote{\label{footnote:effectiveness}It can be checked that a kinematical Klein pair is effective if and only if the generators $\tK$ do not span an ideal of $\alg$ [since the canonical relation $\br{\tJ}{\tP}=\tP$ prevents $\tJ$ to belong to an ideal contained in $\alh$]. The possible obstructions for $\tK$ to span an ideal of $\alg$ take the form: 
\[
 \begin{tikzcd}[row sep = small,column sep = small,ampersand replacement=\&]
\br{\tK}{\tK}\sim \tJ \& \br{\tK}{H}\sim \tP\&\br{\tK}{\tP}\sim H\&\br{\tK}{\tP}\sim \tJ\, .
\end{tikzcd}
\]
Letting $\alp$ be a supplementary of $\alh$ [\ie $\alg=\alh\oplus\alp$ as a vector space], the set of commutators $\br{\alh}{\alp}\subset\alp$ determine what (possibly degenerate) metric structure on $\alp$ is $\ad_\alh$-invariant. Hence, different choices of obstruction yield different $\ad_\alh$-invariant metric structures \UnskipRef{Definition:Metric structures on Klein pairs}:
\bi
\item (pseudo)-riemannian: $\br{\tK}{H}\sim \tP$ \qq $\br{\tK}{\tP}\sim H$
\item galilean:  $\br{\tK}{H}\sim \tP$
\item carrollian: $\br{\tK}{\tP}\sim H$.
\ei
} kinematical Klein pairs \cite[Appendix B]{Figueroa-OFarrill2018}. Universal effective kinematical Klein pairs have been classified in \cite[Table 1]{Figueroa-OFarrill2018}, classification that we reproduce\footnote{Our choice of parameterisation for galilean Klein pairs differs from the one used in \cite[Table 1]{Figueroa-OFarrill2018}. The Moabite Stone between the two characterisations can be found below: 
\medskip

\begin{center}
\begin{tabular}{l|l}
Table 1 of \cite{Figueroa-OFarrill2018}&Table \ref{Table:Effective kinematical Klein pairs}\\\hline
galilean $\mathsf{G}$& \hyperlink{Table:E1}{$\mathsf{E1}$}\\
galilean de sitter $\mathsf{dSG} = \mathsf{dSG}_{\gamma=-1}$\cellcolor{Gray}&\cellcolor{Gray}\hyperlink{Table:E4}{$\mathsf{E4}$}\\
torsional galilean de sitter $\mathsf{dSG}_{\gamma\in(-1,0)}$&\hyperlink{Table:E6}{$\mathsf{E6}_{\alpha_-}$} with $\alpha_->0$ where $\alpha_-=\frac{\gamma+1}{\sqrt{-\gamma}}$\\
torsional galilean de sitter $\mathsf{dSG}_{\gamma=0}$\cellcolor{Gray}&\cellcolor{Gray}\hyperlink{Table:E2}{$\mathsf{E2}$}\\
torsional galilean de sitter $\mathsf{dSG}_{\gamma\in(0,1)}$&\hyperlink{Table:E5}{$\mathsf{E5}_{\alpha_+}$} with $\alpha_+>2$ where $\alpha_+=\frac{\gamma+1}{\sqrt{\gamma}}$\\
torsional galilean de sitter $\mathsf{dSG}_{\gamma=1}$\cellcolor{Gray}&\cellcolor{Gray}\hyperlink{Table:E5}{$\mathsf{E5}_{\alpha_+}$} with $\alpha_+=2$\\
galilean anti de sitter $\mathsf{AdSG} = \mathsf{AdSG}_{\chi=0}$&\hyperlink{Table:E3}{$\mathsf{E3}$}\\
torsional galilean anti de sitter $\mathsf{AdSG}_{\chi>0}$\cellcolor{Gray}&\cellcolor{Gray}\hyperlink{Table:E5}{$\mathsf{E5}_{\alpha_+}$} with $0<\alpha_+<2$ where $\alpha_+=\frac{2\chi}{\sqrt{1+\chi^2}}$
  \end{tabular}
  \end{center}
Note that the commutators presented in Table \ref{Table:Effective kinematical Klein pairs} (as well as in all subsequent classifying tables of Klein pairs) are partitioned in columns so that to reflect the vector space decomposition of the Lie algebra $\alg$ as $\alg=\alh\oplus\alp$ [this split being furthermore a decomposition of $\ad_\alh$-modules whenever $\alg$ is reductive, \ie whenever $\br{\alh}{\alp}\subset\alp$]. Although this notation deviates slightly from the standard conventions, it offers the advantage of emphasising the corresponding geometric interpretation. For instance, what would conventionally be denoted as $\br{H}{\tP}=\tK+\alpha_+\tP$ is now split into a \emph{curvature} component $\br{H}{\tP}=\tK$ and a \emph{torsion} component $\br{H}{\tP}=\alpha_+\tP$. We hope that this notation can provide readers with a clearer insight into the underlying geometry and an easier way to visually discriminate between pairs of the same family. 
}
 in Table \ref{Table:Effective kinematical Klein pairs} along the partition according to the metric structure on $\gh$ \UnskipRef{Definition:Metric structures on Klein pairs}.

\begin{table}[ht]
\begin{adjustwidth}{-0.5cm}{}
\resizebox{18cm}{!}{
\begin{tabular}{c|l|l|l|l|ll|ll|l|ll}
\multicolumn{1}{c|}{\textbf{Metric structure}}&\multicolumn{1}{c|}{\textbf{Label}}&\multicolumn{1}{c|}{\textbf{Algebra}}&\multicolumn{1}{c|}{\textbf{Comments}}&\multicolumn{1}{c|}{$\br{\alh}{\alh}\subset\alh$}&\multicolumn{2}{c|}{$\br{\alh}{\alp}\subset\alp$}&\multicolumn{2}{c|}{$\br{\alp}{\alp}\subset\alh$ (Curvature)}&\multicolumn{1}{c|}{$\br{\alp}{\alp}\subset\alp$ (Torsion)}&\multicolumn{2}{c}{$\br{\alh}{\alp}\subset\alh$ (Non-reductivity)}\\\hline
\multirow{6}{*}{\textnormal{galilean}}&\hypertarget{Table:E1}{$\mathsf{E1}$}\label{E1}&\hyperlink{Table:K3}{$\mathsf{K3}$}&{galilei}&&&$\br{\tK}{H}= \tP$&&&&&\\
&\cellcolor{Gray}\hypertarget{Table:E2}{$\mathsf{E2}$}&\hyperlink{Table:K4}{$\mathsf{K4}$}\cellcolor{Gray}&\cellcolor{Gray}&\cellcolor{Gray}&\cellcolor{Gray}&\cellcolor{Gray}$\br{\tK}{H}= \tP$&\cellcolor{Gray}&\cellcolor{Gray}&$\br{H}{\tP}= \tP$\cellcolor{Gray}&\cellcolor{Gray}&\cellcolor{Gray}\\
&\hypertarget{Table:E3}{$\mathsf{E3}$}&\hyperlink{Table:K5}{$\mathsf{K5}$}&euclidean newton&&&$\br{\tK}{H}= \tP$&$\br{H}{\tP}= \tK$&&&&\\
&\cellcolor{Gray}\hypertarget{Table:E4}{$\mathsf{E4}$}&\cellcolor{Gray}\hyperlink{Table:K6}{$\mathsf{K6}$}&\cellcolor{Gray}lorentzian newton&\cellcolor{Gray}&\cellcolor{Gray}&\cellcolor{Gray}$\br{\tK}{H}= \tP$&\cellcolor{Gray}$\br{H}{\tP}= -\tK$&\cellcolor{Gray}&\cellcolor{Gray}&\cellcolor{Gray}&\cellcolor{Gray}\\
&\hypertarget{Table:E5}{$\mathsf{E5}_{\alpha_+}$}&\hyperlink{Table:K7}{$\mathsf{K7}_{\alpha_+}$}&$\alpha_+>0$&&&$\br{\tK}{H}= \tP$&$\br{H}{\tP}= \tK$&&$\br{H}{\tP}= \alpha_+\tP$&&\\
&\cellcolor{Gray}\hypertarget{Table:E6}{$\mathsf{E6}_{\alpha_-}$}&\cellcolor{Gray}\hyperlink{Table:K8}{$\mathsf{K8}_{\alpha_-}$}&\cellcolor{Gray}$\alpha_->0$&\cellcolor{Gray}&\cellcolor{Gray}&\cellcolor{Gray}$\br{\tK}{H}= \tP$&\cellcolor{Gray}$\br{H}{\tP}= -\tK$&\cellcolor{Gray}&$\br{H}{\tP}= \alpha_-\tP$\cellcolor{Gray}&\cellcolor{Gray}&\cellcolor{Gray}\\
\hline
\multirow{4}{*}{\textnormal{carrollian}}&\hypertarget{Table:E7}{$\mathsf{E7}$}&\hyperlink{Table:K9}{$\mathsf{K9}$}&{carroll}&&$\br{\tK}{\tP}= H$&&&&&&\\
&\cellcolor{Gray}\hypertarget{Table:E8}{$\mathsf{E8}$}&\cellcolor{Gray}\hyperlink{Table:K10}{$\mathsf{K10}$}&\cellcolor{Gray}{ds carroll}&\cellcolor{Gray}&\cellcolor{Gray}$\br{\tK}{\tP}= H$&\cellcolor{Gray}&\cellcolor{Gray}$\br{H}{\tP}= -\tK$&$\br{\tP}{\tP}=-\tJ$\cellcolor{Gray}&\cellcolor{Gray}&\cellcolor{Gray}&\cellcolor{Gray}\\
&\hypertarget{Table:E9}{$\mathsf{E9}$}&\hyperlink{Table:K11}{$\mathsf{K11}$}&{ads carroll}&&$\br{\tK}{\tP}= H$&&$\br{H}{\tP}= \tK$&$\br{\tP}{\tP}=\tJ$&&&\\
&\cellcolor{Gray}\hypertarget{Table:E10}{$\mathsf{E10}$}&\cellcolor{Gray}\hyperlink{Table:K12}{$\mathsf{K12}$}&\cellcolor{Gray}{carroll light cone}&\cellcolor{Gray}&\cellcolor{Gray}$\br{\tK}{\tP}= H$&\cellcolor{Gray}&\cellcolor{Gray}&\cellcolor{Gray}&$\br{H}{\tP}= -\tP$\cellcolor{Gray}&\cellcolor{Gray}$\br{\tK}{\tP}= \tJ$&\cellcolor{Gray}$\br{\tK}{H}= -\tK$\\
\hline
&\hypertarget{Table:E11}{$\mathsf{E11}$}&\hyperlink{Table:K10}{$\mathsf{K10}$}&{euclidean}&$\br{\tK}{\tK}=-\tJ$&$\br{\tK}{\tP}= H$&$\br{\tK}{H}= -\tP$&&&&&\\
\textnormal{riemannian}&\cellcolor{Gray}\hypertarget{Table:E12}{$\mathsf{E12}$}&\cellcolor{Gray}\hyperlink{Table:K13}{$\mathsf{K13}$}&\cellcolor{Gray}{sphere}&\cellcolor{Gray}$\br{\tK}{\tK}=-\tJ$&\cellcolor{Gray}$\br{\tK}{\tP}= H$&\cellcolor{Gray}$\br{\tK}{H}= -\tP$&\cellcolor{Gray}$\br{H}{\tP}=-\tK$&$\br{\tP}{\tP}=-\tJ$\cellcolor{Gray}&\cellcolor{Gray}&\cellcolor{Gray}&\cellcolor{Gray}\\
&\hypertarget{Table:E13}{$\mathsf{E13}$}&\hyperlink{Table:K12}{$\mathsf{K12}$}&{hyperbolic}&$\br{\tK}{\tK}=-\tJ$&$\br{\tK}{\tP}= H$&$\br{\tK}{H}= -\tP$&$\br{H}{\tP}=\tK$&$\br{\tP}{\tP}=\tJ$&&&\\
\hline
&\cellcolor{Gray}\hypertarget{Table:E14}{$\mathsf{E14}$}&\cellcolor{Gray}\hyperlink{Table:K11}{$\mathsf{K11}$}&\cellcolor{Gray}{minkowski}&\cellcolor{Gray}$\br{\tK}{\tK}=\tJ$&\cellcolor{Gray}$\br{\tK}{\tP}= H$&\cellcolor{Gray}$\br{\tK}{H}= \tP$&\cellcolor{Gray}&\cellcolor{Gray}&\cellcolor{Gray}&\cellcolor{Gray}&\cellcolor{Gray}\\
\textnormal{lorentzian}&\hypertarget{Table:E15}{$\mathsf{E15}$}&\hyperlink{Table:K12}{$\mathsf{K12}$}&{de sitter}&$\br{\tK}{\tK}=\tJ$&$\br{\tK}{\tP}= H$&$\br{\tK}{H}= \tP$&$\br{H}{\tP}=-\tK$&$\br{\tP}{\tP}=-\tJ$&&&\\
&\cellcolor{Gray}\hypertarget{Table:E16}{$\mathsf{E16}$}&\cellcolor{Gray}\hyperlink{Table:K14}{$\mathsf{K14}$}&\cellcolor{Gray}{anti de sitter}&\cellcolor{Gray}$\br{\tK}{\tK}=\tJ$&\cellcolor{Gray}$\br{\tK}{\tP}= H$&\cellcolor{Gray}$\br{\tK}{H}= \tP$&\cellcolor{Gray}$\br{H}{\tP}=\tK$&$\br{\tP}{\tP}=\tJ$\cellcolor{Gray}&\cellcolor{Gray}&\cellcolor{Gray}&\cellcolor{Gray}
  \end{tabular}
  }
      \end{adjustwidth}
      \caption{Universal effective kinematical Klein pairs}
      \label{Table:Effective kinematical Klein pairs}
\end{table}

\begin{Remark}
To illustrate the difference between kinematical algebras and Klein pairs, we note that, while the kinematical algebras underlying \eg the minkowski and ads carroll Klein pairs are isomorphic \textit{as kinematical algebras}, they are not isomorphic as \textit{kinematical Klein pairs} since the isomorphism of kinematical algebras:
\[
 \begin{tikzcd}[row sep = small,column sep = small,ampersand replacement=\&]
H\mapsto -H\& \tP\mapsto \tK\& \tK\mapsto \tP\& \tJ\mapsto \tJ
\end{tikzcd}
\]
exchanges $\tP\leftrightarrow \tK$ [see  Remark \ref{Remark:indiscernable} above] and as such fails to preserve the subalgebra $\alh=\tK\oplus\tJ$, hence the need to distinguish those two Klein pairs in Table \ref{Table:Effective kinematical Klein pairs}.
\end{Remark}

Letting $\alg=H\oplus\tP\oplus\tK\oplus\tJ$ be a basis of $\alg$ with dual basis $\alg^*=H^*\oplus\tP^*\oplus\tK^*\oplus\tJ^*$, representatives for the  $\ad_\alh$-invariant metric structures associated with the kinematical Klein pairs of Table \ref{Table:Effective kinematical Klein pairs} are compiled in Table \ref{Table:Metric structures on kinematical Klein pairs} \big[see \eg \cite{Figueroa-OFarrill2018,Figueroa-OFarrill2019}\big]. 

\begin{table}[ht]
\centering
\resizebox{7cm}{!}{
\begin{tabular}{l|l}
\textbf{Metric structure}&\textbf{Invariant tensors}\\\hline
{galilean}&$\bs{\psi}=-H^*$\qq$\bs{h}=\tP\otimes \tP$\\
{carrollian}\cellcolor{Gray}&\cellcolor{Gray}$\bs{\xi}=H$\ \qq\ $\bs{\gamma}=\tP^*\otimes \tP^*$\\
{riemannian}&$\bs{g}=H^*\otimes H^*+\tP^*\otimes \tP^*$\\
{lorentzian}\cellcolor{Gray}&\cellcolor{Gray}$\bs{g}=-H^*\otimes H^*+\tP^*\otimes \tP^*$
  \end{tabular}
  }
      \caption{Metric structure on kinematical Klein pairs}
      \label{Table:Metric structures on kinematical Klein pairs}
\end{table}

\section{Curiouser and curiouser: aristotelian algebras}
\label{section:Curiouser and curiouser: aristotelian algebras}

Another algebraic structure of interest is given by so-called \emph{aristotelian algebras} or boostless kinematical algebras \cite{Penrose:1968ar,Horava2016,Grosvenor2016,Figueroa-OFarrill2018} which have recently known a surge of interest as curved backgrounds of `boostless hydrodynamics' and fractonic physical systems [\cf \cite{Bidussi:2021nmp,Jain:2021ibh,Glodkowski:2022xje,Armas:2023ouk,Jain:2023nbf,Marotta:2023ayw} and references therein]. 

\begin{Definition}[Aristotelian algebra]\label{Definition:aristotelian algebra}An aristotelian algebra (or boostless kinematical algebra) with $d$-dimensional spatial isotropy is a Lie algebra $\alg$ containing a $\so(d)$-subalgebra and admitting the following decomposition of $\so(d)$-modules:
\vspace{-10mm}

\bea
\alg=H\oplus\tP\oplus\tJ\nn
\eea
\vspace{-6mm}

where the generators are in the same $\so(d)$ representation as in \Defi{Definition:Kinematical algebra}. 
\end{Definition}

Note that the relevant notion of \emph{aristotelian Klein pair} is given by a pair $(\alg,\alh)$ where $\alh$ is nothing but the $\so(d)$-subalgebra spanned by $\tJ$. Since the latter is a canonical part of the structure, aristotelian Klein pairs are in bijective correspondence with aristotelian algebras. Furthermore, all aristotelian Klein pairs are trivially effective since the imposed commutator $\br{\tJ}{\tP}\sim\tP$ prevents $\tJ$ to span an ideal of $\alg$. Universal aristotelian algebras have been classified as follows \cite[Table 17]{Figueroa-OFarrill2018}:
\begin{table}[ht]
\centering
\resizebox{6cm}{!}{
\begin{tabular}{l|l|l}
\textbf{Klein pair}&$\br{\alp}{\alp}\subset\alh$&$\br{\alp}{\alp}\subset\alp$\\\hline
$\mathfrak{iso}(d)\oplus\mathbb R$&&\\
$\mathfrak{iso}(d)\mirroredinplus\mathbb R$\cellcolor{Gray}&\cellcolor{Gray}&$\br{H}{\tP}=\tP$\cellcolor{Gray}\\
$\so(d,1)\oplus\mathbb R$&$\br{\tP}{\tP}=\tJ$&\\
$\so(d+1)\oplus\mathbb R$\cellcolor{Gray}&\cellcolor{Gray}$\br{\tP}{\tP}=-\tJ$&\cellcolor{Gray}
  \end{tabular}
  }
           \caption{Universal aristotelian algebras}
      \label{Table:Universal aristotelian algebras}
\end{table}

Compared with kinematical Klein pairs, the loss of the boost generators $\tK$ is concomitant with the addition of an extra $\ad_\alh$-invariant structure living on the quotient $\gh$.
\begin{Definition}[Aristotelian metric structure on Klein pairs]
\label{Definition:aristotelian metric structures on Klein pairs}
A Klein pair $(\alg,\alh)$ will be said to be endowed with an aristotelian metric structure if the quotient space $\gh$ is endowed with any of the following equivalent structures:
\begin{description}
\item[galilean-like] a galilean structure $(\bs{\psi},\bs{\gamma})$ and an $\ad_\alh$-invariant vector $\bs{N}\in\gh$ satisfying $\bs{\psi(\bs{N}})=1$. 
\item[carrollian-like] a carrollian structure $(\bs{\xi},\bs{\gamma})$ and an $\ad_\alh$-invariant linear form $\bs{A}\in(\gh)^*$ satisfying $\bs{A(\bs{\xi}})=1$. 
\item[riemannian-like] a riemannian metric structure $\bs{g}$ and an $\ad_\alh$-invariant vector $\bs{N}\in\gh$ satisfying $\bs{g}(\bs{N},\bs{N})=1$. 
\item[lorentzian-like] a lorentzian metric structure $\bs{g}$ and an $\ad_\alh$-invariant vector $\bs{N}\in\gh$ satisfying $\bs{g}(\bs{N},\bs{N})=-1$. 
\end{description}
\end{Definition}

\begin{Proposition}
\label{Proposition:Equivalence definitions aristotelian metric structures}
The above definitions are equivalent.
\end{Proposition}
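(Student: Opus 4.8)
The plan is to prove the equivalence by showing that each of the four structures in Definition~\ref{Definition:aristotelian metric structures on Klein pairs} is equivalent to the \textbf{galilean-like} one, via explicit, $\ad_\alh$-equivariant, mutually inverse passages. The single mechanism behind all of them is the following remark: an $\ad_\alh$-invariant vector $\bs{N}\in\gh$ together with an $\ad_\alh$-invariant linear form $\bs{\psi}\in(\gh)^*$ subject to $\bs{\psi}(\bs{N})=1$ determines an $\ad_\alh$-invariant direct sum decomposition $\gh=\langle\bs{N}\rangle\oplus\Ker\bs{\psi}$, the projections $X\mapsto\bs{\psi}(X)\,\bs{N}$ and $X\mapsto X-\bs{\psi}(X)\,\bs{N}$ being $\ad_\alh$-equivariant. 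Since $\bs{\psi}\neq0$, the complement $\Ker\bs{\psi}$ has dimension $\dim\gh-1=d$. This splitting is what allows one to transport metric data between the $d$-dimensional subspace $\Ker\bs{\psi}$ (where positive-definite forms naturally live) and the full $(d+1)$-dimensional space $\gh$ (which carries either a degenerate rank-$d$ tensor or a non-degenerate form of definite or lorentzian signature).

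For \textbf{galilean-like} $\Leftrightarrow$ \textbf{riemannian-like}: from a riemannian structure $\bs{g}$ together with $\bs{N}$ satisfying $\bs{g}(\bs{N},\bs{N})=1$, set $\bs{\psi}:=\bs{g}(\bs{N},-)$ and $\bs{\gamma}:=\bs{g}|_{\Ker\bs{\psi}}$; invariance of $\bs{\psi}$ and $\bs{\gamma}$ follows from that of $\bs{g}$ and $\bs{N}$, the normalisation $\bs{\psi}(\bs{N})=1$ is immediate, and $\bs{\gamma}$ is positive-definite as the restriction of a positive-definite form to a subspace, so $(\bs{\psi},\bs{\gamma})$ is a galilean structure in the sense of Remark~\ref{Remark:galilean metric2} with invariant vector $\bs{N}$. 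Conversely, from galilean-like data $(\bs{\psi},\bs{\gamma},\bs{N})$ one reconstructs $\bs{g}$ through the splitting by declaring $\bs{g}(\bs{N},\bs{N})=1$, $\bs{g}(\bs{N},-)|_{\Ker\bs{\psi}}=0$ and $\bs{g}|_{\Ker\bs{\psi}}=\bs{\gamma}$; it is $\ad_\alh$-invariant (built from invariant data along an invariant splitting) and positive-definite (an orthogonal sum of two positive-definite pieces). These passages are mutually inverse because $\Ker\!\big(\bs{g}(\bs{N},-)\big)$ is exactly the $\bs{g}$-orthogonal complement of $\bs{N}$. The \textbf{galilean-like} $\Leftrightarrow$ \textbf{lorentzian-like} case is identical up to the sign of the normalisation: with $\bs{g}(\bs{N},\bs{N})=-1$ the line $\langle\bs{N}\rangle$ is a maximal negative subspace, hence $\Ker\bs{\psi}$ is positive-definite of dimension $d$, and the reconstruction glues $-\bs{\psi}\otimes\bs{\psi}$ on $\langle\bs{N}\rangle$ to $\bs{\gamma}$ on $\Ker\bs{\psi}$, producing a non-degenerate invariant form of signature $(-1,+1,\dots,+1)$ with $\dim\gh-\dim\alh=d+1$, as required.

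For \textbf{galilean-like} $\Leftrightarrow$ \textbf{carrollian-like} one exploits the exchange of roles between vector and covector. From a carrollian structure $(\bs{\xi},\bs{\gamma})$ together with $\bs{A}$ satisfying $\bs{A}(\bs{\xi})=1$, put $\bs{N}:=\bs{\xi}$, $\bs{\psi}:=\bs{A}$ and $\bs{\gamma}':=\bs{\gamma}|_{\Ker\bs{A}}$; since the radical of $\bs{\gamma}$ is $\langle\bs{\xi}\rangle$ and $\bs{\xi}\notin\Ker\bs{A}$, the subspace $\Ker\bs{A}$ meets the radical trivially, so $\bs{\gamma}'$ is non-degenerate, hence positive-definite (being positive semi-definite) and $\ad_\alh$-invariant on $\Ker\bs{\psi}=\Ker\bs{A}$, giving a galilean-like structure. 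Conversely, from $(\bs{\psi},\bs{\gamma},\bs{N})$ set $\bs{\xi}:=\bs{N}$, $\bs{A}:=\bs{\psi}$ and define $\bs{\gamma}''\in(\gh^*){}^{\otimes2}$ through the splitting by $\bs{\gamma}''(\bs{N},-)=0$ and $\bs{\gamma}''|_{\Ker\bs{\psi}}=\bs{\gamma}$; then $\bs{\gamma}''$ is symmetric, positive semi-definite, $\ad_\alh$-invariant, of rank $d$, with radical exactly $\langle\bs{N}\rangle=\langle\bs{\xi}\rangle$, so $(\bs{\xi},\bs{\gamma}'')$ is a carrollian structure and $\bs{A}(\bs{\xi})=1$. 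Again the passages are mutually inverse. Chaining the three equivalences shows the four definitions are pairwise equivalent. (One may note in passing that, in generic spatial dimension, the space of $\ad_\alh$-invariant linear forms on $\gh$ is one-dimensional, so the normalising covector is in fact uniquely fixed by the invariant vector; this is not needed for the argument.)

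The only genuinely delicate part is the bookkeeping of the rank, non-degeneracy and signature conditions when moving between the $d$-dimensional complement carrying the positive-definite $\bs{\gamma}$ and the full $(d+1)$-dimensional $\gh$, together with checking that the forward and backward constructions compose to the identity in each case. I expect the carrollian $\Leftrightarrow$ galilean step to require the most care, since there the degenerate tensor is a \emph{covariant} bilinear form on $\gh$ with a prescribed radical, so one must verify simultaneously that restricting to $\Ker\bs{A}$ removes the radical and preserves positive-definiteness; the riemannian and lorentzian cases, by contrast, are routine once the invariant splitting is in place.
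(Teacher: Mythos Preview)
Your proof is correct and follows essentially the same approach as the paper: both use the $\ad_\alh$-invariant splitting $\gh=\langle\bs{N}\rangle\oplus\Ker\bs{\psi}$ to transport metric data, with the paper's extended bilinear form $\overset{\bs{N}}{\bs{\gamma}}(x,y)=\bs{\gamma}\big(x-\bs{\psi}(x)\bs{N},\,y-\bs{\psi}(y)\bs{N}\big)$ being exactly your $\bs{\gamma}''$, and its formulas $\bs{g}=\overset{\bs{N}}{\bs{\gamma}}\pm\bs{\psi}\otimes\bs{\psi}$ coinciding with your gluing reconstructions. If anything you are slightly more thorough, since you verify that the passages are mutually inverse and spell out the rank/signature checks, whereas the paper only exhibits the constructions in each direction.
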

\begin{proof}
Let $(\bs{\psi},\bs{\gamma},\bs{N})$ be an aristotelian metric structure \ala galilei \UnskipRefs{Remark:galilean metric2}{Definition:aristotelian metric structures on Klein pairs} and define the symmetric, positive semi-definite and $\ad_\alh$-invariant bilinear form $\overset{\bs{N}}{\bs{\gamma}}\in(\gh^*){}^{\otimes2}$ as $\overset{\bs{N}}{\bs{\gamma}}(x,y)={\bs{\gamma}}\big(\overset{\bs{N}}{P}(x),\overset{\bs{N}}{P}(y)\big)$ where $\overset{\bs{N}}{P}\From\gh\to\Ker\bs{\psi}$ is defined as $\overset{\bs{N}}{P}(x)=x-\bs{\psi}(x)\cdot\boldsymbol N$, for all $x,y\in\gh$.

Starting from this data, on can define an aristotelian metric structure \ala:
\begin{description}
\item[carroll] by setting $\bs{\gamma}=\overset{\bs{N}}{\bs{\gamma}}$ and $\bs{\xi}=\bs{N}$
\item[riemann] by setting 
$\bs{g}=\overset{\bs{N}}{\bs{\gamma}}+\bs{\psi}\otimes\bs{\psi}$
\item[lorentz] by setting 
$\bs{g}=\overset{\bs{N}}{\bs{\gamma}}-\bs{\psi}\otimes\bs{\psi}$.
\end{description}
Alternatively, starting from an aristotelian metric structure \ala lorentz $(\bs{g},\bs{N})$, one can define a galilean structure $(\bs{\psi},\bs{\gamma})$ as $\bs{\psi}=-\bs{g}(\bs{N},-)$ and $\bs{\gamma}=\bs{g}\big|_{\Ker\bs{\psi}}$ and similarly for the riemannian case [with $\bs{\psi}=\bs{g}(\bs{N},-)$].

Finally, starting from an aristotelian metric structure \ala carroll $(\bs{\xi},\bs{\gamma}_C,\bs{A})$, one can define an aristotelian metric structure \ala galilei $(\bs{\psi},\bs{\gamma}_G,\bs{N})$ as $\bs{\psi}=\bs{A}$, $\bs{\gamma}_G=\bs{\gamma}_C\big|_{\Ker\bs{\psi}}$ and $\bs{N}=\bs{\xi}$. 
\end{proof}

The aristotelian Klein pairs associated with the algebras of Table \ref{Table:Universal aristotelian algebras} are all endowed with an aristotelian metric structure, whose different presentations are given in Table \ref{Table:Metric structure on aristotelian Klein pairs}.

\begin{table}[ht]
\centering
 \resizebox{8.5cm}{!}{
\begin{tabular}{c|c}
\textbf{Presentation}&\textbf{Invariant tensors}\\\hline
{galilean}&$\bs{\psi}=-H^*$\qq$\bs{h}=\tP\otimes \tP$\qq $\bs{N}=-H$\\
{carrollian}\cellcolor{Gray}&\cellcolor{Gray}$\bs{\xi}=-H$\qq$\bs{\gamma}=\tP^*\otimes \tP^*$\qq $\bs{A}=-H^*$\\
{riemannian}&$\bs{g}=H^*\otimes H^*+\tP^*\otimes \tP^*$\qq $\bs{N}=-H$\\
{lorentzian}\cellcolor{Gray}&\cellcolor{Gray}$\bs{g}=-H^*\otimes H^*+\tP^*\otimes \tP^*$\qq $\bs{N}=-H$
  \end{tabular}
  }
      \caption{Metric structure on aristotelian Klein pairs}
      \label{Table:Metric structure on aristotelian Klein pairs}
\end{table}

\paragraph{Lifshitzian algebras}

An interesting variation on the notion of aristotelian algebras was recently put forward in \cite{FigueroaOFarrill2022b} in the guise of \emph{lifshitzian algebras}, being scalar extensions of aristotelian algebras. 
\begin{Definition}[Lifshitzian algebra]
\label{Definition:lifshitzian algebra}
A lifshitzian algebra with $d$-dimensional spatial isotropy is a Lie algebra $\alg$ containing a $\so(d)$-subalgebra and admitting the following decomposition of $\so(d)$-modules:
\bea
\alg=M\oplus H\oplus\tP\oplus\tJ\nn
\eea
where the generators $M$ and $H$ are in the scalar representation of $\mathfrak{so}(d)$ and $\tP$ is in the vector representation of $\mathfrak{so}(d)$. 
\end{Definition}
\begin{Remark}
Similarly to the kinematical algebra case \UnskipRef{Remark:indiscernable}, the $\so(d)$-decomposition of lifshitzian algebras does not distinguish between the two scalar modules, so that isomorphisms of lifshitzian algebras map $M_1\oplus H_1$ to $M_2\oplus H_2$, but not necessarily $M_1$ to $M_2$ (resp. $H_1$ to $H_2$).
\end{Remark}
Lifshitzian algebras have recently been classified in \cite[Table 1]{FigueroaOFarrill2022b}, classification that we reproduce (focusing on universal algebras) in Table \ref{Table:Lifshitzian Klein pairs}.

\begin{table}[ht]
\centering
\resizebox{10.5cm}{!}{
\begin{tabular}{l|l|l|ll}
\multicolumn{1}{c|}{\textbf{Label}}&\multicolumn{1}{c|}{\textbf{Comments}}&\multicolumn{1}{c|}{$\br{\alp}{\alp}\subset\alh$}&\multicolumn{2}{c}{$\br{\alp}{\alp}\subset\alp$}\\\hline
\hypertarget{Table:Li1}{$\mathsf{Li1}$}&$\mathfrak{iso}(d)\oplus\mathbb R^2$&&&\\
\rowcolor{Gray}
\hypertarget{Table:Li2}{$\mathsf{Li2}$}&&&$\br{H}{M}=M$&\\
\hypertarget{Table:Li3}{$\mathsf{Li3}_z$}&$z\in\mathbb R$&&$\br{H}{M}=z\,M$&$\br{H}{\tP}=\tP$\\
\rowcolor{Gray}
\hypertarget{Table:Li4}{$\mathsf{Li4}$}&$\mathfrak{so}(d,1)\oplus\mathbb R^2$&$\br{\tP}{\tP}=\tJ$&&\\
\hypertarget{Table:Li5}{$\mathsf{Li5}$}&$\mathfrak{so}(d+1)\oplus\mathbb R^2$&$\br{\tP}{\tP}=-\tJ$&&\\
\rowcolor{Gray}
\hypertarget{Table:Li6}{$\mathsf{Li6}$}&&$\br{\tP}{\tP}=\tJ$&$\br{H}{M}=M$&\\
\hypertarget{Table:Li7}{$\mathsf{Li7}$}&&$\br{\tP}{\tP}=-\tJ$&$\br{H}{M}=M$&
\end{tabular}
  }
         \caption{Universal lifshitzian algebras}
      \label{Table:Lifshitzian Klein pairs}
\end{table}

The motivating example of lifshitzian algebras is given by the \emph{lifshitz algebra} (denoted \hyperlink{Table:Li3}{$\mathsf{Li3}_z$} in Table \ref{Table:Lifshitzian Klein pairs}) which is characterised by an anisotropic scaling between time and space \cite{Kachru2008,Taylor2015}. More precisely, the generator $H$ acts via dilation on the `time'\footnote{Note that, in  \cite[Table 1]{FigueroaOFarrill2022b}, the `dilation' generator $H$ is denoted $D$ while our `time' generator $M$ is denoted $H$. While this notation is more standard, we nevertheless privilege the present notation, having in mind the ambient approach where $M$ needs to span a scalar ideal of $\alg$ \SectionRef{section:Ambient approach to Klein pairs}. } generator $M$ and `space' generators $\tP$, albeit with a generically different scaling $z\in\mathbb R$. As discussed in \cite{FigueroaOFarrill2022b}, one can distinguish between two classes of Klein pairs $(\alg,\alh)$, where $\alg$ is a lifshitzian algebra, depending on the choice of $\alh$. One possible choice is to pick $\alh$ to be the rotational subalgebra $\so(d)$, such that the corresponding quotient $\alg/\alh$ has dimension $d+2$ and is thus endowed with an ambient interpretation. The alternative choice is to select $\alh=\so(d)\oplus \mathbb R$, so that $H$ is considered to be part of the homogeneous subalgebra and takes the interpretation of an anisotropic dilation operator on a $(d+1)$-dimensional geometry \cite{FigueroaOFarrill2022b}. In the present work, we will---perhaps predictably---focus on the `ambient' interpretation $\alh=\so(d)$. Lifshitzian Klein pairs of this first type are readily classified in Table \ref{Table:Lifshitzian Klein pairs}.

\begin{Definition}[Lifshitzian metric structure on Klein pairs]
\label{Definition:Lifshitzian metric structures on Klein pairs}
A Klein pair $(\alg,\alh)$ will be said to be endowed with a lifshitzian metric structure if $\gh$ is endowed with a tuple $(\bs{\xi},\bs{N}, \bs{\psi},\bs{A},\overset{\bs{N}}{\bs{\gamma}})$ where:
\be
\item $\bs{\xi}, \bs{N}\in\gh$ are non-zero $\ad_\alh$-invariant vectors
\item $\bs{{\psi}}, \bs{{A}}\in(\gh)^*$ are non-zero $\ad_\alh$-invariant linear forms satisfying:
 \[
 \begin{tikzcd}[row sep = small,column sep = small,ampersand replacement=\&]
\bs{{\psi}}(\bs{{\xi}})=0\&\bs{{A}}(\bs{{N}})=0\&\bs{{\psi}}(\bs{{N}})=1\&\bs{{A}}(\bs{{\xi}})=1
\end{tikzcd}
 \]
 \vspace{-8mm}
\item $\overset{\bs{N}}{\bs{\gamma}}\in\big((\gh)^*\big){}^{\otimes2}$ is a symmetric, positive semi-definite and $\ad_\alh$-invariant bilinear form whose radical is spanned by $\bs{\xi}$ and $\bs{N}$.
\ee
\end{Definition}
The above definition is straightforwardly exemplified by the lifshitzian Klein pairs compiled in Table \ref{Table:Lifshitzian Klein pairs}, each of them being endowed with the canonical lifshitzian metric structure defined as:
\[
 \begin{tikzcd}[row sep = small,column sep = small,ampersand replacement=\&]
\bs{\xi}=M\&\bs{N}=-H\&\bs{\psi}=-H^*\&\bs{A}=M^*\&\overset{\bs{N}}{\bs{\gamma}}=\tP^*\otimes \tP^*\, .
\end{tikzcd}
\]
As noted in \cite[Section 5]{FigueroaOFarrill2022b}, the rank $d$ bilinear form $\overset{\bs{N}}{\bs{\gamma}}$ can be used to construct various $\ad_\alh$-invariant bilinear forms of rank $d+2$, being of signature $(0,d+2)$, $(1,d+1)$ or $(2,d)$. Of particular interest is the lorentzian bilinear form $\bs{g}:=\overset{\bs{N}}{\bs{\gamma}}+\bs{\psi}\otimes \bs{A}+\bs{A}\otimes\bs{\psi}$, which [together with $\bs{\xi}$] endows $(\alg,\alh)$ with a bargmannian metric structure \big[\cf Definition \ref{Definition:Ambient aristotelian metric structures on Klein pairs} and Figure \ref{Figure:Hierarchy of ambient geometries}\big]. 
\section{A Tangled Tale: $(s,v)$-Lie algebras}
\label{section:A Tangled Tale: (s,v)-Lie algebras}
Having reviewed the notions of kinematical and aristotelian algebras, we are now interested in the generalisation of these concepts to the ambient setting. 
In order to both motivate our definitions and account for the various Lie algebra classes relevant for this work in a unified manner, we introduce the following terminology:
\begin{Definition}[{\footnotesize(}$s,v${\footnotesize)}-Lie algebra]Let $(s,v)\in\mathbb N^{\times 2}$ be a pair of integers. A $(s,v)$-Lie algebra (with $d$-dimensional spatial isotropy) is a Lie algebra $\alg$ containing a $\so(d)$-subalgebra and admitting the following decomposition of $\so(d)$-modules:
\vspace{-6mm}

\bea
\alg=\bigoplus_{a=1}^{s}\tSs{a}\oplus\bigoplus_{b=1}^v\tVs{b}\oplus\tJ\label{equation:decomposition of (s,v)-Lie algebras}
\eea
\vspace{-8mm}

where:
\be
\item The generators $\tJ$ span the $\so(d)$-subalgebra
 \ie $\br{\tJ}{\tJ}\sim \tJ$.
\item The generators $\tVs{b}$'s are in the vector representation of $\mathfrak{so}(d)$, \ie $\br{\tJ}{\tVs{b}}\sim \tVs{b}$.
\item The generators $\tSs{a}$'s are in the scalar representation of $\mathfrak{so}(d)$, \ie $\br{\tJ}{\tSs{a}}\sim0$.
\ee
\end{Definition}

The sequence $(\tSs{1},\ldots,\tSs{s},\tVs{1},\ldots,\tVs{v},\tJ)$ is called an \emph{adapted basis} of $\alg$. The set of adapted bases is a principal homogeneous space for $\GLR{s}\times\GLR{v}\subset\GLR{\dim \alg}$ where $\dim\alg=s+vd+\frac{d(d-1)}{2}$.

Two $(s,v)$-Lie algebras $\alg_1,\alg_2$ are isomorphic if there is an isomorphism of Lie algebras $\varphi\From\alg_1\to\alg_2$ preserving the $\so(d)$-decomposition, \ie:
\[
 \begin{tikzcd}[row sep = small,column sep = small,ampersand replacement=\&]
\varphi(\tJ_1)=\tJ_2\&\displaystyle\varphi(\bigoplus_{a=1}^{s}\tSs{a}_1)=\bigoplus_{a=1}^{s}\tSs{a}_2\&\displaystyle\varphi(\bigoplus_{b=1}^v\tVs{b}_1)=\bigoplus_{b=1}^v\tVs{b}_2\,.
\end{tikzcd}
\]
\begin{Example}
We recover the notions of kinematical \UnskipRef{Definition:Kinematical algebra} and aristotelian \UnskipRef{Definition:aristotelian algebra} algebras as $(1,2)$-Lie algebras and $(1,1)$-Lie algebras, respectively while lifshitzian algebras \UnskipRef{Definition:lifshitzian algebra} identify with $(2,1)$-Lie algebras.
\end{Example}

\begin{Definition}[Subspace of type {\footnotesize(}$t,w,j${\footnotesize)}]
\label{Definition:Subspace of type}
Let $\alg$ be a $(s,v)$-Lie algebra with $d$-dimensional spatial isotropy. A subspace $\alh\subset\alg$ of $\alg$ preserving the $\so(d)$-decomposition of $\alg$ will be said to be of type $(t,w,j)$ if it contains $0\leq t\leq s$ scalar generators, $0\leq w\leq v$ sets of vector generators and $0\leq j\leq1$ set of $\so(d)$-generators.
\end{Definition}

\begin{Example}
A kinematical Klein pair is a pair $(\alg,\alh)$ composed of a $(1,2)$-Lie algebra $\alg$ together with a subalgebra $\alh\subset\alg$ of type $(0,1,1)$.
\end{Example}

\medskip
Different values of the pair of integers $(s,v)$ will yield different classes of Lie algebras. In the following, we will restrict our attention to two particular subclasses obtained by restriction of dimension---dubbed \emph{$k$-kinematical algebras} and \emph{$k$-aristotelian algebras} in the following---generalising the usual notions of kinematical and aristotelian algebras, respectively. 

\paragraph{$k$-kinematical algebras}
In lorentzian geometry, a lorentzian spacetime $(\M,g)$ of dimension $n$ is said to be \emph{maximally symmetric} if the isometry algebra of Killing vector fields  $\displaystyle\alg=\pset{X\in\vfh\ |\ \Lag_X{g}=0}$ has maximal dimension $\dim\alg=\frac{n(n+1)}{2}$. As is well-known, there are three such maximally symmetric spacetimes in every dimension $n$, with associated isometry algebras the lorentzian kinematical algebras of Table \ref{Table:Effective kinematical Klein pairs}, with $d=n-1$. Motivated by this, we define the following notion of $(s,v)$-Lie algebra of maximal dimension. 
\begin{Definition}[{\footnotesize(}$s,v${\footnotesize)}-Lie algebra of maximal dimension]
A $(s,v)$-Lie algebra with $d$-dimensional spatial isotropy will be said to be of maximal dimension if there exists an integer $n\in\mathbb N$ such that $\dim\alg=\frac{n(n+1)}{2}$. 

The integer $n$ will be referred to as the \emph{spacetime dimension} of $\alg$.
\end{Definition}

\begin{Definition}[Universally symmetric pair of integers]
A pair of integers $(s,v)$ will be said \emph{universally} symmetric if, for all $d\in\mathbb N$, any $(s,v)$-Lie algebra with $d$-dimensional spatial isotropy is of maximal dimension.
\end{Definition}

\begin{Proposition}[$k$-kinematical algebras]
\label{Proposition:k-kinematical algebras}
Universally symmetric pairs of integers $(s,v)$ are classified by a single integer $k\in\mathbb N$. Corresponding $(s,v)$-Lie algebras will be called $k$-kinematical algebras.
\end{Proposition}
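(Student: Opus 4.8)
The plan is to notice that the asserted classification is purely arithmetic. Every $(s,v)$-Lie algebra with $d$-dimensional spatial isotropy has the same dimension
\[
\dim\alg \;=\; s + vd + \tfrac{d(d-1)}{2},
\]
so the quantifier ``any $(s,v)$-Lie algebra'' in the definition of universal symmetry is vacuous, and the condition reduces to the statement that for every $d\in\mathbb N$ the integer $g_{s,v}(d):=2\dim\alg = d^2+(2v-1)d+2s$ is of the form $n(n+1)$ for some $n\in\mathbb N$. Since $4\,n(n+1)+1=(2n+1)^2$, a non-negative integer is of that form exactly when four times it plus one is a perfect square; I would therefore rewrite
\[
4\,g_{s,v}(d)+1 \;=\; (2d+2v-1)^2 + D, \qquad D:=8s+1-(2v-1)^2,
\]
and reformulate: $(s,v)$ is universally symmetric if and only if $(2d+2v-1)^2+D$ is a perfect square for all $d\in\mathbb N$.

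The crux is to show this forces $D=0$. Fix $d$ large and write $a:=2d+2v-1>0$; if $(2d+2v-1)^2+D=b^2$ with $b\in\mathbb N$, then $(b-a)(b+a)=D$ with $b+a\geq a$, so as soon as $a>|D|$ one gets $|b-a|<1$, hence $b=a$ and $D=0$. Thus $8s+1=(2v-1)^2$, i.e. $8s=4v(v-1)$, so $s=\tfrac{v(v-1)}{2}=\binom{v}{2}$. (Alternatively one may invoke the classical fact that a monic integer quadratic assuming square values at infinitely many integers is the square of a linear polynomial; the factorisation argument above is self-contained.) This is the step I expect to be the only genuinely delicate point of the proof.

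Conversely, substituting $s=\binom{v}{2}$ back gives $4\,g_{s,v}(d)+1=(2d+2v-1)^2$, whence $g_{s,v}(d)=(d+v-1)(d+v)$ and
\[
\dim\alg \;=\; \frac{(d+v-1)(d+v)}{2},
\]
which is maximal with spacetime dimension $n=d+v-1$ for every $d$; so every such pair is universally symmetric. Setting $k:=v-1\in\mathbb N$, the universally symmetric pairs are then exactly $(s,v)=(\tfrac{k(k+1)}{2},\,k+1)$, a family parameterised bijectively by $k$ since $v$ recovers $k$; and the corresponding Lie algebras---of dimension $\tfrac{(d+k)(d+k+1)}{2}$, that of a maximally symmetric geometry of spacetime dimension $d+k$---are by definition the $k$-kinematical algebras. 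The behaviour at small $d$, where $\so(d)$ degenerates, plays no role since only the dimension count enters the argument.
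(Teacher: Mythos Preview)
Your proof is correct. The paper's own argument is a single sentence: it sets $\dim\alg=\tfrac{n(n+1)}{2}$ and asserts that, since $s$ and $v$ are independent of $d$, one obtains $s=\tfrac{k(k+1)}{2}$, $v=k+1$, $n=d+k$---presumably by positing $n=d+k$ and matching coefficients of the two quadratics in $d$. You instead give a self-contained number-theoretic argument: recasting the maximality condition as a perfect-square constraint on $4g_{s,v}(d)+1$, completing the square, and using the elementary growth bound $(b-a)(b+a)=D$ with $a>|D|$ to force the defect $D$ to vanish. This is genuinely more rigorous, since the paper's implicit coefficient-matching tacitly assumes that $n(d)$ is affine in $d$, which is precisely what your factorisation step establishes without hypothesis. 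Your route buys an honest proof of the necessary direction; the paper's buys brevity.
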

\begin{proof}
The vector space underlying a $(s,v)$-Lie algebra with $d$-dimensional spatial isotropy has dimension $\dim\alg=\frac{d(d-1)}{2}+v\, d+s$. Setting $\dim\alg=\frac{n(n+1)}{2}$ and assuming $s$ and $v$ are independent of $d$ yields $s=\frac{k(k+1)}{2}$, $v=k+1$ and $n=d+k$.
\end{proof}
A $k$-kinematical algebra thus admits the following $\so(d)$-decomposition:
\bea
\alg=\bigoplus_{a=1}^{\frac{k(k+1)}{2}}\tSs{a}\oplus\bigoplus_{b=1}^{k+1}\tVs{b}\oplus\tJ\nn
\eea
and is of dimension $\dim\alg=\frac{n(n+1)}{2}$ with $n=d+k$.

\begin{Example}[$k$-kinematical algebras]
\hfill
\bi
\item $0$-kinematical algebras are spanned by $\pset{\tVs{1}=\tP,\tJ}$. There are three $0$-kinematical algebras, namely the euclidean, spherical and hyperbolic algebras in $n=d$ spacetime dimensions.
\item $1$-kinematical algebras are spanned by $\pset{\tSs{1}=H,\tVs{1}=\tP,\tVs{2}=\tK,\tJ}$ and thus coincide with kinematical algebras \UnskipRef{Definition:Kinematical algebra}. 
\item $2$-kinematical algebras are spanned by $\pset{\tSs{1}=H,\tSs{2}=M,\tSs{3}=C,\tVs{1}=\tP,\tVs{2}=\tK,\tVs{3}=\tD,\tJ}$ and will be referred to as \emph{ambient kinematical algebras}\footnote{$2$-kinematical algebras were referred to as \emph{generalised conformal algebras} in \cite{Figueroa-OFarrill2018a}.} \UnskipRef{Definition:Ambient kinematical algebra}. Examples include the leibniz algebra \UnskipRef{Example:Leibniz Klein pair} and variations thereof in $n=d+2$ spacetime dimensions [see Table \ref{Table:Reductive leibnizian lifts of galilean algebras}]. Other examples include the notion of \emph{graded conformal algebra} introduced in \cite{Figueroa-OFarrill2018a}.
\ei
\end{Example}

\begin{Definition}[$k$-kinematical Klein pairs]
\label{Definition:k-kinematical Klein pairs}
A Klein pair $(\alg,\alh)$ such that $\alg$ is a $k$-kinematical algebra with $d$-dimensional spatial isotropy and $\alh$ is a $(k-1)$-kinematical algebra with $d$-dimensional spatial isotropy is called a $k$-kinematical Klein pair with $d$-dimensional spatial isotropy.
\end{Definition}

An isomorphism of $k$-kinematical Klein pairs is an isomorphism of Klein pairs that preserves the $\so(d)$-decomposition.

\begin{Remark}
Note that $\dim\alg-\dim\alh=n=d+k$, hence the base manifold of a Cartan geometry modelled on a $k$-kinematical Klein pair with $d$-dimensional spatial isotropy has dimension $d+k$.

The notion of $1$-kinematical Klein pair identifies with the usual one of kinematical Klein pair whose corresponding geometric structures live on a manifold of dimension $d+1$ corresponding to ordinary spacetime. Accordingly, $2$-kinematical Klein pairs will be referred to as \emph{ambient kinematical Klein pairs} \UnskipRef{Definition:Ambient kinematical Klein pairs} where the term ambient is justified by the fact that the underlying geometry is of dimension $d+2$.
\end{Remark}

\paragraph{$k$-aristotelian algebras}
We now transpose the previous discussion from the kinematical case to the  aristotelian case. We will say that a $(s,v)$-Lie algebra with $d$-dimensional spatial isotropy is of \emph{boostless maximal dimension} if there exists an integer $n\in\mathbb N$ such that $\dim\alg=\frac{n(n-1)}{2}+1$. Compared to the maximally symmetric case, we lost $n-1$ generators which corresponds to the intuitive dimension of a boost generator in spacetime dimension $n$.
A pair of integers $(s,v)$ will be said \emph{universally} boostless symmetric if for all $d\in\mathbb N$, $(s,v)$-Lie algebras with $d$-dimensional spatial isotropy are of boostless maximal dimension.

\begin{Proposition}[$k$-aristotelian algebras]
\label{Proposition:k-aristotelian algebras}
Universally boostless symmetric pairs of integers $(s,v)$ are classified by a single integer $k\in\mathbb N$. Corresponding $(s,v)$-Lie algebras will be called $k$-aristotelian algebras.
\end{Proposition}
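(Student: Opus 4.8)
The plan is to mirror the proof of Proposition~\ref{Proposition:k-kinematical algebras}, replacing the maximally symmetric bookkeeping by its boostless counterpart. A $(s,v)$-Lie algebra with $d$-dimensional spatial isotropy has underlying vector space of dimension $\dim\alg=\tfrac{d(d-1)}{2}+v\,d+s$. By definition, the pair $(s,v)$ is universally boostless symmetric exactly when, for every $d\in\mathbb N$, there is an integer $n=n(d)$ with $\dim\alg=\tfrac{n(n-1)}{2}+1$, \ie when $\dim\alg-1$ is a triangular number for all $d$. Clearing denominators, this is equivalent to asking that
\[
8\big(\dim\alg-1\big)+1 \;=\; 4d^2+(8v-4)\,d+(8s-7)
\]
be an odd perfect square $(2n-1)^2$ for every $d\in\mathbb N$, with $s$ and $v$ held fixed.

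The core of the argument is then an elementary lemma: a quadratic $q(d)=4d^2+b\,d+c$ with $b,c\in\mathbb Z$ taking perfect-square values at all $d\in\mathbb N$ must be the square of a linear polynomial, $q(d)=(2d+e)^2$ with $e\in\mathbb Z$ (hence $b=4e$ and $c=e^2$). I would prove this by a squeezing/factorisation argument: from $16\,q(d)=(8d+b)^2+(16c-b^2)$, if $q(d)=n^2$ then $(4n-8d-b)(4n+8d+b)=16c-b^2$ is a constant, while the second factor tends to $+\infty$ as $d\to\infty$; hence $4n-8d-b=0$ for all large $d$, forcing $16c=b^2$ and then (testing small $d$) $4\mid b$. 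Applying this to $q(d)=4d^2+(8v-4)d+(8s-7)$ gives $e=2v-1$ together with $(2v-1)^2=8s-7$, that is
\[
s=\frac{(2v-1)^2+7}{8}=\frac{v(v-1)}{2}+1 ,
\]
which is automatically a nonnegative integer, while $2n-1=2d+2v-1$, \ie $n=d+v$.

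Writing $k:=v\in\mathbb N$, this shows that the universally boostless symmetric pairs are precisely
\[
(s,v)=\Big(\tfrac{k(k-1)}{2}+1,\;k\Big),\qquad k\in\mathbb N ,
\]
the converse inclusion being the direct check that for such $(s,v)$ one has $\dim\alg=\tfrac{(d+k)(d+k-1)}{2}+1$ for all $d$. This establishes the asserted bijection between universally boostless symmetric pairs and integers $k$; the associated $(s,v)$-Lie algebras, by definition the $k$-aristotelian algebras, thus carry the $\so(d)$-decomposition
\[
\alg=\bigoplus_{a=1}^{\frac{k(k-1)}{2}+1}\tSs{a}\ \oplus\ \bigoplus_{b=1}^{k}\tVs{b}\ \oplus\ \tJ
\]
and have spacetime dimension $n=d+k$ (note that, compared with a $k$-kinematical algebra, exactly one vector module---the `boost'---is missing, consistently with the terminology). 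The only point requiring care is the lemma above, and specifically the fact that its hypothesis need only be verified for $d\in\mathbb N$ rather than for all integers; but the squeezing step already pins down $n(d)$ for all sufficiently large $d$, which is enough to force the polynomial identity and hence to constrain $(s,v)$.
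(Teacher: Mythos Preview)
Your proof is correct and in fact more careful than the paper's. The paper simply writes the dimension condition $\tfrac{d(d-1)}{2}+vd+s=\tfrac{n(n-1)}{2}+1$ and, ``assuming $s$ and $v$ are independent of $d$'', reads off $v=k$, $s=\tfrac{k(k-1)}{2}+1$, $n=d+k$ by polynomial coefficient matching---implicitly taking for granted that $n$ is an affine function of $d$. You instead \emph{derive} this: by rewriting the condition as $8(\dim\alg-1)+1=(2n-1)^2$ you reduce the problem to showing that a monic-in-$4d^2$ integer quadratic which is a perfect square at every $d\in\mathbb N$ must be the square of a linear form, and you supply the standard divisor-squeezing argument for that lemma. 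This is a genuinely different (number-theoretic rather than polynomial-identity) route; it buys you rigour at the cost of length, while the paper's approach buys brevity by tacitly assuming the linear ansatz $n=d+k$. Both land on the same parametrisation, and your closing remarks on the converse check and on the ``missing boost'' module compared to the $k$-kinematical case are exactly in line with the paper's presentation.
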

\begin{proof}
The vector space underlying a $(s,v)$-Lie algebra with $d$-dimensional spatial isotropy has dimension $\dim\alg=\frac{d(d-1)}{2}+v\, d+s$. Setting $\dim\alg=\frac{n(n-1)}{2}+1$ and assuming $s$ and $v$ are independent of $d$ yields $s=\frac{k(k-1)}{2}+1$, $v=k$ and $n=d+k$.
\end{proof}

A $k$-aristotelian algebra is thus a Lie algebra $\alg$ of dimension $\dim\alg=\frac{n(n-1)}{2}+1$ with $n=d+k$ and admitting the following $\so(d)$-decomposition:
\bea
\alg=\bigoplus_{a=1}^{\frac{k(k-1)}{2}+1}\tSs{a}\oplus\bigoplus_{b=1}^{k}\tVs{b}\oplus\tJ\nn.
\eea

\begin{Example}[$k$-aristotelian algebras]
\hfill
\bi
\item There is a unique $0$-aristotelian algebra, namely the direct sum $\mathbb R\oplus\so(d)$.
\item $1$-aristotelian algebras are spanned by $\pset{\tSs{1}=H,\tVs{1}=\tP,\tJ}$ and thus coincide with aristotelian algebras \UnskipRef{Definition:aristotelian algebra}.
\item $2$-aristotelian algebras are spanned by $\pset{\tSs{1}=H,\tSs{2}=M,\tVs{1}=\tP,\tVs{2}=\tK,\tJ}$ and will be referred to as \emph{ambient aristotelian} algebras\footnote{$2$-aristotelian algebras are referred to as \emph{boost-extended Lifshitz algebras} in \cite{FigueroaOFarrill2023}, following \cite{Gibbons2009}.} \UnskipRef{Definition:Ambient aristotelian algebra}. Examples include the bargmann algebra \UnskipRef{Example:Bargmann Klein pair} and variations thereof in $n=d+2$ spacetime dimensions [see Table \ref{Table:Ambient aristotelian algebras admitting a scalar ideal}]. Other examples include the notion of \emph{generalised Lifshitz algebra} introduced in \cite{Figueroa-OFarrill2018a}.
\ei
\end{Example}

\begin{Definition}[$k$-aristotelian Klein pairs]
\label{Definition:k-aristotelian Klein pairs}
A Klein pair $(\alg,\alh)$ such that $\alg$ is a $k$-aristotelian algebra with $d$-dimensional spatial isotropy and $\alh$ is a $(k-2)$-kinematical algebra with $d$-dimensional spatial isotropy is called a $k$-aristotelian Klein pair with $d$-dimensional spatial isotropy.
\end{Definition}

\begin{Remark}
Note that $\dim\alg-\dim\alh=d+k=n$, similarly as in the $k$-kinematical case. The notion of $1$-aristotelian Klein pair identifies with the usual notion of aristotelian Klein pair whereas $2$-aristotelian Klein pairs will be referred to as \emph{ambient aristotelian Klein pairs} \UnskipRef{Definition:Ambient aristotelian Klein pairs}. 
\end{Remark}

\paragraph{Branching rules}
We conclude this section by discussing how to relate Lie algebras across distinct spatial isotropy dimensions through the use of branching rules. Such dimensional reduction scheme will prove useful \SectionRef{section:Drink me} in order to obtain ambient kinematical algebras from ordinary kinematical algebras.

\begin{Proposition}
\label{Proposition:general branching rules}
Any $(s,v)$-Lie algebra $\alg$ with $(d+1)$-dimensional spatial isotropy is isomorphic as a Lie algebra to a $(s+v,v+1)$-Lie algebra $\overline \alg$ with $d$-dimensional spatial isotropy. Any subspace $\alh\subset\alg$ of type $(t,w,j)$ is isomorphic as a vector space to a subspace $\overline\alh\subset\overline\alg$ of type $(s+v,v+j,j)$. 
\end{Proposition}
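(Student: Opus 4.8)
The plan is to exploit the standard nested chain of rotation algebras $\so(d)\subset\so(d+1)$, realised by taking $\so(d)$ to be the stabiliser in $\so(d+1)$ of a fixed unit vector $e_{d+1}\in\mathbb R^{d+1}$, and to observe that passing from $\alg$ to $\overline\alg$ changes nothing about the Lie algebra itself: the underlying vector space and the bracket are left untouched, and the asserted isomorphism is simply the identity map. What changes is only which rotation subalgebra one records as the distinguished $\so$-factor --- $\so(d+1)$ for $\alg$, its subalgebra $\so(d)$ for $\overline\alg$ --- so the entire content of the statement is the representation-theoretic bookkeeping of restricting the $\so(d+1)$-module $\alg$ to $\so(d)$.

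The key input is the branching of the three types of $\so(d+1)$-modules occurring in the decomposition \eqref{equation:decomposition of (s,v)-Lie algebras}. First I would record: the trivial module restricts to the trivial $\so(d)$-module; the vector module $\mathbb R^{d+1}$ restricts to $\mathbb R^{d}\oplus\mathbb R$, i.e. one $\so(d)$-vector plus one $\so(d)$-scalar (the $e_{d+1}$-component); and the adjoint module $\Lambda^2\mathbb R^{d+1}\cong\so(d+1)$ restricts to $\Lambda^2\mathbb R^{d}\oplus\mathbb R^{d}\cong\so(d)\oplus(\text{vector})$. Applying these to $\alg=\bigoplus_{a=1}^{s}\tSs{a}\oplus\bigoplus_{b=1}^v\tVs{b}\oplus\tJ$ regarded now as an $\so(d)$-module: the $s$ scalars remain $s$ scalars, each of the $v$ vector modules contributes one $\so(d)$-vector and one $\so(d)$-scalar, and the single copy $\tJ$ of the adjoint contributes one copy of $\so(d)$ together with one further $\so(d)$-vector. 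Tallying, $\overline\alg$ carries $s+v$ scalar modules, $v+1$ vector modules and exactly one copy of $\so(d)$, which is precisely the $\so(d)$-content of a $(s+v,v+1)$-Lie algebra with $d$-dimensional spatial isotropy; as a sanity check, $\tfrac{d(d-1)}{2}+(v+1)d+(s+v)=\tfrac{(d+1)d}{2}+v(d+1)+s=\dim\alg$.

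It then remains to check that $\overline\alg$, so decomposed, genuinely satisfies the defining axioms of a $(s+v,v+1)$-Lie algebra --- namely $\br{\tJ}{\tJ}\sim\tJ$, $\br{\tJ}{\tVs{b}}\sim\tVs{b}$ and $\br{\tJ}{\tSs{a}}\sim0$ for the new rotation generators and the new module split. All of this is automatic: $\so(d)$ is a Lie subalgebra of $\so(d+1)\subseteq\alg$, so its bracket closes inside $\so(d)$; and the bracket of $\alg$, being $\so(d+1)$-equivariant, is a fortiori $\so(d)$-equivariant, hence respects the $\so(d)$-isotypic decomposition just computed --- which is exactly what the axioms demand. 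The subspace statement follows by running the same branching on $\alh$: since $\alh$ preserves the $\so(d+1)$-decomposition it is a sub-$\so(d+1)$-module, a direct sum of $t$ scalar pieces, $w$ vector pieces and $j\in\{0,1\}$ copies of the adjoint; under restriction to $\so(d)$ each scalar stays a scalar, each vector yields a vector and a scalar, and each adjoint copy yields an adjoint copy and a vector, so $\overline\alh$ is a subspace of $\overline\alg$ of type $(t+w,w+j,j)$, and the identity map restricts to the claimed vector-space isomorphism $\alh\iso\overline\alh$.

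I do not expect a genuine obstacle here --- the argument is essentially bookkeeping --- so the only points requiring care are: fixing the embedding $\so(d)\hookrightarrow\so(d+1)$ once and for all so that $\overline\alg$ is canonically defined (distinct choices of $e_{d+1}$ are related by $\so(d+1)$-conjugation and yield isomorphic outcomes); verifying that the three branching rules really do decompose in the multiplicity-controlled form claimed, so that the counts of scalar, vector and adjoint modules are unambiguous and one indeed lands inside the $(s,v)$-Lie algebra framework rather than something with irreducibles glued together; and checking that the low-dimensional edge cases $d=0,1$ (where $\so(d)$ is trivial or absent and some of the ``new'' vector modules degenerate) remain consistent with the stated formulas --- which they do in the universal setting, where every count is read off as a polynomial in $d$.
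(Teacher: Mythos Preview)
Your proposal is correct and takes essentially the same approach as the paper --- restriction of the $\so(d+1)$-module structure along $\so(d)\subset\so(d+1)$ followed by counting isotypic pieces --- though you supply considerably more detail (equivariance of the bracket, edge cases) than the paper's two-sentence sketch. One point worth flagging: your computed type $(t+w,\,w+j,\,j)$ for $\overline\alh$ is the correct formula (and is what the paper actually uses in the subsequent Corollary), whereas the $(s+v,\,v+j,\,j)$ written in the statement is evidently a typographical slip.
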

\begin{proof}
Using branching rules allows to decompose representations of $\so(d+1)$ into representations of $\so(d)$. Explicitly, we decompose the generators of $\so(d+1)$ as $J_{\mu\nu}=\pset{V^{(v+1)}_{i}=J_{0i},J_{ij}}$ and the vectorial generators as $V_{\mu}^{(b)}=\pset{\tSs{s+b}=V_{0}^{(b)},V_{i}^{(b)}}$, respectively, with $\mu,\nu\in\pset{0,\ldots,d}$ and $i,j\in\pset{1,\ldots d}$. 

Straightforward counting shows that a $(s,v)$-Lie algebra \big(resp. a subspace of type $(t,w,j)$\big) with $(d+1)$-dimensional spatial isotropy induces a $(s+v,v+1)$-Lie algebra \big(resp. a subspace of type $(s+v,v+j,j)$\big) with $d$-dimensional spatial isotropy.
\end{proof}

\begin{Remark}
\label{Remark:subalgebra}
Starting from a $(s,v)$-Lie algebra $\alg$ with $(d+1)$-dimensional spatial isotropy:
\bea
\alg=\bigoplus_{a=1}^{s}\tSs{a}\oplus\bigoplus_{b=1}^v\tVs{b}\oplus\tJ\nn
\eea
and using branching rules yields the $(s+v,v+1)$-Lie algebra $\overline \alg$ with $d$-dimensional spatial isotropy:
\bea
\bar\alg=\bigoplus_{a=1}^{s}\tSs{a}\oplus \bigoplus_{b=1}^{v}\Spannn{V^{(b)}_0}\oplus\bigoplus_{b=1}^v\tVs{b}\oplus \Spannn{J_{0i}}\oplus\tJ\, .\nn
\eea
Note that the subspace:
\bea
\alg_0=\bigoplus_{a=1}^{s}\tSs{a}\oplus\bigoplus_{b=1}^v\tVs{b}\oplus\tJ\nn
\eea
is a $(s,v,1)$-subalgebra of $\overline \alg$.
\end{Remark}

\begin{Corollary}
\label{corollary:dimensional reduction of Klein pairs}
Any $k$-kinematical Klein pair (resp. $k$-aristotelian Klein pair) with $(d+1)$-dimensional spatial isotropy is isomorphic as a Klein pair to a $(k+1)$-kinematical Klein pair (resp. $(k+1)$-aristotelian Klein pair) with $d$-dimensional spatial isotropy.
\end{Corollary}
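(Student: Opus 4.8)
The plan is to derive both statements at once from the branching rules of Proposition~\ref{Proposition:general branching rules}, applied simultaneously to the Lie algebra $\alg$ and to the subalgebra $\alh$ of the given Klein pair. Consider first a $k$-kinematical Klein pair $(\alg,\alh)$ with $(d+1)$-dimensional spatial isotropy. By Proposition~\ref{Proposition:k-kinematical algebras}, $\alg$ is a $(s,v)$-Lie algebra with $s=\tfrac{k(k+1)}{2}$ and $v=k+1$ (and $(d+1)$-dimensional spatial isotropy), while $\alh$, being a $(k-1)$-kinematical algebra that shares the rotation subalgebra of $\alg$, is a subalgebra of type $(t,w,j)=\big(\tfrac{k(k-1)}{2},k,1\big)$ in the sense of Definition~\ref{Definition:Subspace of type}.

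First I would apply Proposition~\ref{Proposition:general branching rules} to $\alg$: this yields a Lie-algebra isomorphism $\varphi\From\alg\iso\overline{\alg}$ onto a $(s+v,v+1)$-Lie algebra $\overline{\alg}$ with $d$-dimensional spatial isotropy. Since $s+v=\tfrac{k(k+1)}{2}+k+1=\tfrac{(k+1)(k+2)}{2}$ and $v+1=k+2$, the algebra $\overline{\alg}$ is precisely a $(k+1)$-kinematical algebra with $d$-dimensional spatial isotropy. Under the same branching the subspace $\alh$ of type $(t,w,j)$ is carried by $\varphi$ to a subspace $\overline{\alh}=\varphi(\alh)\subset\overline{\alg}$ of type $(t+w,w+j,j)$: indeed, as read off from Remark~\ref{Remark:subalgebra}, each $\so(d+1)$-vector module splits into an $\so(d)$-scalar plus an $\so(d)$-vector, and---since $j=1$ here---the $\so(d+1)$-rotations split into one extra $\so(d)$-vector (the $J_{0i}$) together with the $\so(d)$-rotations. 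Hence $\overline{\alh}$ has type $\big(\tfrac{k(k-1)}{2}+k,k+1,1\big)=\big(\tfrac{k(k+1)}{2},k+1,1\big)$; being the image of a subalgebra under the Lie-algebra isomorphism $\varphi$ it is a subalgebra of $\overline{\alg}$, and as an $\so(d)$-submodule thereof it contains an $\so(d)$-subalgebra with scalar/vector content $\big(\tfrac{k(k+1)}{2},k+1\big)$, exactly that of a $k$-kinematical algebra. Therefore $\overline{\alh}$ is a $k$-kinematical algebra with $d$-dimensional spatial isotropy, so $(\overline{\alg},\overline{\alh})$ is a $(k+1)$-kinematical Klein pair with $d$-dimensional spatial isotropy (Definition~\ref{Definition:k-kinematical Klein pairs}); and $\varphi$, being a Lie-algebra isomorphism with $\varphi(\alh)=\overline{\alh}$, is an isomorphism of Klein pairs in the sense of Definition~\ref{Definition:Infinitesimal Klein pair}.

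The aristotelian case runs along identical lines with the appropriate bookkeeping: for a $k$-aristotelian Klein pair $(\alg,\alh)$ with $(d+1)$-dimensional spatial isotropy, $\alg$ has scalar/vector content $\big(\tfrac{k(k-1)}{2}+1,k\big)$ and $\alh$ is a $(k-2)$-kinematical algebra, hence a subspace of type $\big(\tfrac{(k-2)(k-1)}{2},k-1,1\big)$; branching then produces $\overline{\alg}$ with content $\big(\tfrac{k(k-1)}{2}+1+k,k+1\big)=\big(\tfrac{k(k+1)}{2}+1,k+1\big)$, \ie a $(k+1)$-aristotelian algebra, and $\overline{\alh}$ of type $\big(\tfrac{(k-2)(k-1)}{2}+(k-1),k,1\big)=\big(\tfrac{(k-1)k}{2},k,1\big)$, \ie a $(k-1)$-kinematical algebra, both now with $d$-dimensional spatial isotropy, so $(\overline{\alg},\overline{\alh})$ is a $(k+1)$-aristotelian Klein pair (Definition~\ref{Definition:k-aristotelian Klein pairs}) isomorphic to $(\alg,\alh)$ as a Klein pair. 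I do not expect a genuine obstacle here: the branching isomorphism of Proposition~\ref{Proposition:general branching rules} does all the work, and the only points to watch are recording that $\alh$ meets $\alg$ in its rotation subalgebra (so the relevant subspace type has $j=1$) and checking the elementary identities among the triangular numbers $\tfrac{k(k\pm1)}{2}$; closure of $\overline{\alh}$ under the bracket is inherited for free through $\varphi$, so it needs no separate verification.
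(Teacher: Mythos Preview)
Your proof is correct and follows essentially the same approach as the paper's own proof: both argue by applying the branching rules of Proposition~\ref{Proposition:general branching rules} to $\alg$ and $\alh$ separately and then checking the resulting scalar/vector counts against the defining data of $(k+1)$-kinematical (resp.\ $(k+1)$-aristotelian) Klein pairs via the triangular-number identities. Your version is in fact somewhat more explicit---you spell out the type transformation $(t,w,j)\mapsto(t+w,w+j,j)$ for the subspace, record that $j=1$ because $\alh$ contains the rotations, and note that closure of $\overline{\alh}$ is inherited through the Lie-algebra isomorphism---points the paper leaves implicit.
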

\begin{proof}
Let $(\alg,\alh)$ be a $k$-kinematical Klein pair with $(d+1)$-dimensional spatial isotropy. By definitions \UnskipRefs{Proposition:k-kinematical algebras}{Definition:k-kinematical Klein pairs}, $\alg$ is a $(\frac{k(k+1)}{2},k)$-algebra and $\alh$ a $(\frac{k(k-1)}{2},k,1)$-subalgebra thereof. According to \Prop{Proposition:general branching rules}, the Lie algebra $\alg$ (resp. subspace $\alh$) is isomorphic to a $(\frac{(k+1)(k+2)}{2},k+2)$-algebra \big(resp. a $(\frac{k(k+1)}{2},k+1,1)$-subspace\big) with $d$-dimensional spatial isotropy. Hence the pair $(\alg,\alh)$ is isomorphic as a Klein pair to a $(k+1)$-kinematical Klein pair with $d$-dimensional spatial isotropy.

Now, letting $(\alg,\alh)$ be a $k$-aristotelian Klein pair with $(d+1)$-dimensional spatial isotropy, $\alg$ is a $(\frac{k(k-1)}{2}+1,k)$-algebra isomorphic to a $(\frac{k(k+1)}{2},k+1)$-algebra with $d$-dimensional spatial isotropy \ie to a $(k+1)$-aristotelian algebra with $d$-dimensional spatial isotropy. Furthermore, by the previous argument, the $(k-2)$-kinematical algebra $\alh$ is isomorphic to a $(k-1)$-kinematical algebra with $d$-dimensional spatial isotropy, so that the pair $(\alg,\alh)$ is indeed isomorphic to a $(k+1)$-aristotelian Klein pair with $d$-dimensional spatial isotropy.
\end{proof}
\begin{Example}[Dimensional reduction of $k$-kinematical algebras]
 \label{Example:dimensional reduction}
 \hfill
 
\medskip
\textbf{From $k=0$ to $k=1$}

\medskip
There are three universal $0$-kinematical Klein pairs $(\alg,\alh)$ with spatial isotropy dimension $d+1$, where $\alg=\alh\oplus\alp$  [with $\alp=\Spannn{P_\mu}$ and $\alh=\Spannn{J_{\mu\nu}}$, with $\mu,\nu\in\pset{0,\ldots,d+1}$] with non-trivial commutation relations\footnote{On top of the canonical ones $\br{\tJ}{\tP}\sim\tP$ and $\br{\tJ}{\tJ}\sim\tJ$.} given by:
\bea
 \resizebox{4cm}{!}{
\begin{tabular}{l|l}
\textbf{Klein pair}&$\br{\alp}{\alp}\subset\alh$\\\hline
\textnormal{euclidean}&\\
\textnormal{sphere}\cellcolor{Gray}&\cellcolor{Gray}$\br{\tP}{\tP}=-\tJ$\\
\textnormal{hyperbolic}&$\br{\tP}{\tP}=\tJ$
  \end{tabular}
  }
  \label{equation:0-kinematical Klein pairs}
  \eea
   
Performing the decomposition $\mu=\pset{0,i}$, where $i\in\pset{1,\ldots,d}$ together with the following relabelling:
\bea
P_0\mapsto H\quad,\quad  J_{0i}\mapsto K_i\nn
\eea
allows to map the three above $0$-kinematical Klein pairs with $(d+1)$-dimensional spatial isotropy to three $1$-kinematical Klein pairs with $d$-dimensional spatial isotropy:
\bea
\alg=\alh\oplus\alp \text{ with }\alp=\Spannn{H,P_i}\text{ and }\alh=\Spannn{K_i,J_{ij}}\nn
\eea
together with commutation relations\footnote{The commutation relations displayed in the table are obtained from the following replacements:
\bea
&&\br{P_\mu}{P_\nu}=\epsilon J_{\mu\nu}\quad \mapsto\quad \br{H}{P_i}=\epsilon K_i\qq\br{P_i}{P_j}=\epsilon J_{ij}\nn\\
&&\br{J_{\mu\nu}}{P_\rho}=-\delta_{\mu\rho}P_\nu+\delta_{\nu \rho}P_\mu\quad\mapsto\quad \br{K_i}{H}=-P_i\qq\br{K_i}{P_j}=\delta_{ij}H\qq \br{J_{ij}}{P_k}=-\delta_{ik}P_j+\delta_{j k}P_i\nn\\
&&\br{J_{\mu\nu}}{J_{\rho\sigma}}=\delta_{\nu\rho}J_{\mu\sigma}-\delta_{\nu\sigma}J_{\mu\rho}+\delta_{\mu\sigma}J_{\nu\rho}-\delta_{\mu\rho}J_{\nu\sigma}
\quad\mapsto\quad\nn\\
&&\br{K_i}{K_j}=-J_{ij}\qq\br{J_{ij}}{K_k}=-\delta_{ik}K_j+\delta_{j k}K_i\qq\br{J_{ij}}{J_{kl}}=\delta_{jk}J_{il}-\delta_{jl}J_{ik}+\delta_{il}J_{jk}-\delta_{ik}J_{jl}\, .\nn
\eea
 }:
\bea
 \resizebox{12cm}{!}{
\begin{tabular}{l|l|ll|ll}
\textbf{Klein pair}&$\br{\alh}{\alh}\subset\alh$&\multicolumn{2}{c|}{$\br{\alh}{\alp}\subset\alp$}&\multicolumn{2}{c}{$\br{\alp}{\alp}\subset\alh$}\\
\hline
\textnormal{euclidean}&$\br{\tK}{\tK}=-\tJ$&$\br{\tK}{\tP}= H$&$\br{\tK}{H}= -\tP$&&\\
\textnormal{sphere}\cellcolor{Gray}&$\br{\tK}{\tK}=-\tJ$\cellcolor{Gray}&$\br{\tK}{\tP}= H$\cellcolor{Gray}&$\br{\tK}{H}= -\tP$\cellcolor{Gray}&$\br{H}{\tP}=-\tK$\cellcolor{Gray}&\cellcolor{Gray}$\br{\tP}{\tP}=-\tJ$\\
\textnormal{hyperbolic}&$\br{\tK}{\tK}=-\tJ$&$\br{\tK}{\tP}= H$&$\br{\tK}{H}= -\tP$&$\br{H}{\tP}=\tK$&$\br{\tP}{\tP}=\tJ$
  \end{tabular}
  }
  \hspace{1cm}
    \label{equation:induced 1-kinematical Klein pairs}
\eea
\end{Example}

\begin{Remark}
Allowing for a different signature of $\alh$ allows to recover the minkowski, de sitter and anti de sitter Klein pairs of Table \ref{Table:Effective kinematical Klein pairs} as the $1$-kinematical Klein pairs with $d$-dimensional spatial isotropy associated with the $0$-kinematical Klein pairs with $(d+1)$-dimensional spatial isotropy \eqref{equation:0-kinematical Klein pairs} with $\tJ=\so(1,d-1)$. 
\end{Remark}

Moving from the familiar case of Example \ref{Example:dimensional reduction} to the next to simple case (\ie from $k=1$ to $k=2$) will allow us to obtain ambient kinematical Klein pairs (specifically, leibnizian Klein pairs) from ordinary kinematical Klein pair (namely, galilean Klein pairs), \cf Section \ref{section:Drink me} for details.

\section{Climbing up one leg of the table: an ambient perspective on Klein pairs}
\label{section:Climbing up one leg of the table: an ambient perspective on Klein pairs}
Pray, let us now turn our gaze to the current section, wherein we shall aim at particularising the general discussion of Section \ref{section:A Tangled Tale: (s,v)-Lie algebras} by taking a closer look at the notions of 2-kinematical algebras \UnskipRef{Proposition:k-kinematical algebras} and 2-aristotelian algebras \UnskipRef{Proposition:k-aristotelian algebras}, whose associated notions of Klein pairs \UnskipRefs{Definition:k-kinematical Klein pairs}{Definition:k-aristotelian Klein pairs} will henceforth constitute our main object of interest.
\subsection{Ambient Klein pairs}

We start by unfolding the notion of 2-kinematical algebra---suggestively dubbed \emph{ambient kinematical algebra}---by displaying a definition modelled on the one of kinematical algebra \UnskipRef{Definition:Kinematical algebra}:

\begin{Definition}[Ambient kinematical algebra]
\label{Definition:Ambient kinematical algebra}
An ambient kinematical algebra (or 2-kinematical algebra) with $d$-dimensional spatial isotropy is a Lie algebra $\alg$ containing a $\so(d)$-subalgebra and admitting the following decomposition of $\so(d)$-modules:
\vspace{-4mm}

\bea
\alg=M\oplus H\oplus C\oplus \tP\oplus\tD\oplus\tK\oplus\tJ\label{equation:ambient kinematical algebra}
\eea
\vspace{-8mm}

where:
\be
\item The generators $\tJ$ span the $\so(d)$-subalgebra.
\item The generators $\tP$, $\tD$ and $\tK$ are in the vector representation of $\mathfrak{so}(d)$.
\item The generators $M,H$ and $C$ are in the scalar representation of $\mathfrak{so}(d)$.
\ee
\end{Definition}

Two ambient kinematical algebras $\alg_1,\alg_2$ are isomorphic if there is an isomorphism of Lie algebras $\varphi\From\alg_1\to\alg_2$ preserving the $\so(d)$-decomposition.\footnote{Similarly to the kinematical algebra case \UnskipRef{Remark:indiscernable}, the $\so(d)$-decomposition of ambient kinematical algebras  treats the three scalar modules (resp. the three vector modules) as indiscernable, so that isomorphisms of ambient kinematical algebras map $M_1\oplus H_1\oplus C_1$ to $M_2\oplus H_2\oplus C_2$ (resp. $\tP_1\oplus\tD_1\oplus\tK_1$ to $\tP_2\oplus\tD_2\oplus\tK_2$), but not necessarily $M_1$ to $M_2$, $H_1$ to $H_2$ nor $C_1$ to $C_2$ (resp. $\tP_1$ to $\tP_2$, $\tD_1$ to $\tD_2$ nor $\tK_1$ to $\tK_2$).} Similarly as their (non-ambient) kinematical counterpart, ambient kinematical algebras can be upgraded to a notion of ambient kinematical Klein pairs, defined as 2-kinematical Klein pairs \UnskipRef{Definition:k-kinematical Klein pairs}, for which the subalgebra is a kinematical algebra. Unfolding the above definition yields:
\begin{Definition}[Ambient kinematical Klein pair]
\label{Definition:Ambient kinematical Klein pairs}
A Klein pair $(\alg,\alh)$ such that $\alg$ (resp. $\alh)$ is an ambient kinematical (resp. kinematical) Lie algebra with $d$-dimensional spatial isotropy is called an ambient kinematical Klein pair with $d$-dimensional spatial isotropy.
\end{Definition}
By convention, we will choose the following decomposition of $\so(d)$-modules for the subalgebra $\alh$:
\bea
\alh=C\oplus\tD\oplus\tK\oplus\tJ\label{equation:ambient kinematical subalgebra}.
\eea

The motivating example of ambient kinematical Klein pair is given by the \emph{leibniz Klein pair} with $\alg$ the leibniz algebra introduced in \cite{Bekaert2015b} as isometry algebra of the flat leibniz manifold [see \cite{Bekaert2015b,Morand2023} for details]. 
\begin{Example}[Leibniz Klein pair]
\label{Example:Leibniz Klein pair}
The leibniz Klein pair is defined as the pair $(\alg,\alh)$ where $\alg$ stands for the {leibniz} algebra, defined as the ambient kinematical algebra \eqref{equation:ambient kinematical algebra}
with non-trivial commutators:
\bea
\label{equation:leibniz algebra}
\begin{tikzcd}
\br{\tD}{\tK}= C&\br{\tK}{H}= \tP& \br{\tD}{\tP}= M&\br{C}{H}= M
\end{tikzcd}
\eea
and $\alh$ the subalgebra \eqref{equation:ambient kinematical subalgebra}. The leibniz Klein pair can be checked to be reductive as it admits the decomposition of $\alh$-modules given by $\alg=\alh\oplus\alp$, where $\alp=M\oplus H\oplus \tP$. The $\alh$-module $\alp$ being abelian, the leibniz Klein pair is flat and symmetric [\cf \UnskipRef{Definition:Infinitesimal Klein pair} for definitions]. The pair is also effective and can be geometrically realised as isometry algebra of the flat leibniz manifold in $(d+2)$-dimensions \cite{Bekaert2015b,Morand2023}. Denoting $\ali$ the abelian ideal $\ali= M\oplus C\oplus\tD$, the following sequence of Lie algebras:
\[
\begin{tikzcd}
0 \arrow[r] &\ali  \arrow[r, hook, "i"] & \alg \arrow[r, two heads,"\pi"]        & \alg_0\arrow[r]  & 0
\end{tikzcd}
\]
where $\alg_0$ stands for the galilei algebra \hyperlink{Table:K3}{$\mathsf{K3}$}, is exact [\cf Example \ref{Example:Leibniz projectable triplet} below].
\end{Example}

Ambient kinematical algebras, and specifically the leibniz algebras and generalisations thereof, will play a  key role in Section \ref{section:Drink me}  as alternative ways to lift galilean algebras. 

We now introduce the main notion relevant for the present work, namely \emph{ambient aristotelian algebras}, whose definition as 2-aristotelian algebras \UnskipRef{Proposition:k-aristotelian algebras} can be unfolded as follows:

\begin{Definition}[Ambient aristotelian algebra]
\label{Definition:Ambient aristotelian algebra}
An ambient aristotelian algebra (or 2-aristotelian algebra) with $d$-dimensional spatial isotropy is a Lie algebra $\alg$ containing a $\so(d)$-subalgebra and admitting the following decomposition of $\so(d)$-modules:
\vspace{-4mm}

\bea
\label{equation:ambient aristotelian algebra}
\alg=M\oplus H\oplus \tP\oplus\tK\oplus\tJ
\eea
\vspace{-8mm}

where:
\be
\item The generators $\tJ$ span the $\so(d)$-subalgebra.
\item The generators $\tP$ and $\tK$ are in the vector representation of $\mathfrak{so}(d)$.
\item The generators $M$ and $H$ are in the scalar representation of $\mathfrak{so}(d)$.
\ee
\end{Definition}

The associated notion of Klein pair coincides with the notion of 2-aristotelian Klein pair \UnskipRef{Definition:k-aristotelian Klein pairs}:
\begin{Definition}[Ambient aristotelian Klein pair]
\label{Definition:Ambient aristotelian Klein pairs}
A Klein pair $(\alg,\alh)$ such that $\alg$ (resp. $\alh)$ is an ambient aristotelian (resp. 0-kinematical) Lie algebra with $d$-dimensional spatial isotropy is called an ambient aristotelian Klein pair with $d$-dimensional spatial isotropy.
\end{Definition}
By convention, we will choose the following decomposition of $\so(d)$-modules for the subalgebra $\alh$:
\bea
\label{equation:aristotelian subalgebra}
\alh=\tK\oplus\tJ\, .
\eea
The paradigmatic example of ambient aristotelian Klein pair is given by the \emph{bargmann Klein pair}:
\begin{Example}[Bargmann Klein pair]
\label{Example:Bargmann Klein pair}
Let $(\alg,\alh)$ be the bargmann Klein pair with $\alg$ the bargmann algebra \cite{Bargmann1952,Duval1977} with underlying vector space \eqref{equation:ambient aristotelian algebra} and non-trivial commutators:
\bea
\label{equation:bargmann algebra}
\begin{tikzcd}
\br{\tK}{H}= \tP&\br{\tK}{\tP}= M
\end{tikzcd}
\eea
and $\alh$ the subalgebra \eqref{equation:aristotelian subalgebra}. The resulting Klein pair is reductive, flat, symmetric and effective and can be realised as the isometry algebra of the flat bargmann manifold.
Note that the generator $M$ belongs to the center of $\alg$, hence the Lie algebra extension encoded by the following short exact sequence of Lie algebras:
\[
\begin{tikzcd}
0 \arrow[r] & \Span M  \arrow[r, hook, "i"] & \alg \arrow[r, two heads,"\pi"]        & \alg_0\arrow[r]  & 0
\end{tikzcd}
\]
where $\alg_0$ stands for the galilei algebra \hyperlink{Table:K3}{$\mathsf{K3}$}, is central.
\end{Example}

The various classes of Klein pairs used in this work are summed up in Table \ref{Table:Summary of Klein pairs}.

\begin{table}[ht]
\centering
\resizebox{14cm}{!}{
\begin{tabular}{c|l|l|l|l|c|c}
\multicolumn{1}{c|}{\textbf{Dimension of $\alg/\alh$}}&\multicolumn{1}{c|}{\textbf{Label}}&\multicolumn{1}{c|}{\textbf{Example}}&\multicolumn{1}{c|}{$\alg$}&\multicolumn{1}{c|}{$\alh$}&\multicolumn{1}{c}{$\dim\alg$}&\multicolumn{1}{c}{$\dim\alh$}\\\hline
\multirow{3}{*}{$d+2$}&ambient kinematical&leibniz&$M\oplus H\oplus C\oplus \tP\oplus\tD\oplus\tK\oplus\tJ$&$C\oplus \tD\oplus \tK\oplus\tJ$&$\frac{(d+2)(d+3)}{2}$&$\frac{(d+1)(d+2)}{2}$\\
&\cellcolor{Gray}ambient aristotelian&\cellcolor{Gray}bargmann&\cellcolor{Gray}$M\oplus H\oplus \tP\oplus\tK\oplus\tJ$&\cellcolor{Gray}$\tK\oplus\tJ$&\cellcolor{Gray}$\frac{(d+1)(d+2)}{2}+1$&\cellcolor{Gray}$\frac{d(d+1)}{2}$\\
&lifshitzian&lifshitz&$M\oplus H\oplus\tP\oplus\tJ$&$\tJ$&$\frac{d(d+1)}{2}+2$&$\frac{d(d-1)}{2}$\\
\hline
\multirow{2}{*}{$d+1$}&\cellcolor{Gray}kinematical&\cellcolor{Gray}galilei&\cellcolor{Gray}$H\oplus \tP\oplus\tK\oplus\tJ$&\cellcolor{Gray}$\tK\oplus\tJ$&\cellcolor{Gray}$\frac{(d+1)(d+2)}{2}$&\cellcolor{Gray}$\frac{d(d+1)}{2}$\\
&aristotelian&static&$H\oplus\tP\oplus\tJ$&$\tJ$&$\frac{d(d+1)}{2}+1$&$\frac{d(d-1)}{2}$
  \end{tabular}
   }
        \caption{Summary of Klein pairs}
      \label{Table:Summary of Klein pairs}
\end{table}

\paragraph{Ambient metric structures on Klein pairs}

Definition \ref{Definition:Metric structures on Klein pairs} above reviewed four possible metric structures that can live on Klein pairs. The latter metric structures allowed to partition effective kinematical Klein pairs into four classes (\ie galilean, carrollian, riemannian and lorentzian). We now enrich this list with a novel notion of metric structure on Klein pairs that will be proved to live on a particular subclass of ambient kinematical Klein pairs \UnskipRef{Proposition:Leibnizian structure on Klein pairs}.

\begin{Definition}[Leibnizian metric structure on Klein pairs]
\label{Definition:leibnizian metric structures on Klein pairs}
A Klein pair $(\alg,\alh)$ will be said leibnizian if $\gh$ is endowed with a triplet $(\bs{\xi},\bs{\psi},\bs{\gamma})$ where:
\be
\item $\bs{\xi}\in\gh$ is a non-zero $\ad_\alh$-invariant vector
\item $\bs{{\psi}}\in(\gh)^*$ is a non-zero $\ad_\alh$-invariant linear form satisfying $\bs{{\psi}}(\bs{{\xi}})=0$
\item $\bs{\gamma}\in\big((\Ker\bs{\psi})^*\big){}^{\otimes2}$ is a symmetric, positive semi-definite and $\ad_\alh$-invariant bilinear form whose radical is spanned by $\bs{\xi}$ \ie $\bs{\gamma}(v,w)=0$ for all $w\in\Ker\bs{\psi}$ $\Leftrightarrow$ $v\sim\bs{\xi}$.
\ee
\end{Definition}

\begin{Remark}
At the risk of appearing relatively unsurprising considering the chosen terminology, let us note nevertheless that the leibniz Klein pair \UnskipRef{Example:Leibniz Klein pair} is leibnizian, when endowed with the canonical leibnizian metric structure\footnote{\label{footnote:leibnizian criteria}More generally, an ambient kinematical Klein pair can be shown to preserve the canonical leibnizian metric structure \eqref{equation:leibniz metric structure on leibniz Klein pair} if and only if the following commutators vanish:
 \[
 \begin{tikzcd}[row sep = small,column sep = small,ampersand replacement=\&]
\ad_\gh(\bs{\xi})=0\&\br{\tD}{M}\sim\tP\&\br{\tK}{M}\sim\tP\&\br{C}{M}\sim M\&\br{C}{M}\sim H\\
\ad_\gh(\bs{\psi})=0\&\br{\tD}{\tP}\sim H\&\br{\tK}{\tP}\sim H\&\br{C}{M}\sim H\&\br{C}{H}\sim H\\
\ad_\gh(\bs{\gamma})=0\&\br{\tD}{M}\sim \tP\&\br{\tK}{M}\sim \tP\&\br{C}{\tP}\sim \tP\, .\&
\end{tikzcd}
 \]
}:
\bea
\label{equation:leibniz metric structure on leibniz Klein pair}
\begin{tikzcd}[row sep = small,column sep = small,ampersand replacement=\&]
\bs{\xi}=M\&\bs{\psi}=-H^*\&\bs{\gamma}=\tP^*\otimes \tP^*\, .
\end{tikzcd}
\eea
The holonomy principle thus ensures that base manifolds of Cartan geometries modelled on the leibniz Klein pair are leibnizian manifolds \cite{Bekaert2015b,Morand2023}.
\end{Remark}

As previously reviewed, adding supplementary structures to metric structures living on kinematical Klein pairs \UnskipRef{Definition:Metric structures on Klein pairs} yields a notion of aristotelian metric structure \UnskipRef{Definition:aristotelian metric structures on Klein pairs} which naturally lives on aristotelian Klein pairs [\cf Table \ref{Table:Metric structure on aristotelian Klein pairs}]. We now mimic this construction in the ambient case by displaying three subclasses of leibnizian Klein pairs, each obtained by requiring the existence of an extra compatible and $\ad_\alh$-invariant structure on $\gh$:

\begin{Definition}[Ambient aristotelian metric structures on Klein pairs]
\label{Definition:Ambient aristotelian metric structures on Klein pairs}
A leibnizian Klein pair $(\alg,\alh, \bs{\xi},\bs{\psi},\bs{\gamma})$ will be said:
\begin{description}
\item[$\mathsf G$-ambient] if $\gh$ is endowed with an $\ad_\alh$-invariant linear form $\bs{{A}}\in(\gh)^*$ satisfying $\bs{{A}}(\bs{{\xi}})=1$.
\item[$\mathsf C$-ambient] if $\gh$ is endowed with an $\ad_\alh$-invariant vector $\bs{{N}}\in\gh$ satisfying $\bs{{\psi}}(\bs{{N}})=1$.
\item[bargmannian] if $\gh$ is endowed with  an $\ad_\alh$-invariant bilinear form $\bs{g}\in{\vee^2 (\gh)^*}$ satisfying $\bs{g}(\bs{\xi},-)=\bs{\psi}$ and $\bs{g}|_{\Ker\bs{\psi}}=\bs{\gamma}$.\footnote{The second condition reads more explicitly $\bs{g}(v,w)=\bs{\gamma}(v,w)$ for all $v,w\in\Ker\bs{\psi}$.}
\end{description}
\end{Definition}
The intersection between any two of these three classes\footnote{That from the intersection of any two of these three classes, one can construct an instance of the third one can be seen as follows:
\bi
\item Given a bargmannian structure $\bs{g}$ and a $\mathsf C$-ambient structure $\bs{N}$, one defines a $\mathsf G$-ambient structure $\bs{A}:=\bs{g}(\bs{N},-)$.
\item Given a bargmannian structure $\bs{g}$ and a $\mathsf G$-ambient structure $\bs{A}$, one can create a $\mathsf C$-ambient structure $\bs{N}:=\bs{g}\un(\bs{A},-)$.
\item Given a $\mathsf G$-ambient structure $\bs{A}$ and a $\mathsf C$-ambient structure $\bs{N}$, one can construct a bargmannian structure $\bs{g}:=\overset{\bs{N}}{\bs{\gamma}}+\bs{\psi}\otimes \bs{A}+\bs{A}\otimes\bs{\psi}$, where $\overset{\bs{N}}{\bs{\gamma}}$ is the $\ad_\alh$-invariant bilinear form defined in the proof of Proposition \ref{Proposition:Equivalence definitions aristotelian metric structures}. 
\ei
} coincides with the notion of lifshitzian metric structure \UnskipRef{Definition:Lifshitzian metric structures on Klein pairs}, see Figure \ref{Figure:Hierarchy of ambient geometries}.\footnote{Note that the domain at the intersection of any two disks which is not contained in lifshitz is empty.}

  \definecolor{antiquewhite}{rgb}{0.98, 0.92, 0.84}
\begin{center}
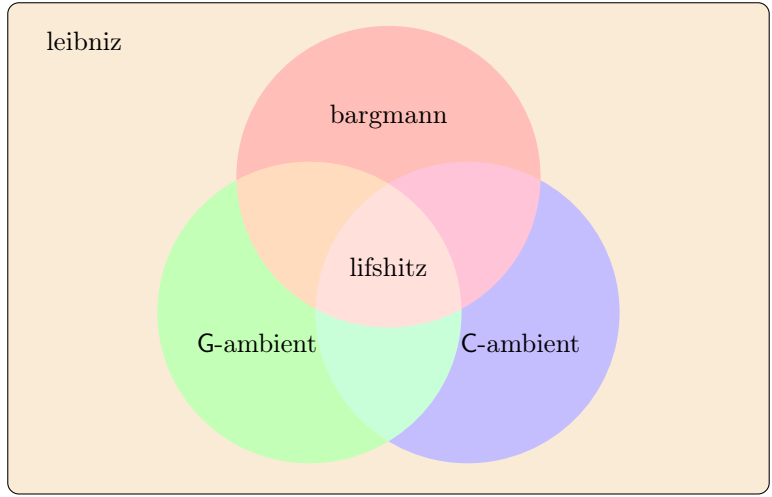

\begin{tikzpicture}
\draw[rounded corners,thick] (-5,-3) rectangle (5,3.5) ;
  \begin{scope}[blend group = soft light]
    \fill[red!40!white]   ( 90:1.2) circle (2);
    \fill[green!40!white] (210:1.2) circle (2);
    \fill[blue!40!white]  (330:1.2) circle (2);
        \fill[antiquewhite] [rounded corners,thick]   (-5,-3) rectangle (5,3.5);  
  \end{scope}
  \node at ( 90:2)  {bargmann};
   \node at ( 210:2) {$\mathsf G$-ambient};
    \node at ( 330:2)  {$\mathsf C$-ambient};
    \node {lifshitz};

    \node at ( -4,3)  {leibniz};
\end{tikzpicture}
      \captionof{figure}{Hierarchy of ambient geometries}
      \label{Figure:Hierarchy of ambient geometries}
\end{center}

\begin{Example}[Ambient aristotelian Klein pairs]
\label{Example:Ambient aristotelian Klein pairs}
Starting from the leibniz Klein pair \UnskipRef{Example:Leibniz Klein pair}, we consider the following subalgebras:
\be
\item The subalgebra spanned by $M\oplus H\oplus\tP\oplus\tK\oplus\tJ$ is isomorphic to the trivial extension of the \emph{galilei} algebra $\hyperlink{Table:K3}{\mathsf{K3}}\oplus\mathbb R$ and can be shown to induce a $\mathsf G$-ambient Klein pair, with $\bs{A}=M^*$.
\item The subalgebra spanned by $M\oplus H\oplus\tD\oplus\tP\oplus\tJ$ is isomorphic to the trivial extension of the \emph{carroll} algebra $\hyperlink{Table:K9}{\mathsf{K9}}\oplus\mathbb R$ and induces a $\mathsf C$-ambient Klein pair, with $\bs{N}=-H$.
\item The subalgebra spanned by $M\oplus H\oplus\tilde\tK\oplus\tP\oplus\tJ$, where $\tilde\tK=\tK+\tD$\footnote{\label{footnote:alternative form of leibniz algebra}Performing this change of basis, the non-trivial commutators of the leibniz algebra take the alternative form:
 \[
 \begin{tikzcd}[row sep = small,column sep = small,ampersand replacement=\&]
\br{\tD}{\tilde\tK}= C\&\br{\tilde\tK}{H}=\tP\&\br{\tD}{\tP}= M\&\br{\tilde\tK}{\tP}= M\&\br{C}{H}= M\, .
\end{tikzcd}
 \]
}, is isomorphic to the \emph{bargmann} algebra \UnskipRef{Example:Bargmann Klein pair} which induces a bargmannian Klein pair, with $\bs{g}=-M^*\otimes H^*-H^*\otimes M^*+\tP^*\otimes \tP^*$.
\ee
Finally, the subalgebra spanned by $M\oplus H\oplus\tP\oplus\tJ$ induces a lifshitzian Klein pair, isomorphic to $\mathfrak{iso}(d)\oplus \mathbb R^2$ [\cf  \hyperlink{Table:Li1}{$\mathsf{Li1}$} in Table \ref{Table:Lifshitzian Klein pairs}]. 
\end{Example}

\subsection{Lifting kinematical Klein pairs}
\label{section:Ambient approach to Klein pairs}

The present section aims at formalising the algebraic structures underlying the geometric ambient approach of Duval--Eisenhart, the main objective being to justify the notion of \emph{projectable ambient triplets} \UnskipRef{Definition:Projectable ambient triplets} to be introduced and classified 
in the forthcoming sections. We begin by collecting elementary facts regarding ideals of Klein pairs:
\bi
\item Let $(\alg,\alh)$ be a Klein pair. A \emph{sub-Klein pair} of $(\alg,\alh)$ is a Klein pair $(\ali,\alj)$ such that $\ali\subset\alg$ is a subalgebra of $\alg$ and $\alj\subset\alh$ is a subalgebra of $\alh$.
\item The \emph{quotient of a Klein pair} $(\alg,\alh)$ by an ideal $\ali\subset\alg$ is the Klein pair defined as the quotient  $(\alg/\ali,\alh/\alj)$ where $\alj$ is the ideal of $\alh$ defined as the intersection  $\alj=\ali\cap\alh$. The sequence:
\[
\begin{tikzcd}
0 \arrow[r] & (\ali,\alj)  \arrow[r, hook, "i"] & (\alg,\alh) \arrow[r, two heads,"\pi"]        & (\alg/\ali,\alh/\alj)\arrow[r]  & 0
\end{tikzcd}
\]
is thus an exact sequence of Klein pairs.
\ei

We are now in position to articulate the main notion of the present section:
\begin{Definition}[Projectable triplet]
\label{Definition:Projectable triplet}
An effective Klein pair $(\alg,\alh)$ is said to be \emph{projectable} if there exists a non-trivial ideal $\ali\subset\alg$ such that the quotient Klein pair $(\alg/\ali,\alh/\alj)$ is effective, where $\alj$ denotes the ideal of $\alh$ defined as $\alj:=\alh\cap \ali$. The triplet $(\alg,\alh,\ali)$ is hence referred to as a \emph{projectable triplet} and $(\alg,\alh)$ is said to project on $(\alg/\ali,\alh/\alj)$.
Conversely, an effective Klein pair $(\bar\alg,\bar\alh)$ for which there exists a projectable triplet $(\alg,\alh,\ali)$ such that $\bar \alg\simeq\alg/\ali$ and $\bar\alh\simeq\alh/\alj$ [with $\alj:=\alh\cap \ali$] is said to admit a \emph{lift} into $(\alg,\alh,\ali)$.

\end{Definition}
\begin{Remark}
\hfill
\bi
\item The assumption that $(\alg,\alh)$ is effective ensures that $\alh$ does not admit any non-trivial ideal, hence $\ali$ is not contained in $\alh$ [\ie $\ali \not\subset \alh$] and $\alj$ is a proper subset of $\ali$ [\ie $\alj\subset\ali$]. This ensures in particular that $\dim\alj<\dim \ali$, hence the quotient spaces $\alg/\alh$ and $(\alg/\ali)/(\alh/\alj)$ satisfy $\dim \alg/\alh> \dim (\alg/\ali)/(\alh/\alj)$, thus foreshadowing the geometric interpretation of this quotient procedure as dimensional reduction. 
\item The following diagram of vector spaces commutes:
\[
\begin{tikzcd}
            & 0 \arrow[d]                                    & 0 \arrow[d]                                    & 0 \arrow[d]                                 &   \\
0 \arrow[r] & \alj \arrow[d, hook] \arrow[r, hook]           & \ali \arrow[d, hook] \arrow[r, two heads]      & \ali/\alj \arrow[r] \arrow[d, hook]         & 0 \\
0 \arrow[r] & \alh \arrow[r, hook] \arrow[d, two heads]      & \alg \arrow[r, two heads] \arrow[d, two heads] & \alg/\alh \arrow[r] \arrow[d, two heads]    & 0 \\
0 \arrow[r] & \alh/\alj \arrow[d] \arrow[r, hook] & \alg/\ali \arrow[r, two heads] \arrow[d]       & (\alg/\ali)/(\alh/\alj) \arrow[r] \arrow[d] & 0 \\
            & 0                                              & 0                                              & 0                                           &  
\end{tikzcd}
\]
\item Whenever $(\alg,\alh,\ali)$ is a projectable triplet, the two left-most columns are sequences of Lie algebras.
\ei
\end{Remark}

The subsequent pair of examples provides two motivating instances of projectable Klein pairs, based on ambient kinematical Klein pairs \UnskipRef{Definition:Ambient kinematical Klein pairs} and ambient aristotelian Klein pairs \UnskipRef{Definition:Ambient aristotelian Klein pairs}, respectively. 

\begin{Example}[Leibniz projectable triplet]
\label{Example:Leibniz projectable triplet}
Let $(\alg,\alh)$ be the leibniz Klein pair \UnskipRef{Example:Leibniz Klein pair}.
Defining the ideal $\ali=M\oplus C\oplus \tD$, so that $\alj=C\oplus \tD$, the triplet $(\alg,\alh,\ali)$ is a projectable triplet, with quotient $(\alg/\ali,\alh/\alj)$ isomorphic to the galilei Klein pair \hyperlink{Table:E1}{$\mathsf{E1}$}.
\end{Example}

\begin{Example}[Bargmann projectable triplet]
\label{Example:Bargmann projectable triplet}
Let $(\alg,\alh)$ be the bargmann Klein pair \UnskipRef{Example:Bargmann Klein pair}.
Defining the ideal $\ali=M$, so that $\alj=\varnothing$, the triplet $(\alg,\alh,\ali)$ is a projectable triplet, with $(\alg/\ali,\alh/\alj)$ isomorphic to the galilei Klein pair \hyperlink{Table:E1}{$\mathsf{E1}$}.
\end{Example}
From the above examples, one can conclude that the galilei Klein pair \hyperlink{Table:E1}{$\mathsf{E1}$} can be lifted either to the leibniz Klein pair \UnskipRef{Example:Leibniz Klein pair} or to the bargmann Klein pair \UnskipRef{Example:Bargmann Klein pair}. Such statements can be understood as the algebraic counterparts of the fact that the flat galilei manifold [whose isometry algebra identifies with the galilei algebra \hyperlink{Table:K3}{$\mathsf{K3}$}] can be lifted either to the flat leibniz manifold \cite{Bekaert2015b,Morand2023} or to the flat bargmann manifold \cite{Duval1985} [see \eg \cite{Morand2018} for a review], whose isometry algebras identify with the leibniz \eqref{equation:leibniz algebra} and bargmann \eqref{equation:bargmann algebra} algebras, respectively. [These two constructions are in fact related since the leibnizian manifold underlying\footnote{Recall that any bargmannian manifold $(\M,\xi,g,\nabla)$ induces a leibnizian manifold $(\M,\xi,\psi,\gamma,\nabla)$  upon the identification \cite{Bekaert2015b}:
\[
 \begin{tikzcd}[row sep = small,column sep = small,ampersand replacement=\&]
{\psi}:=  g(\xi,-)\&\gamma:=  g|_{\Ker{\psi}}
\end{tikzcd}
\]
where the second definition reads more explicitly $\gamma(V,W):= g(V,W)$ for all $V,W\in\Milnegh$.} the flat bargmann manifold is precisely the flat leibniz manifold, see the companion paper  \cite{Morand2023} for details.] 

The two previous examples offer as many alternative paths for extending the ambient approach: either following the ambient aristotelian path, with the promise to meet bargmann along the way, or the ambient kinematical road, with leibniz as a prominent signpost. At this point in our journey, we shall elect the path marked with a single letter $M$ and pursue our classificatory ambition in the ambient aristotelian direction, before coming back to the ambient kinematical path (actually via a shortcut) in Section \ref{section:Drink me}.
\section{Bargmann and his modern rivals: classifying ambient kinematics}
\label{section:Bargmann and his modern rivals}
As mentioned in the introduction, generalisations of the bargmannian projectable triplet \UnskipRef{Example:Bargmann projectable triplet} have recently been exhibited and studied in \cite{FigueroaOFarrill2022,FigueroaOFarrill2022e} where it was in particular shown that these bargmannian \UnskipRef{Definition:Ambient aristotelian metric structures on Klein pairs} Klein pairs allowed to lift all galilean \UnskipRef{Definition:Metric structures on Klein pairs} kinematical Klein pairs [see Table \ref{Table:Effective kinematical Klein pairs}].

In broad strokes, such generalisations can be characterised as lying at the intersection of projectable triplets \UnskipRef{Definition:Projectable triplet} and ambient aristotelian Klein pairs \UnskipRef{Definition:Ambient aristotelian Klein pairs}. We will call such intersection \emph{projectable ambient triplets}, which we claim constitutes the natural algebraic abstraction underlying the usual lift of nonrelativistic structures into codimension one higher `ambient' geometries. The aim of the present section (and main objective of the present paper) consists precisely in classifying such projectable ambient triplets, the results being collected in Table \ref{Table:Projectable ambient triplets}. 

\begin{Definition}[Projectable ambient triplets]
\label{Definition:Projectable ambient triplets}
A projectable ambient triplet $(\alg,\alh,\ali)$ is composed of the following elements:
\be
\item $(\alg,\alh)$ is an effective ambient aristotelian Klein pair \UnskipRef{Definition:Ambient aristotelian Klein pairs}.
\item $\ali\subset\alg$ is a scalar ideal\footnote{In other words, $\ali\subset\alg$ is a non-trivial ideal of type $(1,0,0)$ \UnskipRef{Definition:Subspace of type}. In the following, we will denote $\ali=\Span M$.} in $\alg$ such that the quotient Klein pair $(\alg/\ali,\alh)$ is effective.
\ee
\end{Definition}

A morphism of projectable ambient triplets $\varphi:(\alg_1,\alh_1,\ali_1)\to(\alg_2,\alh_2,\ali_2)$ is a homomorphism of Lie algebras $\varphi\From\alg_1\to\alg_2$ preserving the corresponding $\so(d)$-decomposition and such that $\varphi(\alh_1)\subseteq\alh_2$ and $\varphi(\ali_1)\subseteq\ali_2$. 
\medskip

The motivating example of projectable ambient triplets is given by the bargmannian Klein pair \UnskipRef{Example:Bargmann projectable triplet}, as well as its generalisations described in  \cite{FigueroaOFarrill2022,FigueroaOFarrill2022e}.
Our main result consists in integrating these examples in a classification of universal projectable ambient triplets. 
\begin{Theorem}[Possible ambient kinematics]
\label{Theorem:Classification of universal projectable ambient triplets}
Universal projectable ambient triplets are classified in {\normalfont Table \ref{Table:Projectable ambient triplets}}. 
\end{Theorem}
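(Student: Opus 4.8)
The plan is to prove the theorem by explicit classification, in the three stages announced in the Introduction, working throughout in the category of $\so(d)$-modules with universal algebras only. \emph{Stage (i): classify the ambient aristotelian algebras admitting a scalar ideal.} Writing $\alg=M\oplus H\oplus\tP\oplus\tK\oplus\tJ$ as in Definition~\ref{Definition:Ambient aristotelian algebra}, the condition that $\ali=\Span M$ be an ideal is rigid: $\br{\tP}{M}$ and $\br{\tK}{M}$ are $\so(d)$-vectors, hence vanish; $\br{\tJ}{M}=0$ by definition; and $\br{H}{M}$, being a scalar, must equal $\mu\,M$ for some $\mu\in\mathbb R$. Thus $\alg$ sits in a short exact sequence $0\to\Span M\to\alg\to\bar\alg\to 0$ with $\bar\alg:=\alg/\ali=H\oplus\tP\oplus\tK\oplus\tJ$ a kinematical algebra and $\Span M$ the one-dimensional $\bar\alg$-module on which only $H$ acts, by $\mu$; this action is a Lie morphism iff $\mu\,[\bar\alg,\bar\alg]\subseteq\ker H^{*}$. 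I would take the base $\bar\alg$ to range over the universal kinematical algebras $\mathsf{K1}$--$\mathsf{K14}$ of Table~\ref{Table:Universal kinematical algebras} and, for each, solve for the Jacobi-consistent extensions; the key simplification is that the only $\so(d)$-equivariant $2$-cocycle into the trivial module surviving universality is the trace part of $\br{\tK}{\tP}$, so the extension data collapses to two numbers $(\mu,\sigma)$ with $\br{K_i}{P_j}\supseteq\sigma\,\delta_{ij}M$, and the residual Jacobi identities become a finite polynomial system in $(\mu,\sigma)$ and the structure constants of $\bar\alg$. Solving it and quotienting by $\so(d)$-preserving automorphisms of $\bar\alg$ and by the rescaling $M\mapsto\lambda M$ produces Table~\ref{Table:Ambient aristotelian algebras admitting a scalar ideal}, whose galilean and non-galilean branches are then to be matched against the partial classifications of \cite{FigueroaOFarrill2022,FigueroaOFarrill2022e} and \cite{FigueroaOFarrill2022b,FigueroaOFarrill2023}.

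\emph{Stages (ii)--(iii): Klein pairs, projectability, and the fivefold partition.} For each algebra of Stage (i) the subalgebra is $\alh=\tK\oplus\tJ$ — a subalgebra indeed, since $\br{\tK}{\tK}$ has no scalar component for universal algebras — so one gets an ambient aristotelian Klein pair which, by the criterion of footnote~\ref{footnote:effectiveness}, is effective iff $\tK$ does not span an ideal of $\alg$; the effective ones form Table~\ref{Table:Effective ambient aristotelian Klein pairs with scalar ideal}, and for the rest one passes to the associated effective Klein pair \UnskipRef{Definition:Infinitesimal Klein pair}, whose underlying algebra is lifshitzian, recovering Table~\ref{Table:Effective lifshitzian Klein pairs associated with ambient aristotelian algebras}. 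For $(\alg,\alh,\ali)$ to be a projectable ambient triplet one needs moreover that $(\alg/\ali,\alh)$ be effective; since $\alj=\alh\cap\ali=0$ this says precisely that the kinematical Klein pair $(\bar\alg,\tK\oplus\tJ)$ is one of $\mathsf{E1}$--$\mathsf{E16}$ of Table~\ref{Table:Effective kinematical Klein pairs}, and conversely that condition already forces $(\alg,\alh)$ to be effective, for any ideal of $\alg$ contained in $\alh$ projects to an ideal of $\bar\alg$ contained in $\tK\oplus\tJ$, hence is trivial and so lies in $\Span M\cap\alh=0$. Intersecting the Stage-(i) list with this constraint yields exactly Table~\ref{Table:Projectable ambient triplets}. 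The fivefold partition by ambient metric structure \UnskipRef{Definition:Ambient aristotelian metric structures on Klein pairs} then drops out: every non-galilean entry of Table~\ref{Table:Effective kinematical Klein pairs} has $\br{\tK}{\tP}=H$, so $H\in[\bar\alg,\bar\alg]$ forces $\mu=0$, and the $\sigma$-term is then just a redefinition of $H$, leaving only the direct sum $\bar\alg\oplus\mathbb R$ — three classes, riemannian, lorentzian and carrollian — whereas for the galilean entries $\br{\tK}{\tP}$ has no $H$-component, so $\sigma$ is a genuine central-type deformation and $\mu$ is unconstrained by the action condition, producing both a non-trivial bargmannian lift and the one-parameter $\mathsf G$-ambient family $\br{H}{M}=\lambda M$, consistently with the euclidean/lorentzian newton example of the Introduction.

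\emph{Main obstacle.} The weight of the proof is in Stage (i): the Jacobi analysis across all fourteen base algebras, plus the bookkeeping of isomorphism classes — which values of $(\mu,\sigma)$ are identified by $\so(d)$-preserving automorphisms of $\bar\alg$ or by $M\mapsto\lambda M$, and which apparent solutions degenerate in low spatial dimension and must be dropped to keep universality. The reduction to the two parameters $(\mu,\sigma)$ renders each individual case routine, but completeness and non-redundancy of Tables~\ref{Table:Ambient aristotelian algebras admitting a scalar ideal} and~\ref{Table:Projectable ambient triplets} rest entirely on carrying this enumeration out exhaustively.
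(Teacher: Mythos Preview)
Your three-stage architecture matches the paper's, and your Stage~(i) via extensions of the quotient kinematical algebra $\bar\alg$ by the two parameters $(\mu,\sigma)$ is a legitimate alternative to the paper's direct Jacobi analysis followed by Frobenius normal forms and $\SO^+(1,2)$-orbits. Your observation that effectiveness of $(\bar\alg,\alh)$ already forces effectiveness of $(\alg,\alh)$ is correct and is a genuine shortcut the paper does not take.

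There is, however, a real gap in Stage~(ii). You write ``for each algebra of Stage~(i) the subalgebra is $\alh=\tK\oplus\tJ$'', as if the Klein pair were determined by the pair $(\alg,\ali)$. It is not: the datum $\alh$ is an additional choice of a vectorial line $\tX=x\,\tK+y\,\tP$, and two such choices give isomorphic \emph{triplets} only if they lie in the same orbit of $\Aut(\alg,\ali)\cap\GLR{2}$ on $\mathbb R^2\setminus\{0\}$. The paper's Section~6.2 carries out precisely this orbit analysis and finds between one and three inequivalent $\alh$'s per algebra. Concretely, your approach as written would produce only one triplet over $\hyperlink{Table:AK10}{\mathsf{AK10}}=\mathfrak{iso}(d{+}1)\oplus\mathbb R$ --- whichever of $\text{ds carroll}\oplus\mathbb R$ or $\text{euclidean}\oplus\mathbb R$ happens to correspond to the $\tK$ in your chosen presentation --- and miss the other; over $\hyperlink{Table:AK12}{\mathsf{AK12}}=\so(d{+}1,1)\oplus\mathbb R$ you would recover only one of hyperbolic, de~sitter, or carroll light cone. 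In the first branch the same issue bites the other way: the Table~\ref{Table:Ambient aristotelian algebras admitting a scalar ideal} presentation of $\hyperlink{Table:AK3}{\mathsf{AK3}}$ has $\br{H}{\tP}=\tK$, so $(\mathsf{AK3},\tK\oplus\tJ)$ is non-effective and you would discard it, never reaching the $\mathsf G$-ambient pair \hyperlink{Table:AE1}{$\mathsf{AE1}$}.

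The cleanest repair, close to what you seem to intend in your final paragraph, is to run Stage~(i) not over the kinematical \emph{algebras} $\mathsf{K1}$--$\mathsf{K14}$ but over the effective kinematical \emph{Klein pairs} $\mathsf{E1}$--$\mathsf{E16}$ of Table~\ref{Table:Effective kinematical Klein pairs}: for each $(\bar\alg,\bar\alh)$, classify extensions by $(\mu,\sigma)$ and quotient by those automorphisms of $\bar\alg$ \emph{that preserve $\bar\alh$}, together with $M\mapsto\lambda M$ and $H\mapsto H+\chi M$. Your effectiveness observation then makes Stages~(ii)--(iii) automatic. This ``quotient-first'' route is a genuine alternative to the paper's ``total-space-first'' route and is arguably more economical; but as you have stated it, ranging over algebras and then fixing a single $\alh$, the enumeration is incomplete.
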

The proof of Theorem \ref{Theorem:Classification of universal projectable ambient triplets} will proceed in three steps:
\be
\item We start by classifying the pairs $(\alg,\ali)$ where $\alg$ is an ambient aristotelian algebra \UnskipRef{Definition:Ambient aristotelian algebra} and $\ali\subset\alg$ a scalar ideal [\cf Table \ref{Table:Ambient aristotelian algebras admitting a scalar ideal}].
\item For each of the previously classified algebras $\alg$, we 
 classify the corresponding effective Klein pairs $(\alg,\alh)$ [\cf Table \ref{Table:Effective ambient aristotelian Klein pairs with scalar ideal}].
\item For each of the previously classified effective Klein pairs $(\alg,\alh)$, we classify the associated scalar ideals $\ali$ such that $(\alg/\ali,\alh)$ is effective. The output of the procedure is a classification of all projectable ambient triplets [\cf Table \ref{Table:Projectable ambient triplets}].
\ee
\subsection{Ambient aristotelian algebras}
\label{section:Ambient aristotelian algebras admitting a scalar ideal}

We start by classifying the ambient aristotelian algebras 
\bea
\alg=M\oplus H\oplus \tP\oplus\tK\oplus\tJ\nn
\eea
admitting a scalar ideal $\ali=\Span M$.
Adapted bases for $\alg$ take the form $(M, H, \tK,\tP,\tJ)$. The set of adapted bases is a principal homogeneous space \SectionRef{section:A Tangled Tale: (s,v)-Lie algebras} for $\GLR{2}\times\GLR{2}\subset\GLR{\dim \alg}$ where $\dim\alg=\frac{(d+1)(d+2)}{2}+1$. The subgroup preserving $\ali$ is given by $\TR{2}\times\GLR{2}$, where $\TR{2}=(\GroupR)^{\times 2}\ltimes\mathbb R$ stands for the subgroup of invertible lower triangular 2-dimensional matrices, acting on generators as:
\bea
\label{equation:subgroup}
\begin{pmatrix}
M\\
H
\end{pmatrix}
\mapsto
\underbrace{\begin{pmatrix}
\mu&0\\
\chi&\lambda
\end{pmatrix}}_{\mathsf{R}}
\cdot
\begin{pmatrix}
M\\
H
\end{pmatrix}
\text{ and }
\begin{pmatrix}
\tK\\
\tP
\end{pmatrix}
\mapsto
\underbrace{\begin{pmatrix}
\alpha&\beta\\
\gamma&\delta
\end{pmatrix}}_{\mathsf{G}}
\cdot \begin{pmatrix}
\tK\\
\tP
\end{pmatrix}
\text{ where }\mathsf{G}\in\GLR{2}\text{ and }\mathsf{R}\in\TR{2}\, .
\eea

The possible non-trivial commutators compatible with the $\so(d)$ structure of $\alg$ read as:\footnote{Explicitly, the condition that $\ali$ is an ideal imposes that all commutators of the form $\br{\alg}{M}\not\subset M$ vanish.}
\begin{equation}
\begin{tikzcd}[row sep = small,column sep = small]
\br{H}{\tK}= \alt_1 \tK+\alt_2 \tP&\br{H}{\tP}=\alt_3 \tK+\alt_4 \tP&\br{H}{M}= \alt_5 M\\
\br{\tK}{\tP}= \alt_6 H+ \alt_7 M+\alt_8 \tJ&\br{\tK}{\tK}= \alt_9 \tJ&\br{\tP}{\tP}=\alt_{10} \tJ\, .\label{equation:commutators row}
\end{tikzcd}
\end{equation}
We need to determine the space of solutions of the corresponding Jacobi equation and quotient by the group $\TR{2}\times\GLR{2}$ \eqref{equation:subgroup}.
For each orbit, we will explicit a particular representative [with respect to an adapted basis $(M, H, \tK,\tP,\tJ)$] to be collected in Table \ref{Table:Ambient aristotelian algebras admitting a scalar ideal}. The non-trivial values of the corresponding Jacobiator $\mathsf{J}(x,y,z):=\br{x}{\br{y}{z}}+\br{y}{\br{z}{x}}+\br{z}{\br{x}{y}}$ take the following form:
\bi
\item $\mathsf{J}(M,P_i,K_j)=\delta_{ij}M\alt_5\alt_6$
\item $\mathsf{J}(H,K_i,K_j)=-2J_{ij}(\alt_2\alt_8+\alt_1\alt_9)$
\item $\mathsf{J}(H,P_i,P_j)=-2J_{ij}(\alt_4\alt_{10}+\alt_3\alt_8)$
\item $\mathsf{J}(P_i,P_j,K_k)=(\delta_{jk}P_i-\delta_{ik}P_j)(\alt_8+\alt_4\alt_6)-(\delta_{jk}K_i-\delta_{ik}K_j)(\alt_{10}-\alt_3\alt_6)$
\item $\mathsf{J}(P_i,K_j,K_k)=(\delta_{ij}P_k-\delta_{ik}P_j)(\alt_9+\alt_2\alt_6)-(\delta_{ij}K_k-\delta_{ik}K_j)(\alt_8-\alt_1\alt_6)$
\item $\mathsf{J}(H,P_i,K_j)=\delta_{ij}M\alt_7(\alt_1+\alt_4-\alt_5)+\delta_{ij}H\alt_6(\alt_1+\alt_4)-J_{ij}(\alt_8(\alt_1+\alt_4)+\alt_2\alt_{10}+\alt_3\alt_9)$.
\ei
The corresponding system gets simplified to:
\[
\begin{tikzcd}[row sep = small,column sep = small]
&\alt_8=-\alt_4\alt_6&\alt_9=-\alt_2\alt_6&\alt_{10}=\alt_3\alt_6\\
&\alt_6(\alt_1+\alt_4)=0&\alt_5\alt_6=0
&\alt_7(\alt_1+\alt_4-\alt_5)=0\, .\nn
\end{tikzcd}
\]
The first row only contains equations involving linear terms, and hence can be interpreted as providing a definition for $\alt_8$, $\alt_9$ and $\alt_{10}$, definition to be substituted back in \eqref{equation:commutators row}. The second row involves only quadratic equations, whose solutions can be partitioned
into the three following branches:
\be
\item $\alt_6=\alt_7=0$
\item $\alt_6\neq0$, $\alt_5=0$, $\alt_1+\alt_4=0$
\item $\alt_6=0$, $\alt_7\neq0$, $\alt_1+\alt_4-\alt_5=0$.
\ee
Let us address those three branches in turn and classify the corresponding ambient aristotelian algebras $\alg$. 

\paragraph{First branch ($\alt_6=\alt_7=0$)}
\hfill

\medskip
Imposing $\alt_6=\alt_7=0$ in \eqref{equation:commutators row}, the remaining non-trivial commutators can be arranged as:
\[
\begin{tikzcd}[row sep = small,column sep = small]
\br{H}{\tK}= \alt_1 \tK+\alt_2 \tP&\br{H}{\tP}=\alt_3 \tK+\alt_4 \tP&\br{H}{M}= \alt_5 M\, .
\end{tikzcd}
\]
Performing a change of adapted basis \eqref{equation:subgroup} induces a representation $\TR{2}\times\GLR{2}\to\GLR{5}$ on the space of parameters $\mathbb R^5=\Spannn{\alt_1,\alt_2,\alt_3,\alt_4,\alt_5}$, whose kernel is given by elements of the form: 
\[
\begin{pmatrix}
\mu&0\\
\chi&1
\end{pmatrix}
\times
\begin{pmatrix}
\kappa&0\\
0&\kappa
\end{pmatrix}
\text{ where }
\mu,\kappa\in\GroupR
\text{ and }
\chi\in\mathbb R\, .
\]
Quotienting by the latter yields a faithful representation $\GroupR\times\PGLR{2}\to\GLR{5}$, where $\PGLR{2}$ stands for the \emph{projective general linear group} $\PGLR{2}=\GLR{2}/Z(\mathbb R^2)$, with $Z(\mathbb R^2)=\pset{\kappa\, \Id_2\, \big|\, \kappa\in\GroupR}$ the subgroup of nonzero scalar transformations of $\mathbb R^2$. Explicitly, denoting: 
\bea
\label{equation:matrix M}
\mathsf{M}=\begin{pmatrix}
\alt_1&\alt_2\\
\alt_3&\alt_4
\end{pmatrix}
\eea the induced representation acts on $\mathbb R^5$ as $\mathsf{M}\mapsto \lambda\, \mathsf{G}\, \mathsf{M}\, \mathsf{G}\un$ and $\alt_5\mapsto\lambda\, \alt_5$, with $\lambda\in\GroupR$ and $\mathsf{G}\in\PGLR{2}$.
Classifying algebras in the first branch thus boils down to classify similarity classes of 2-dimensional matrices $\mathsf{M}$, up to the rescaling induced by $\lambda$.
Recall that two matrices are similar [\ie belong to the same similarity class] if and only if they have same \emph{Frobenius normal form}. 
The Frobenius normal form $\mathsf{F}$ for 2-dimensional matrices is given by:
\bea
\mathsf{F}=\begin{cases}
\begin{pmatrix}
\alt_1&0\\
0&\alt_1
\end{pmatrix} \text{ if }\alt_2=\alt_3=0\text{ and } \alt_4=\alt_1\\
\nn\\
\begin{pmatrix}
0&-\text{det}\\
1&\text{tr}
\end{pmatrix} \text{ otherwise}
\end{cases}
\eea
where $\text{tr}=\alt_1+\alt_4$ and $\text{det}=\alt_1\alt_4-\alt_2\alt_3$. We start by considering the case for which $\mathsf{M}$ is proportional to the identity \ie $\mathsf{M}=\mathsf{F}=\alt_1 \Id_2$.
There are two subcases to consider corresponding to $\alt_1=0$ and $\alt_1\neq0$:
\bi
\item $\alt_1=0$
\hfill

The only non-trivial commutator is given by $\br{H}{M}= \alt_5 M$. There are then two subcases to examine:
\be
\item $\alt_5=0$: This case corresponds to the trivial extension of the static algebra $\hyperlink{Table:K1}{\mathsf{K1}}\oplus\mathbb R$  [\cf Table \ref{Table:Universal kinematical algebras}] with no additional commutators on top of the canonical ones \UnskipRef{Definition:Ambient aristotelian algebra}:
\mybox{WarmGray}{
$\hypertarget{Eq:AK1}{\hyperlink{Table:AK1}{\mathsf{AK1}}}$
}
{
\begin{tikzcd}[row sep = small,column sep = small,ampersand replacement=\&]
\varnothing
\end{tikzcd}
}
\item $\alt_5\neq0$: Performing a rescaling via $\lambda=\frac{1}{\alt_5}$ allows to set the only non-trivial commutator to the form: 
\mybox{WarmGray}{
$\hypertarget{Eq:AK1a}{\hyperlink{Table:AK1a}{\mathsf{AK1a}}}$
}
{
\begin{tikzcd}[row sep = small,column sep = small,ampersand replacement=\&]
\br{H}{M}=M
\end{tikzcd}
}

This corresponds to the semi-direct sum $\hyperlink{Table:K1}{\mathsf{K1}}\inplus\mathbb R$ where $H$ acts on $M$ by dilation.
\ee
\item $\alt_1\neq0$
\hfill

Rescaling via $\lambda=\frac{1}{\alt_1}$ allows to put $\mathsf{F}$ in the form $\mathsf{F}=\Id_2$. Since the rescaling freedom has already been used, the coefficient in the commutator $\br{H}{M}\sim M$ is left unfixed, yielding the semi-direct sum $\hyperlink{Table:K2}{\mathsf{K2}}\inplus_\kappa\mathbb R$:
\bi
\item[]
\mybox{WarmGray}{
$\hypertarget{Eq:AK2}{\hyperlink{Table:AK2}{\mathsf{AK2}_\kappa}}$
}
{
\begin{tikzcd}[row sep = small,column sep = small,ampersand replacement=\&]
\br{H}{\tK}= \tK\&\br{H}{\tP}=\tP\&\br{H}{M}=\kappa M\&\text{ where }\kappa\in\mathbb R\text{ is arbitrary}
\end{tikzcd}
}
\ei
The case $\kappa=0$ corresponds to the direct sum $\hyperlink{Table:K2}{\mathsf{K2}}\oplus\mathbb R$.
\ei

Whenever the matrix $\mathsf{M}$ is \textit{not} proportional to the identity, similarity classes are classified by pairs $(\tr,\det)$ modulo the equivalence relation $(\tr,\det)\sim(\lambda\, \tr,\lambda^2\det)$. 

We first consider the case $\det=0$, which contains two equivalence classes represented by: 
\bi
\item $(0,0)$ if $\tr=0$ so that $\mathsf{F}=\begin{pmatrix}
0&0\\
1&0
\end{pmatrix}$. There are two further subcases to distinguish:
\be
\item $\alt_5=0$: This corresponds to the trivial extension of the galilei algebra $\hyperlink{Table:K3}{\mathsf{K3}}\oplus\mathbb R$:
\mybox{WarmGray}{
$\hypertarget{Eq:AK3}{\hyperlink{Table:AK3}{\mathsf{AK3}}}$
}
{
\begin{tikzcd}[row sep = small,column sep = small,ampersand replacement=\&]
\br{H}{\tP}=\tK
\end{tikzcd}
}

\item $\alt_5\neq0$: Rescaling via $\lambda=\frac{1}{\alt_5}$ allows to set the non-trivial commutators to the form: 
\mybox{WarmGray}{
$\hypertarget{Eq:AK3a}{\hyperlink{Table:AK3a}{\mathsf{AK3a}}}$
}
{
\begin{tikzcd}[row sep = small,column sep = small,ampersand replacement=\&]
\br{H}{\tP}=\tK\&\br{H}{M}= M
\end{tikzcd}
}

This corresponds to the semi-direct sum $\hyperlink{Table:K3}{\mathsf{K3}}\inplus\mathbb R$ where $H$ acts on $M$ by dilation.
\ee
\item $(1,0)$ if $\tr\neq0$, hence rescaling via $\lambda=\frac1\tr$ yields $\mathsf{F}=\begin{pmatrix}
0&0\\
1&1
\end{pmatrix}$ so that:
\[
\begin{tikzcd}[row sep = small,column sep = small,ampersand replacement=\&]
\br{H}{\tP}=\tP+\tK\&\br{H}{M}=\kappa M\&\text{ where }\kappa\in\mathbb R\text{ is arbitrary.}
\end{tikzcd}
\]
Performing a redefinition $\tP\mapsto\tK+\tP$ yields the semi-direct product $\hyperlink{Table:K4}{\mathsf{K4}}\inplus_\kappa\mathbb R$:
\bi
\item[]
\mybox{WarmGray}{
$\hypertarget{Eq:AK4}{\hyperlink{Table:AK4}{\mathsf{AK4}_\kappa}}$
}
{
\begin{tikzcd}[row sep = small,column sep = small,ampersand replacement=\&]
\br{H}{\tP}=\tP\&\br{H}{M}=\kappa M\&\text{ where }\kappa\in\mathbb R\text{ is arbitrary}
\end{tikzcd}
}
\ei
\ei
Finally, we consider the case $\det\neq0$. We distinguish between two cases:
\bi
\item $\tr=0$, so that rescaling via $\lambda=\frac{\epsilon}{\sqrt{|\det|}}$, with $\epsilon=\pm1$, yields two classes:
\be
\item $(0,1)$ if $\det>0$  so that $\mathsf{F}=\begin{pmatrix}
0&-1\\
1&0
\end{pmatrix}$. 

The remaining arbitrariness in the sign $\epsilon$ can be used to set the coefficient in the commutator $\br{H}{M}\sim M$ to be non-negative. The resulting algebra corresponds to the semi-direct product $\hyperlink{Table:K5}{\mathsf{K5}}\inplus_\kappa\mathbb R$ where $\hyperlink{Table:K5}{\mathsf{K5}}$ stands for the euclidean newton algebra:
\mybox{WarmGray}{
$\hypertarget{Eq:AK5}{\hyperlink{Table:AK5}{\mathsf{AK5}_{\kappa}}}$
}
{
\begin{tikzcd}[row sep = small,column sep = small,ampersand replacement=\&]
\br{H}{\tK}=-\tP\&\br{H}{\tP}=\tK\&\br{H}{M}=\kappa M\&\text{ where }\kappa\geq0
\end{tikzcd}
}
\item $(0,-1)$ if $\det<0$  so that $\mathsf{F}=\begin{pmatrix}
0&1\\
1&0
\end{pmatrix} $ corresponding to the semi-direct product $\hyperlink{Table:K6}{\mathsf{K6}}\inplus_\kappa\mathbb R$ where $\hyperlink{Table:K6}{\mathsf{K6}}$ stands for the lorentzian newton algebra:
\mybox{WarmGray}{
$\hypertarget{Eq:AK6}{\hyperlink{Table:AK6}{\mathsf{AK6}_{\kappa}}}$
}
{
\begin{tikzcd}[row sep = small,column sep = small,ampersand replacement=\&]
\br{H}{\tK}=\tP\&\br{H}{\tP}=\tK\&\br{H}{M}=\kappa M\&\text{ where }\kappa\geq0
\end{tikzcd}
}
\ee

\item $\tr\neq0$, so that rescaling via $\lambda=\frac{\tr}{|\tr|\sqrt{|\det|}}$ yields two classes:
\be
\item $(\alpha_+,1)$ if $\det>0$, where $\alpha_+>0$ is defined as $\alpha_+=\frac{|\tr|}{\sqrt{\det}}$, so that $\mathsf{F}=\begin{pmatrix}
0&-1\\
1&\alpha_+
\end{pmatrix} $ corresponding to the semi-direct product $\hyperlink{Table:K7}{\mathsf{K7}_{\alpha_+}}\inplus_\kappa\mathbb R$:
\mybox{WarmGray}{
$\hypertarget{Eq:AK7}{\hyperlink{Table:AK7}{\mathsf{AK7}_{\alpha_+,\kappa}}}$
}
{
\begin{tikzcd}[row sep = small,column sep = small,ampersand replacement=\&]
\br{H}{\tK}=-\tP\&\br{H}{\tP}=\tK+\alpha_+\tP\&\br{H}{M}=\kappa M\&\alpha_+>0
,\ 
\kappa\in\mathbb R
\end{tikzcd}
}
Contrarily to the case where $\tr=0$, the sign of the scaling factor $\lambda$ is fixed to cancel the sign of the trace so that $\alpha_+>0$ hence the sign of $\kappa$ is left unfixed. The Lie algebra $\hyperlink{Table:AK7}{\mathsf{AK7}_{\alpha_+,\kappa}}$ hence splits into 3 different cases, $\hyperlink{Table:AK7}{\mathsf{AK7}_{\alpha_+}}$, $\hyperlink{Table:AK7+}{\mathsf{AK7}^+_{\alpha_+,\kappa}}$ and $\hyperlink{Table:AK7-}{\mathsf{AK7}^-_{\alpha_+,\kappa}}$ according to $\kappa=0$, $\kappa>0$ and $\kappa<0$, respectively.
\item $(\alpha_-,-1)$ if $\det<0$, where $\alpha_->0$ is defined as $\alpha_-=\frac{|\tr|}{\sqrt{-\det}}$, so that $\mathsf{F}=\begin{pmatrix}
0&1\\
1&\alpha_-
\end{pmatrix} $ corresponding to the semi-direct product $\hyperlink{Table:K8}{\mathsf{K8}_{\alpha_-}}\inplus_\kappa\mathbb R$:
\mybox{WarmGray}{
$\hypertarget{Eq:AK8}{\hyperlink{Table:AK8}{\mathsf{AK8}_{\alpha_-,\kappa}}}$
}
{
\begin{tikzcd}[row sep = small,column sep = small,ampersand replacement=\&]
\br{H}{\tK}=\tP\&\br{H}{\tP}=\tK+\alpha_-\tP\&\br{H}{M}=\kappa M\&\alpha_->0
,\ 
\kappa\in\mathbb R
\end{tikzcd}
}
The Lie algebra $\hyperlink{Table:AK8}{\mathsf{AK8}_{\alpha_-,\kappa}}$ splits into 3 different cases, $\hyperlink{Table:AK8}{\mathsf{AK8}_{\alpha_-}}$, $\hyperlink{Table:AK8+}{\mathsf{AK8}^+_{\alpha_-,\kappa}}$ and $\hyperlink{Table:AK8-}{\mathsf{AK8}^-_{\alpha_-,\kappa}}$ according to $\kappa=0$, $\kappa>0$ and $\kappa<0$, respectively.
\ee
\ei

This concludes the classification of Lie algebras living in the first branch.

\paragraph{Second branch ($\alt_6\neq0$, $\alt_5=0$, $\alt_1+\alt_4=0$)}
\hfill
\medskip

Since $\alt_6\neq0$ by assumption, one can always perform the invertible transformation $H\mapsto \alt_6H+\alt_7M$ yielding:
\bea
\label{equation:second branch}
\hspace{-1.5cm}\begin{tikzcd}[row sep = small,column sep = small]
\br{H}{\tK}=\alt'_1 \tK+\alt'_2 \tP&\br{H}{\tP}=\alt'_3 \tK-\alt'_1\tP&\br{\tK}{\tP}=  H+\alt'_1 \tJ&\br{\tK}{\tK}= -\alt'_2 \tJ&\br{\tP}{\tP}=\alt'_3 \tJ\hspace{-1cm}
\end{tikzcd}
\eea
where we relabelled $\alt'_1=\alt_1\alt_6$, $\alt'_2=\alt_2\alt_6$ and $\alt'_3=\alt_3\alt_6$. Note that the generator $M$ disappears from the commutation relations. The investigation of the second branch in the present classification hence boils down to the investigation of the second branch of \cite[Section 2.4]{Figueroa-OFarrill2017a} in the classification of kinematical algebras that we now briefly review.

Performing a change of adapted basis \eqref{equation:subgroup} preserves the form of the commutation relations \eqref{equation:second branch} provided that $\chi=0$ and $\lambda=\det \mathsf{G}=\alpha\delta-\beta\gamma$. This induces a representation of the corresponding subgroup on the space of parameters $\mathbb R^3=\Spannn{\alt'_1,\alt'_2,\alt'_3}$ whose kernel is given by elements of the form: 
\[
\begin{pmatrix}
\mu&0\\
0&1
\end{pmatrix}
\times
\begin{pmatrix}
\pm1&0\\
0&\pm1
\end{pmatrix}
\text{ where }
\mu\in\GroupR\, .
\]

Quotienting by the latter yields a faithful representation\footnote{Remember that $\GLR{2}\simeq\GroupR\ltimes\SLR{2}$ so that each element of 
$\GLR{2}$
can be uniquely written as a product:
\[
\underbrace{\begin{pmatrix}
\alpha&\beta\\
\gamma&\delta
\end{pmatrix}}_{\in\GLR{2}}=\underbrace{\begin{pmatrix}
\Delta&0\\
0&1
\end{pmatrix}}_{\in\GroupR}\underbrace{\begin{pmatrix}
\frac{\alpha}{\Delta}&\frac{\beta}{\Delta}\\
\gamma&\delta
\end{pmatrix}}_{\in\SLR{2}}
\text{ where }\Delta=\alpha\delta-\beta\gamma\, .
 \]
 This decomposition induces an action: $\varphi\From\GroupR\times\SLR{2}\to\SLR{2}\From\kappa\times\begin{pmatrix}
a&b\\
c&d
\end{pmatrix}\mapsto
\begin{pmatrix}
a&b/\kappa\\
\kappa c&d
\end{pmatrix}$ so that the group operation on $\GroupR\ltimes\SLR{2}$ is defined as: $(\kappa_1,\mathsf{S}_1)\times (\kappa_2,\mathsf{S}_2)\to\big(\kappa_1\kappa_2,\varphi_{\kappa_2}(\mathsf{S}_1)\, \mathsf{S}_2\big)$.
} $\GroupR\ltimes\PSLR{2}\to\GLR{3}$---where $\PSLR{2}$ stands for the \emph{projective special linear group} $\PSLR{2}=\SLR{2}/\mathbb Z_2$ with $\mathbb Z_2=\pset{\pm\Id_2}$ the subgroup of scalar transformations with unit determinant---acting on $\mathbb R^3$ as:
\bea
\label{equation:matrix representation}
\begin{pmatrix}
\alt'_1\\
\alt'_2\\
\alt'_3
\end{pmatrix}
\mapsto
\underbrace{\begin{pmatrix}
 \kappa &0&0\\
0& \kappa ^2&0\\
0&0&1
\end{pmatrix}}_{\hat{\mathsf R}}
\underbrace{\begin{pmatrix}
 a  d + b  c &- a  c & b  d \\
-2 a  b & a ^2&- b ^2\\
2 c  d &- c ^2& d ^2
\end{pmatrix}
}_{\hat{\mathsf{S}}}
\begin{pmatrix}
\alt'_1\\
\alt'_2\\
\alt'_3
\end{pmatrix}
\text{ where }
 \kappa \in\mathbb R^\times
\text{ and }
\mathsf{S}=\begin{pmatrix}
 a & b \\
 c & d 
\end{pmatrix}\in\SLR{2}\, .\nn
\eea
Indeed, the matrix $\hat{\mathsf{S}}$ provides a representation $\SLR{2}\to\GLR{3}$ with kernel $\mathbb Z_2$, hence, it induces a faithful representation of the projective special linear group $\PSLR{2}\to \GL(3|\mathbb R)$. The latter is isomorphic to the \emph{proper orthochronous Lorentz group} $\SO^+(1,2)$, defined as the subgroup of linear transformations on $(2+1)$-dimensional minkowski space preserving both time direction and space orientation. Indeed, defining the matrix 
\[
\eta=\begin{pmatrix}
2&0&0\\
0&0&1\\
0&1&0
\end{pmatrix}
\]
we have $\hat{\mathsf{S}}^T \eta\, \hat{\mathsf{S}}=\eta$ for all ${\mathsf{S}}\in\SLR{2}$. Acting with $\hat{\mathsf{S}}$ on a vector  $v=\begin{pmatrix}
\alt'_1&
\alt'_2&
\alt'_3
\end{pmatrix}$ thus preserves the norm $Q(v)=v\, \eta\,  v^T=2\, (\alt'_1{}^2+\alt'_2\alt'_3)$. This fact, combined with $\det\hat{\mathsf{S}}=1$, ensures that $\hat{\mathsf{S}}\in\SO(1,2)$ [thanks to the fact that $\eta$ has signature $(1,2)$]. It can finally be checked\footnote{This fact can be made manifest by diagonalising $\eta$ via the transformation matrix 
$\mathsf{Q}:=\begin{pmatrix}
0&0&1\\
\frac{\sqrt{2}}{2}&\frac{\sqrt{2}}{2}&0\\
-\frac{\sqrt{2}}{2}&\frac{\sqrt{2}}{2}&0\\
\end{pmatrix}\in\SO(3)$
, so that 
\[
\eta':=\mathsf{Q}^T\eta\, \mathsf{Q}=\begin{pmatrix}
-1&0&0\\
0&1&0\\
0&0&2
\end{pmatrix}
\text{ and }
\hat{\mathsf{S}}':=\mathsf{Q}^T\hat{\mathsf{S}}\, \mathsf{Q}=\begin{pmatrix}
\frac{1}{2}\big( a ^2+ b ^2+ c ^2+ d ^2\big)&\frac{1}{2}\big( a ^2- b ^2+ c ^2- d ^2\big)&-\sqrt{2}\, ( a  b + c  d )\\
\frac{1}{2}\big( a ^2+ b ^2- c ^2- d ^2\big)&\frac{1}{2}\big( a ^2- b ^2- c ^2+ d ^2\big)&-\sqrt{2}\, ( a  b - c  d )\\
-\frac{\sqrt{2}}{2}( a  c + b  d )&-\frac{\sqrt{2}}{2}( a  c - b  d )& a  d + b  c 
\end{pmatrix}
\]
where $\hat{\mathsf{S}}'{}^0{}_{0}=\frac{1}{2}\big( a ^2+ b ^2+ c ^2+ d ^2\big)>0$.
} that acting with $\hat{\mathsf{S}}$ preserves the sign of the time component $T(v)=\frac{\sqrt{2}}{2}(\alt'_2-\alt'_3)$, hence $\hat{\mathsf{S}}\in\SO^+(1,2)$ as expected. Finally, let us note that acting with $\kappa\in\mathbb R^\times$ via $\hat{\mathsf{R}}$ rescales the norm as $Q(\hat{\mathsf{R}}\, v)=\kappa^2 Q(v)$. Our classifying task for this branch then boils down to classify the orbits of the proper orthochronous Lorentz group $\SO^+(1,2)$ on $\mathbb R^3$, modulo the rescaling induced by $\kappa\in\mathbb R^\times$. There are 6 orbits [or surfaces of transitivity] to be considered \cite{Hermann1966}. For each of them, we write a representative vector $v\in\mathbb R^3$ and explicit the corresponding commutation relations:
\bi
\item[]
\bi
\item[$\bullet$] \emph{origin} $v=\begin{pmatrix}0&0&0\end{pmatrix}$

\mybox{WarmGray}{
$\hypertarget{Eq:AK9}{\hyperlink{Table:AK9}{\mathsf{AK9}}}$ \quad $\text{carroll}\oplus\mathbb R$
}
{
\begin{tikzcd}[row sep = small,column sep = small,ampersand replacement=\&]
\br{\tK}{\tP}= H
\end{tikzcd}
}

\item[$\bullet$] \emph{future directed lightlike orbit}  $v=\begin{pmatrix}0&0&-1\end{pmatrix}$, $Q(v)=0$, $T(v)>0$

\mybox{WarmGray}{
$\hypertarget{Eq:AK10}{\hyperlink{Table:AK10}{\mathsf{AK10}}}$\quad$\mathfrak{iso}(d+1)\oplus\mathbb R$
}
{
\begin{tikzcd}[row sep = small,column sep = small,ampersand replacement=\&]
\br{H}{\tP}=-\tK\&\br{\tK}{\tP}=H\&\br{\tP}{\tP}=-\tJ
\end{tikzcd}
}

\item[$\bullet$] \emph{past directed lightlike orbit}  $v=\begin{pmatrix}0&0&1\end{pmatrix}$, $Q(v)=0$, $T(v)<0$

\mybox{WarmGray}{
$\hypertarget{Eq:AK11}{\hyperlink{Table:AK11}{\mathsf{AK11}}}$\quad$\mathfrak{iso}(d,1)\oplus\mathbb R$
}
{
\begin{tikzcd}[row sep = small,column sep = small,ampersand replacement=\&]
\br{H}{\tP}=\tK\&\br{\tK}{\tP}=H\&\br{\tP}{\tP}=\tJ
\end{tikzcd}
}

\item[$\bullet$] \emph{spacelike orbit}  $v=\begin{pmatrix}0&1&1\end{pmatrix}$, $Q(v)>0$

\mybox{WarmGray}{
$\hypertarget{Eq:AK12}{\hyperlink{Table:AK12}{\mathsf{AK12}}}$\quad$\so(d+1,1)\oplus\mathbb R$
}
{
\begin{tikzcd}[row sep = small,column sep = small,ampersand replacement=\&]
\br{H}{\tK}=\tP\&\br{H}{\tP}=\tK\&\br{\tK}{\tP}= H\&\br{\tK}{\tK}= -\tJ\&\br{\tP}{\tP}=\tJ
\end{tikzcd}
}

\item[$\bullet$] \emph{future directed timelike orbit}  $v=\begin{pmatrix}0&1&-1\end{pmatrix}$, $Q(v)<0$, $T(v)>0$

\mybox{WarmGray}{
$\hypertarget{Eq:AK13}{\hyperlink{Table:AK13}{\mathsf{AK13}}}$\quad$\so(d+2)\oplus\mathbb R$
}
{
\begin{tikzcd}[row sep = small,column sep = small,ampersand replacement=\&]
\br{H}{\tK}=\tP\&\br{H}{\tP}=-\tK\&\br{\tK}{\tP}=H\&\br{\tK}{\tK}=-\tJ\&\br{\tP}{\tP}=- \tJ
\end{tikzcd}
}

\item[$\bullet$] \emph{past directed timelike orbit}  $v=\begin{pmatrix}0&-1&1\end{pmatrix}$, $Q(v)<0$, $T(v)<0$

\mybox{WarmGray}{
$\hypertarget{Eq:AK14}{\hyperlink{Table:AK14}{\mathsf{AK14}}}$\quad$\so(d,2)\oplus\mathbb R$
}
{
\begin{tikzcd}[row sep = small,column sep = small,ampersand replacement=\&]
\br{H}{\tK}=-\tP\&\br{H}{\tP}=\tK\&\br{\tK}{\tP}=H\&\br{\tK}{\tK}=\tJ\&\br{\tP}{\tP}=\tJ
\end{tikzcd}
}

\ei
\ei
This concludes the classification of the second branch.

\paragraph{Third branch ($\alt_6=0$, $\alt_7\neq0$, $\alt_1+\alt_4-\alt_5=0$)}
\hfill

\medskip
Imposing $\alt_6=\alt_1+\alt_4-\alt_5=0$ in \eqref{equation:commutators row}, the remaining non-trivial commutators can be arranged as:
\bea
\label{equation:third branch}
\hspace{-1cm}\begin{tikzcd}[row sep = small,column sep = small]
\br{\tK}{\tP}=M&\br{H}{\tK}= \alt_1 \tK+\alt_2 \tP&\br{H}{\tP}=\alt_3 \tK+\alt_4 \tP&\br{H}{M}= (\alt_1+\alt_4) M
\end{tikzcd}
\eea
where we used the assumption that $\alt_7\neq0$ to rescale $M$ as $M\mapsto\frac{1}{\alt_7}M$. 

Performing a change of adapted basis \eqref{equation:subgroup} preserves the form of the commutation relations \eqref{equation:third branch} provided that $\mu=\det \mathsf{G}=\alpha\delta-\beta\gamma$. This induces a representation of the corresponding subgroup on the space of parameters $\mathbb R^4=\Spannn{\alt_1,\alt_2,\alt_3,\alt_4}$ whose kernel is given by elements of the form: 
\[
\begin{pmatrix}
\kappa^2&0\\
\chi&1
\end{pmatrix}
\times
\begin{pmatrix}
\kappa&0\\
0&\kappa
\end{pmatrix}
\text{ where }
\kappa\in\GroupR
\text{ and }
\chi\in\mathbb R\, .
\]
Quotienting by the latter yields a faithful representation $\GroupR\times\PGLR{2}\to\GLR{4}$ acting on $\mathsf{M}\in\mathbb R^4$ \eqref{equation:matrix M} as $\mathsf{M}\mapsto \lambda\, \mathsf{G}\, \mathsf{M}\, \mathsf{G}\un$  with $\lambda\in\GroupR$ and $\mathsf{G}\in\PGLR{2}$.
The classification of the third branch thus mimics the classification of the first branch via the Frobenius normal form $\mathsf F$. We start by considering the case for which $\mathsf{M}$ is proportional to the identity \ie $\mathsf{M}=\mathsf{F}=\alt_1 \Id_2$ and distinguish between the subcases $\alt_1=0$ and $\alt_1\neq0$:
\bi
\item[]
\bi
\item[$\bullet$] $\alt_1=0$

The resulting algebra corresponds to the central extension of the static algebra:
\mybox{WarmGray}{
$\hypertarget{Eq:AK1b}{\hyperlink{Table:AK1b}{\mathsf{AK1b}}}$
}
{
\begin{tikzcd}[row sep = small,column sep = small,ampersand replacement=\&]
\br{\tK}{\tP}= M
\end{tikzcd}
}
Note that the latter is isomorphic as a Lie algebra to the trivial extension of the carroll algebra \hypertarget{Eq:AK9}{\hyperlink{Table:AK9}{$\mathsf{AK9}$}}, upon the relabelling $M\leftrightarrow H$. However, the latter change of basis fails to preserve the ideal $\ali$, hence the need to keep those two distinct.
 \item[$\bullet$] $\alt_1\neq0$

Rescaling via $\lambda=\frac{1}{\alt_1}$ allows to put $\mathsf{F}$ in the form $\mathsf{F}=\Id_2$, yielding:
\mybox{WarmGray}{
$\hypertarget{Eq:AK2b}{\hyperlink{Table:AK2b}{\mathsf{AK2b}}}$
}
{
\begin{tikzcd}[row sep = small,column sep = small,ampersand replacement=\&]
\br{\tK}{\tP}= M\&\br{H}{\tK}= \tK\&\br{H}{\tP}= \tP\&\br{H}{M}=2M
\end{tikzcd}
}
\ei
\ei
We pursue by considering the case where $\mathsf{M}$ is not proportional to the identity, so that similarity classes are classified by the pair $(\tr,\det)$
modulo $(\tr,\det)\sim(\lambda\, \tr,\lambda^2\det)$. The corresponding equivalence classes are the following:

\bi
\item[]
\bi
\item[$\bullet$] $(0,0)$ if $\tr=\det=0$

This corresponds to the centrally extended galilei algebra (or bargmann algebra) with:
\mybox{WarmGray}{
$\hypertarget{Eq:AK3b}{\hyperlink{Table:AK3b}{\mathsf{AK3b}}}$
}
{
\begin{tikzcd}[row sep = small,column sep = small,ampersand replacement=\&]
\br{\tK}{\tP}= M\&\br{H}{\tP}=\tK
\end{tikzcd}
}

\item[$\bullet$] $(1,0)$ if $\tr\neq0$ and $\det=0$, so that rescaling via $\lambda=\frac1\tr$ yields:
\[
\begin{tikzcd}[row sep = small,column sep = small,ampersand replacement=\&]
\br{H}{\tP}=\tK+\tP\&\br{\tK}{\tP}= M\&\br{H}{M}=M
\end{tikzcd}
\]
Performing a redefinition $\tP\mapsto\tK+\tP$ yields:
\mybox{WarmGray}{
$\hypertarget{Eq:AK4b}{\hyperlink{Table:AK4b}{\mathsf{AK4b}}}$
}
{
\begin{tikzcd}[row sep = small,column sep = small,ampersand replacement=\&]
\br{\tK}{\tP}= M\&\br{H}{\tP}=\tP\&\br{H}{M}=M
\end{tikzcd}
}
\item[$\bullet$] $(0,1)$ if $\tr=0$, $\det >0$

This case reproduces the centrally extended euclidean newton algebra:
\mybox{WarmGray}{
$\hypertarget{Eq:AK5b}{\hyperlink{Table:AK5b}{\mathsf{AK5b}}}$
}
{
\begin{tikzcd}[row sep = small,column sep = small,ampersand replacement=\&]
\br{\tK}{\tP}= M\&\br{H}{\tK}=-\tP\&\br{H}{\tP}=\tK
\end{tikzcd}
}
\item[$\bullet$] $(0,-1)$ if $\tr=0$, $\det <0$

We recover the centrally extended lorentzian newton algebra:
\mybox{WarmGray}{
$\hypertarget{Eq:AK6b}{\hyperlink{Table:AK6b}{\mathsf{AK6b}}}$
}
{
\begin{tikzcd}[row sep = small,column sep = small,ampersand replacement=\&]
\br{\tK}{\tP}= M\&\br{H}{\tK}=\tP\&\br{H}{\tP}=\tK
\end{tikzcd}
}
\item[$\bullet$] $(\alpha_+,1)$  if $\tr\neq0$, $\det >0$
\mybox{WarmGray}{
$\hypertarget{Eq:AK7b}{\hyperlink{Table:AK7b}{\mathsf{AK7b}_{\alpha_+}}}$
}
{
\begin{tikzcd}[row sep = small,column sep = small,ampersand replacement=\&]
\br{\tK}{\tP}= M\&\br{H}{\tK}=-\tP\&\br{H}{\tP}=\tK+\alpha_+\tP\&\br{H}{M}=\alpha_+ M\&\alpha_+>0
\end{tikzcd}
}
\item[$\bullet$] $(\alpha_-,-1)$  if $\tr\neq0$, $\det <0$
\mybox{WarmGray}{
$\hypertarget{Eq:AK8b}{\hyperlink{Table:AK8b}{\mathsf{AK8b}_{\alpha_-}}}$
}
{
\begin{tikzcd}[row sep = small,column sep = small,ampersand replacement=\&]
\br{\tK}{\tP}= M\&\br{H}{\tK}=\tP\&\br{H}{\tP}=\tK+\alpha_-\tP\&\br{H}{M}=\alpha_- M\&\alpha_->0
\end{tikzcd}
}
\ei
\ei
We conclude this section by checking whether the ambient aristotelian algebras we classified admit additional scalar ideals [on top of $\ali=\Span M$]. It can be easily verified that, among the previously classified algebras, only \hyperlink{Table:AK1}{$\mathsf{AK1}$}, \hyperlink{Table:AK9}{$\mathsf{AK9}$} and \hyperlink{Table:AK1b}{$\mathsf{AK1b}$} admit more than one scalar ideal, namely the linear combination:
\[
S=x\, M+y\, H\text{ where }(x,y)\in\mathbb R^2\setminus (0,0)\, .
\]
For each of these three algebras, we check whether there exists an automorphism $\varphi\in\Aut(\alg)$ such that $\varphi(\ali')=\ali$, where $\ali'=\Span S$ and $\ali=\Span M$. This boils down to classify the orbits of the intersection $\Aut(\alg)\cap\GLR{2}$ [\ie scalar automorphisms] acting on $\mathbb R^2\setminus (0,0)$. In case $S$ and $M$ do not belong to the same orbit, we perform a change of adapted basis allowing ro rewrite $S$ as $M$ [at the expense of modifying the associated commutation relations]. 
\bi
\item \hyperlink{Table:AK1}{$\mathsf{AK1}$}

There is a unique orbit of $\Aut(\hyperlink{Table:AK1}{\mathsf{AK1}})\cap\GLR{2}=\GLR{2}$ on $\mathbb R^2\setminus (0,0)$, hence there is a unique pair $(\alg,\ali)$ with underlying Lie algebra $\hyperlink{Table:AK1}{\mathsf{AK1}}$ and the ideal $\ali$ can always be chosen to be $\ali=\Span M$.
\item \hyperlink{Table:AK9}{$\mathsf{AK9}$} 

The group of scalar automorphisms $\Aut(\hyperlink{Table:AK9}{\mathsf{AK9}})\cap\GLR{2}$ of the trivial extension of the carroll algebra is parameterised by elements of the form:
\bea
\varphi=
\begin{pmatrix}
\mu&\nu\\
0&1
\end{pmatrix}
\text{ with }\mu\in\GroupR\text{, }\nu\in\mathbb R\, .
\eea
There are two orbits on $\mathbb R^2\setminus (0,0)$:
\be
\item $\begin{pmatrix}
1&
0
\end{pmatrix}$ corresponding to $\ali_1=\ali=\Span M$, with commutation relation $\br{\tK}{\tP}=H$. 
\item $\begin{pmatrix}
0&
1
\end{pmatrix}$ so that $\ali_2=\Span H$. Performing a change of basis, we can map $\ali_2$ to $\ali=\Span M$, with commutation relation $\br{\tK}{\tP}=M$, hence the couple $(\hyperlink{Table:AK9}{\mathsf{AK9}} ,\ali_2)$ is isomorphic to $(\hyperlink{Table:AK1b}{\mathsf{AK1b}},\ali)$.
\ee

\item \hyperlink{Table:AK1b}{$\mathsf{AK1b}$} 

The situation is dual to the one for \hyperlink{Table:AK9}{$\mathsf{AK9}$}.
\ei
We conclude that the set consisting of the three above algebras induces a set of three pairs $(\hyperlink{Table:AK1}{\mathsf{AK1}} ,\ali)$, $(\hyperlink{Table:AK9}{\mathsf{AK9}} ,\ali)$ and $(\hyperlink{Table:AK1b}{\mathsf{AK1b}} ,\ali)$, where the scalar ideal $\ali$ can always be chosen to be $\ali=\Span M$. 
\medskip

The isomorphism classes of pairs $(\alg,\ali)$---where $\alg$ is an ambient aristotelian algebra and $\ali=\Span M$ a scalar ideal---corresponding to the above three branches are collected in Table \ref{Table:Ambient aristotelian algebras admitting a scalar ideal}. 

\begin{Remark}
Deformations of the universal central extension of the static kinematical algebra [\ie the ambient aristotelian algebra whose only non-trivial commutator is $\br{\tK}{\tP}=M$] have been classified for $d>3$ in \cite[Table 18]{Figueroa-OFarrill2017a}. This classification intersects the second and third branches of our classification \UnskipRef{Table:Ambient aristotelian algebras admitting a scalar ideal} [but not the first branch since the latter is characterised by $\alt_6=\alt_7=0$ yielding $\br{\tK}{\tP}=0$]. The following table can be seen as a Pyrgi Tablet between the two classifications:

\begin{center}
\resizebox{12cm}{!}{
\begin{tabular}{l|l}
Table 18 of \cite{Figueroa-OFarrill2017a}&Table \ref{Table:Ambient aristotelian algebras admitting a scalar ideal}\\\hline
Eq.58 (centrally extended static)& \hyperlink{Table:AK1b}{$\mathsf{AK1b}$}, \hyperlink{Table:AK9}{$\mathsf{AK9}$}\\
\rowcolor{Gray}
Eq.89 (central extension of lorentzian Newton)& \hyperlink{Table:AK6b}{$\mathsf{AK6b}$}\\
Eq.104 (central extension of euclidean Newton)& \hyperlink{Table:AK5b}{$\mathsf{AK5b}$}\\
\rowcolor{Gray}
Eq.98 (Bargmann)& \hyperlink{Table:AK3b}{$\mathsf{AK3b}$}\\
\hline
Eq.79 ($\mathfrak{e}\oplus\mathbb R$)& \hyperlink{Table:AK10}{$\mathsf{AK10}$}\\
\rowcolor{Gray}
Eq.79 ($\mathfrak{p}\oplus\mathbb R$)& \hyperlink{Table:AK11}{$\mathsf{AK11}$}\\
Eq.77 ($\so(d+1,1)\oplus\mathbb R$)& \hyperlink{Table:AK12}{$\mathsf{AK12}$}\\
\rowcolor{Gray}
Eq.82 ($\so(d+2)\oplus\mathbb R$)& \hyperlink{Table:AK13}{$\mathsf{AK13}$}\\
Eq.82 ($\so(d,2)\oplus\mathbb R$)& \hyperlink{Table:AK14}{$\mathsf{AK14}$}\\
\rowcolor{Gray}
\hline
Eq.74& \hyperlink{Table:AK2b}{$\mathsf{AK2b}$}\\
Eq.86 with $-1<\gamma<0$& \hyperlink{Table:AK8b}{$\mathsf{AK8b}_{\alpha_-}$} with $\alpha_->0$ where $\alpha_-=\frac{\gamma+1}{\sqrt{-\gamma}}$\\
\rowcolor{Gray}
Eq.86 with $\gamma=0$& \hyperlink{Table:AK4b}{$\mathsf{AK4b}$}\\
Eq.86 with $0<\gamma<1$&\hyperlink{Table:AK7b}{$\mathsf{AK7b}_{\alpha_+}$} with $\alpha_+>2$ where $\alpha_+=\frac{\gamma+1}{\sqrt{\gamma}}$\\
\rowcolor{Gray}
Eq.101&\hyperlink{Table:AK7b}{$\mathsf{AK7b}_{\alpha_+}$} with $\alpha_+=2$\\
Eq.107 with $\alpha>0$&\hyperlink{Table:AK7b}{$\mathsf{AK7b}_{\alpha_+}$} with $0<\alpha_+<2$ where $\alpha_+=\frac{2\alpha}{\sqrt{1+\alpha^2}}$
  \end{tabular}
  }
  \end{center}
\end{Remark}

\begin{table}[ht]
\centering
\resizebox{16cm}{!}{
\begin{tabular}{l|l|llllll}
\multicolumn{1}{c|}{\textbf{Label}}&\multicolumn{1}{c|}{\textbf{Comments}}&\multicolumn{6}{c}{\textbf{Non-trivial commutators}}\\\hline
\hypertarget{Table:AK1}{\hyperlink{Eq:AK1}{$\mathsf{AK1}$}}&$\text{static}\oplus\mathbb R$&&&&&&\\
\rowcolor{Gray}
\hypertarget{Table:AK1a}{\hyperlink{Eq:AK1a}{$\mathsf{AK1a}$}}&$\text{static}\inplus\mathbb R$&&&$\br{H}{M}=M$&&&\\
\hypertarget{Table:AK2}{\hyperlink{Eq:AK2}{$\mathsf{AK2}$}}&&$\br{H}{\tK}= \tK$&$\br{H}{\tP}=\tP$&&&&\\
\rowcolor{Gray}
\hypertarget{Table:AK2+}{\hyperlink{Eq:AK2}{$\mathsf{AK2}^+_{\kappa}$}}&$\kappa>0$&$\br{H}{\tK}= \tK$&$\br{H}{\tP}=\tP$&$\br{H}{M}=\kappa M$&&&\\
\hypertarget{Table:AK2-}{\hyperlink{Eq:AK2}{$\mathsf{AK2}^-_{\kappa}$}}&$\kappa>0$&$\br{H}{\tK}= \tK$&$\br{H}{\tP}=\tP$&$\br{H}{M}=-\kappa M$&&&\\
\rowcolor{Gray}
\hypertarget{Table:AK3}{\hyperlink{Eq:AK3}{$\mathsf{AK3}$}}&$\text{galilei}\oplus\mathbb R$&&$\br{H}{\tP}=\tK$&&&&\\
\hypertarget{Table:AK3a}{\hyperlink{Eq:AK3a}{$\mathsf{AK3a}$}}&$\text{galilei}\inplus\mathbb R$&&$\br{H}{\tP}=\tK$&$\br{H}{M}=M$&&&\\
\rowcolor{Gray}
\hypertarget{Table:AK4}{\hyperlink{Eq:AK4}{$\mathsf{AK4}$}}&&&$\br{H}{\tP}=\tP$&&&&\\
\hypertarget{Table:AK4+}{\hyperlink{Eq:AK4}{$\mathsf{AK4}^+_\kappa$}}&$\kappa>0$&&$\br{H}{\tP}=\tP$&$\br{H}{M}=\kappa M$&&&\\
\rowcolor{Gray}
\hypertarget{Table:AK4-}{\hyperlink{Eq:AK4}{$\mathsf{AK4}^-_\kappa$}}&$\kappa>0$&&$\br{H}{\tP}=\tP$&$\br{H}{M}=-\kappa M$&&&\\
\hypertarget{Table:AK5}{\hyperlink{Eq:AK5}{$\mathsf{AK5}$}}&$\text{euclidean newton}\oplus\mathbb R$&$\br{H}{\tK}=-\tP$&$\br{H}{\tP}=\tK$&&&&\\
\rowcolor{Gray}
\hypertarget{Table:AK5+}{\hyperlink{Eq:AK5}{$\mathsf{AK5}^+_{\kappa}$}}&$\kappa>0$&$\br{H}{\tK}=-\tP$&$\br{H}{\tP}=\tK$&$\br{H}{M}=\kappa M$&&&\\
\hypertarget{Table:AK6}{\hyperlink{Eq:AK6}{$\mathsf{AK6}$}}&$\text{lorentzian newton}\oplus\mathbb R$&$\br{H}{\tK}=\tP$&$\br{H}{\tP}=\tK$&&&&\\
\rowcolor{Gray}
\hypertarget{Table:AK6+}{\hyperlink{Eq:AK6}{$\mathsf{AK6}^+_{\kappa}$}}&$\kappa>0$&$\br{H}{\tK}=\tP$&$\br{H}{\tP}=\tK$&$\br{H}{M}=\kappa M$&&&\\
\hypertarget{Table:AK7}{\hyperlink{Eq:AK7}{$\mathsf{AK7}_{\alpha_+}$}}&$\alpha_+>0$&$\br{H}{\tK}=-\tP$&$\br{H}{\tP}=\tK+\alpha_+\tP$&&&&\\
\rowcolor{Gray}
\hypertarget{Table:AK7+}{\hyperlink{Eq:AK7}{$\mathsf{AK7}^+_{\alpha_+,\kappa}$}}&$\alpha_+>0\qq\kappa>0$&$\br{H}{\tK}=-\tP$&$\br{H}{\tP}=\tK+\alpha_+\tP$&$\br{H}{M}=\kappa M$&&&\\
\hypertarget{Table:AK7-}{\hyperlink{Eq:AK7}{$\mathsf{AK7}^-_{\alpha_+,\kappa}$}}&$\alpha_+>0\qq\kappa>0$&$\br{H}{\tK}=-\tP$&$\br{H}{\tP}=\tK+\alpha_+\tP$&$\br{H}{M}=-\kappa M$&&&\\
\rowcolor{Gray}
\hypertarget{Table:AK8}{\hyperlink{Eq:AK8}{$\mathsf{AK8}_{\alpha_-}$}}&$\alpha_->0$&$\br{H}{\tK}=\tP$&$\br{H}{\tP}=\tK+\alpha_-\tP$&&&&\\
\hypertarget{Table:AK8+}{\hyperlink{Eq:AK8}{$\mathsf{AK8}^+_{\alpha_-,\kappa}$}}&$\alpha_->0\qq\kappa>0$&$\br{H}{\tK}=\tP$&$\br{H}{\tP}=\tK+\alpha_-\tP$&$\br{H}{M}=\kappa M$&&&\\
\rowcolor{Gray}
\hypertarget{Table:AK8-}{\hyperlink{Eq:AK8}{$\mathsf{AK8}^-_{\alpha_-,\kappa}$}}&$\alpha_->0\qq\kappa>0$&$\br{H}{\tK}=\tP$&$\br{H}{\tP}=\tK+\alpha_-\tP$&$\br{H}{M}=-\kappa M$&&&\\
\hline
\hypertarget{Table:AK9}{\hyperlink{Eq:AK9}{$\mathsf{AK9}$}}&$\text{carroll}\oplus\mathbb R$&&&&$\br{\tK}{\tP}=H$&&\\
\rowcolor{Gray}
\hypertarget{Table:AK10}{\hyperlink{Eq:AK10}{$\mathsf{AK10}$}}&$\mathfrak{iso}(d+1)\oplus\mathbb R$&&$\br{H}{\tP}=-\tK$&&$\br{\tK}{\tP}=H$&&$\br{\tP}{\tP}=-\tJ$\\
\hypertarget{Table:AK11}{\hyperlink{Eq:AK11}{$\mathsf{AK11}$}}&$\mathfrak{iso}(d,1)\oplus\mathbb R$&&$\br{H}{\tP}=\tK$&&$\br{\tK}{\tP}=H$&&$\br{\tP}{\tP}=\tJ$\\
\rowcolor{Gray}
\hypertarget{Table:AK12}{\hyperlink{Eq:AK12}{$\mathsf{AK12}$}}&$\so(d+1,1)\oplus\mathbb R$&$\br{H}{\tK}=\tP$&$\br{H}{\tP}=\tK$&&$\br{\tK}{\tP}=H$&$\br{\tK}{\tK}=-\tJ$&$\br{\tP}{\tP}=\tJ$\\
\hypertarget{Table:AK13}{\hyperlink{Eq:AK13}{$\mathsf{AK13}$}}&$\so(d+2)\oplus\mathbb R$&$\br{H}{\tK}=\tP$&$\br{H}{\tP}=-\tK$&&$\br{\tK}{\tP}=H$&$\br{\tK}{\tK}=-\tJ$&$\br{\tP}{\tP}=-\tJ$\\
\rowcolor{Gray}
\hypertarget{Table:AK14}{\hyperlink{Eq:AK14}{$\mathsf{AK14}$}}&$\so(d,2)\oplus\mathbb R$&$\br{H}{\tK}=-\tP$&$\br{H}{\tP}=\tK$&&$\br{\tK}{\tP}=H$&$\br{\tK}{\tK}=\tJ$&$\br{\tP}{\tP}=\tJ$\\
\hline
\hypertarget{Table:AK1b}{\hyperlink{Eq:AK1b}{$\mathsf{AK1b}$}}&$\text{centrally extended static}$&&&&$\br{\tK}{\tP}=M$&&\\
\rowcolor{Gray}
\hypertarget{Table:AK2b}{\hyperlink{Eq:AK2b}{$\mathsf{AK2b}$}}&&$\br{H}{\tK}=\tK$&$\br{H}{\tP}=\tP$&$\br{H}{M}=2M$&$\br{\tK}{\tP}=M$&&\\
\hypertarget{Table:AK3b}{\hyperlink{Eq:AK3b}{$\mathsf{AK3b}$}}&{centrally extended galilei}&&$\br{H}{\tP}=\tK$&&$\br{\tK}{\tP}=M$&&\\
\rowcolor{Gray}
\hypertarget{Table:AK4b}{\hyperlink{Eq:AK4b}{$\mathsf{AK4b}$}}&&&$\br{H}{\tP}=\tP$&$\br{H}{M}=M$&$\br{\tK}{\tP}=M$&&\\
\hypertarget{Table:AK5b}{\hyperlink{Eq:AK5b}{$\mathsf{AK5b}$}}&centrally extended euclidean newton&$\br{H}{\tK}=-\tP$&$\br{H}{\tP}=\tK$&&$\br{\tK}{\tP}=M$&&\\
\rowcolor{Gray}
\hypertarget{Table:AK6b}{\hyperlink{Eq:AK6b}{$\mathsf{AK6b}$}}&centrally extended lorentzian newton&$\br{H}{\tK}=\tP$&$\br{H}{\tP}=\tK$&&$\br{\tK}{\tP}=M$&&\\
\hypertarget{Table:AK7b}{\hyperlink{Eq:AK7b}{$\mathsf{AK7b}_{\alpha_+}$}}&$\alpha_+>0$&$\br{H}{\tK}=-\tP$&$\br{H}{\tP}=\tK+\alpha_+\tP$&$\br{H}{M}=\alpha_+M$&$\br{\tK}{\tP}=M$&&\\
\rowcolor{Gray}
\hypertarget{Table:AK8b}{\hyperlink{Eq:AK8b}{$\mathsf{AK8b}_{\alpha_-}$}}&$\alpha_->0$&$\br{H}{\tK}=\tP$&$\br{H}{\tP}=\tK+\alpha_-\tP$&$\br{H}{M}=\alpha_-M$&$\br{\tK}{\tP}=M$&&
  \end{tabular}
}
      \caption{Universal ambient aristotelian algebras $\alg$ admitting a scalar ideal $\ali=\Span M$}
      \label{Table:Ambient aristotelian algebras admitting a scalar ideal}
\end{table}

\subsection{Effective Klein pairs}
\label{section:Effective Klein pairs}
Having classified ambient aristotelian algebras [admitting a scalar ideal] in Section \ref{section:Ambient aristotelian algebras admitting a scalar ideal}, we now aim to identify the corresponding ambient aristotelian Klein pairs \UnskipRef{Definition:Ambient aristotelian Klein pairs}. Explicitly, we identify for each ambient aristotelian algebra $\alg$ compiled in Table \ref{Table:Ambient aristotelian algebras admitting a scalar ideal} the associated subalgebras $\alh\subset\alg$ of type $(0,1,1)$ \UnskipRef{Definition:Subspace of type}, and determine whether the Klein pair $(\alg,\alh)$ is effective or not. Effective Klein pairs will be collected in Table \ref{Table:Effective ambient aristotelian Klein pairs with scalar ideal}. For each non-effective Klein pair found, we explicit the corresponding  associated Klein pair \UnskipRef{Definition:Infinitesimal Klein pair}, the latter being lifshitzian [see Table \ref{Table:Lifshitzian Klein pairs}]. We will proceed along the lines followed in \cite[Section 3]{Figueroa-OFarrill2018} regarding the classification of effective kinematical Klein pairs \UnskipRef{Table:Effective kinematical Klein pairs}, procedure that we now briefly review:
\be
\item The first step consists in determining all possible subalgebras $\alh$ of type $(0,1,1)$ of $\alg$ \UnskipRef{Definition:Subspace of type}. Since the generator $\tJ$ is part of the $\so(d)$-structure of $\alg$, what is left to determine is the vectorial generator---denoted $\tX$ in the following---being a linear combination of the two vectorial generators $\tK$ and $\tP$. For definiteness, let us denote: 
\[
\tX=x\, \tK+y\, \tP\text{ where }(x,y)\in\mathbb R^2\setminus (0,0)
\]
hence $\tX$ is a non-trivial vectorial generator and $\alh=\tX\oplus\tJ$. By assumption, the canonical generators \UnskipRef{Definition:Ambient aristotelian algebra} ensure that $\br{\tJ}{\tJ}\sim\tJ\in\alh$ and $\br{\tJ}{\tX}\sim\tX\in\alh$, so that the only non-trivial condition for $\alh$ to be a subalgebra is that $\br{\tX}{\tX}\in\alh$. Inspection of the commutators in Table \ref{Table:Ambient aristotelian algebras admitting a scalar ideal} reveals that $\alh$ is a subalgebra for all algebras in the classification and for any value of $(x,y)\in\mathbb R^2\setminus (0,0)$.
\item Recall that two ambient aristotelian Klein pairs $(\alg,\alh_1)$ and $(\alg,\alh_2)$ are isomorphic if there exists an automorphism $\varphi\From\alg\to\alg$ [preserving the underlying $\so(d)$-decomposition] and such that $\varphi(\alh_1)=\alh_2$. For each subalgebra $\alh$ previously found, we thus check whether there exists an automorphism $\varphi\in\Aut(\alg,\ali)$ of $\alg$  [preserving the ideal $\ali=\Span M$] such that $\varphi(\alh)=\alh_0$, where $\alh=\tX\oplus\tJ$ and $\alh_0=\tK\oplus\tJ$. This boils down to classify the orbits of the intersection $\Aut(\alg,\ali)\cap\GLR{2}$ acting on $\mathbb R^2\setminus (0,0)$. In case $\tX$ and $\tK$ do not belong to the same orbit, we perform a change of adapted basis \eqref{equation:subgroup} allowing ro rewrite $\tX$ as $\tK$ [at the expense of modifying the associated commutation relations]. 
\item For all ambient aristotelian Klein pairs $(\alg,\alh)$ found, we check whether the pair is effective or not. Similarly to the case of kinematical Klein pairs [\cf footnote \ref{footnote:effectiveness}], an ambient aristotelian Klein pair is effective if and only if the generators $\tK\in\alh$ do not span an ideal of $\alg$.\footnote{Recall that the canonical relation $\br{\tJ}{\tP}=\tP$ prevents $\tJ$ to belong to an ideal contained in $\alh$.} The possible obstructions\footnote{As will be expanded upon in Section \ref{section:Projectable ambient triplets}, different choices of obstruction yield 
different $\ad_\alh$-invariant metric structures, as the set of commutators $\br{\alh}{\alp}\subset\alp$ determine what (possibly degenerate) metric structure on $\alp$ is $\ad_\alh$-invariant, \eg:
\bi
\item $\mathsf{G}$-ambient:  $\br{\tK}{H}\sim \tP$
\item bargmannian: $\br{\tK}{\tP}\sim M$ \qq $\br{\tK}{H}\sim \tP$
\item carrollian$\oplus\mathbb R$: $\br{\tK}{\tP}\sim H$
\item (pseudo)-riemannian$\oplus\mathbb R$: $\br{\tK}{H}\sim \tP$ \qq $\br{\tK}{\tP}\sim H$.
\ei} for $\tK$ to span an ideal of $\alg$ take the form: 
\bea
\label{equation:obstructions}
\hspace{-1.5cm}\begin{tikzcd}[row sep = small,column sep = small,ampersand replacement=\&]
\br{\tK}{\tK}\sim \tJ \& \br{\tK}{M}\sim \tP \& \br{\tK}{H}\sim \tP\&\br{\tK}{\tP}\sim M\&\br{\tK}{\tP}\sim H\&\br{\tK}{\tP}\sim \tJ\, .\hspace{-1cm}
\end{tikzcd}
\eea
Whenever the above obstructions all vanish, the pair is non-effective and then excluded from Table \ref{Table:Effective ambient aristotelian Klein pairs with scalar ideal}. 
We then identify the associated effective Klein pair as one of the lifshitzian pairs displayed in Table \ref{Table:Lifshitzian Klein pairs}.
\ee
Let us now apply the previously outlined procedure to the Lie algebras classified in Table \ref{Table:Ambient aristotelian algebras admitting a scalar ideal}, starting with the first branch \UnskipRef{section:Ambient aristotelian algebras admitting a scalar ideal}. 

\paragraph{First branch}
\bi
\item $\hyperlink{Table:AK1}{\mathsf{AK1}}$  
\hypertarget{Eq:AK1h0}{}

The group of automorphisms $\Aut(\hyperlink{Table:AK1}{\mathsf{AK1}},\ali)$ of the trivial extension of the static algebra coincides with the full group $\TR{2}\times\GLR{2}$ \eqref{equation:subgroup}, parameterised by elements of the form:
\bea
\label{equation:varphiaction}
\varphi=
\begin{pmatrix}
\mu&0\\
\chi&\lambda
\end{pmatrix}
\times
\underbrace{\begin{pmatrix}
\alpha&\beta\\
\gamma&\delta
\end{pmatrix}}_{\mathsf{G}}
\text{ with }\mu,\lambda\in\GroupR\text{, }\chi\in\mathbb R\text{ and }\mathsf{G}\in\GLR{2}\, .
\eea
There is a unique orbit of $\Aut(\hyperlink{Table:AK1}{\mathsf{AK1}},\ali)\cap\GLR{2}=\GLR{2}$ on $\mathbb R^2\setminus (0,0)$, hence there is a unique ambient aristotelian Klein pair $\hyperlink{Table:AK1h0}{(\mathsf{AK1},\alh_0)}$ with underlying Lie algebra $\hyperlink{Table:AK1}{\mathsf{AK1}}$. The latter is not effective since the boost generators $\tK$ span an ideal of $\hyperlink{Table:AK1}{\mathsf{AK1}}$. As mentioned previously, the generators $\tJ$ do not belong to an ideal of $\alg$ contained in $\alh$ hence $\tK$ spans the ideal core $\aln$ of $(\alg,\alh)$ \UnskipRef{Definition:Infinitesimal Klein pair}. The associated effective Klein pair $(\alg/\aln,\alh/\aln)$ is isomorphic to the lifshitzian Klein pair $\hyperlink{Table:Li1}{\mathsf{Li1}}$.

\item $\hyperlink{Table:AK1a}{\mathsf{AK1a}}$ 
\hypertarget{Eq:AK1ah0}{}

The automorphism group $\Aut(\hyperlink{Table:AK1a}{\mathsf{AK1a}},\ali)$ is given by elements of the form \eqref{equation:varphiaction} with $\lambda=1$. A reasoning similar to the one above thus shows that there is a unique ambient aristotelian Klein pair $\hyperlink{Table:AK1ah0}{(\mathsf{AK1a},\alh_0)}$ with underlying Lie algebra $\hyperlink{Table:AK1a}{\mathsf{AK1a}}$, the latter being non-effective with associated effective Klein pair $\hyperlink{Table:Li2}{\mathsf{Li2}}$.

\item $\hyperlink{Table:AK2}{\mathsf{AK2}_\kappa}$ 
\hypertarget{Eq:AK2h0}{}

Similarly to the previous case, the automorphism group $\Aut(\hyperlink{Table:AK2}{\mathsf{AK2}_\kappa},\ali)$ is given by elements of the form \eqref{equation:varphiaction} with $\lambda=1$ so that there is a unique ambient aristotelian Klein pair $\hyperlink{Table:AK2h0}{(\mathsf{AK2},\alh_0)}$ with underlying Lie algebra $\hyperlink{Table:AK2}{\mathsf{AK2}_\kappa}$, the latter being non-effective with associated effective Klein pair $\hyperlink{Table:Li3}{\mathsf{Li3}_z}$ [with $z=0$ for $\hyperlink{Table:AK2}{\mathsf{AK2}}$, $z=\kappa$ for $\hyperlink{Table:AK2+}{\mathsf{AK2}^+_\kappa}$ and $z=-\kappa$ for $\hyperlink{Table:AK2-}{\mathsf{AK2}^-_\kappa}$].

\item $\hyperlink{Table:AK3}{\mathsf{AK3}}$ 

The automorphism group $\Aut(\hyperlink{Table:AK3}{\mathsf{AK3}},\ali)$ is given by elements of the form \eqref{equation:varphiaction} with $\alpha=\lambda\delta$, $\beta=0$. 

There are two orbits $\begin{pmatrix}
x&
y
\end{pmatrix}$ to distinguish:
\be
\hypertarget{Eq:AK3h0}{}
\item $\begin{pmatrix}
1&
0
\end{pmatrix}$ whenever $y=0$.

All obstructions \eqref{equation:obstructions} vanish hence the pair $\hyperlink{Table:AK3h0}{(\mathsf{AK3},\alh_0)}$ is non-effective with associated effective Klein pair $\hyperlink{Table:Li1}{\mathsf{Li1}}$.

\item $\begin{pmatrix}
0&
1
\end{pmatrix}$ whenever $y\neq0$.

In this case, there is no automorphism of $\hyperlink{Table:AK3}{\mathsf{AK3}}$ mapping $\alh_1=\alh=\tX\oplus\tJ$ to $\alh_0=\tK\oplus\tJ$, so that $\hyperlink{Table:AK3h0}{(\mathsf{AK3},\alh_0)}$ and $\hyperlink{Table:AK3h1}{(\mathsf{AK3},\alh_1)}$ are not isomorphic as Klein pairs. 

To facilitate comparison, we perform a change of adapted basis \eqref{equation:subgroup}
in order to bring $\alh_1$ to the canonical form $\alh_1=\tK\oplus\tJ$. This change of basis does not preserve the commutation relations that now read:
\mybox{Ice}{
$\hypertarget{Eq:AK3h1}{\hyperlink{Table:AK3h1}{(\mathsf{AK3},\alh_1)}}\quad\text{galilei}\oplus\mathbb R$
}
{
\begin{tikzcd}[row sep = small,column sep = small,ampersand replacement=\&]
\br{\tK}{H}=\tP
\end{tikzcd}
}
The unique non-trivial commutator corresponds to one of the obstructions listed in \eqref{equation:obstructions} so that the resulting Klein pair $\hyperlink{Table:AK3h1}{(\mathsf{AK3},\alh_1)}$ is effective.
\ee
\item $\hyperlink{Table:AK3a}{\mathsf{AK3a}}$ 

The treatment of this case is parallel to the previous case $\hyperlink{Table:AK3}{\mathsf{AK3}}$ [with the sole exception that the extra commutator $\br{H}{M}=M$ imposes $\lambda=1$ in \eqref{equation:varphiaction}, which does not affect the analysis]. There are thus two associated Klein pairs:
\be
\item\hypertarget{Eq:AK3ah0}{} a non-effective Klein pair $\hyperlink{Table:AK3ah0}{(\mathsf{AK3a},\alh_0)}$ with commutation relations as in Table \ref{Table:Ambient aristotelian algebras admitting a scalar ideal} and with associated effective Klein pair $\hyperlink{Table:Li2}{\mathsf{Li2}}$.
\item an effective Klein pair  $\hyperlink{Table:AK3ah1}{(\mathsf{AK3a},\alh_1)}$ with commutation relations:
\mybox{Ice}{
$\hypertarget{Eq:AK3ah1}{\hyperlink{Table:AK3ah1}{(\mathsf{AK3a},\alh_1)}}\quad\text{galilei}\inplus\mathbb R$
}
{
\begin{tikzcd}[row sep = small,column sep = small,ampersand replacement=\&]
\br{\tK}{H}=\tP\&\br{H}{M}= M
\end{tikzcd}
}
\ee

\item $\hyperlink{Table:AK4}{\mathsf{AK4}_\kappa}$ 

Computing the automorphism group $\hyperlink{Table:AK4}{\mathsf{AK4}_\kappa}$, the latter is given by elements \eqref{equation:varphiaction} with $\lambda=1$ and $\beta=\gamma=0$, for all values of the parameter $\kappa\in\mathbb R$. There are three orbits to consider:
\be
\item\hypertarget{Eq:AK4h0}{}  $\begin{pmatrix}
1&
0
\end{pmatrix}$ whenever $y=0$ (hence $x\neq0$) so that $\alh=\alh_0$. All obstructions \eqref{equation:obstructions} vanish hence the pair $\hyperlink{Table:AK4h0}{(\mathsf{AK4}_\kappa,\alh_0)}$ is non-effective, with associated effective Klein pair $\hyperlink{Table:Li3}{\mathsf{Li3}_z}$ [with $z=0$ for $\hyperlink{Table:AK4}{\mathsf{AK4}}$, $z=\kappa$ for $\hyperlink{Table:AK4+}{\mathsf{AK4}^+_\kappa}$ and $z=-\kappa$ for $\hyperlink{Table:AK4-}{\mathsf{AK4}^-_\kappa}$]. 
\item \hypertarget{Eq:AK4h1}{}$\begin{pmatrix}
0&
1
\end{pmatrix}$ whenever $x=0$ (hence $y\neq0$). Performing a change of basis 
yields:
\[
\begin{tikzcd}[row sep = small,column sep = small,ampersand replacement=\&]
\hyperlink{Table:AK4h1}{(\mathsf{AK4}_\kappa,\alh_1)}\&\br{H}{\tK}=\tK\&\br{H}{M}=\kappa M\&\text{ where }\kappa\in\mathbb R\text{ is arbitrary.}
\end{tikzcd}
\]
The generator $\tK$ spans a non-trivial ideal hence the pair $\hyperlink{Table:AK4h1}{(\mathsf{AK4}_\kappa,\alh_1)}$ is non-effective, with associated effective Klein pair $\hyperlink{Table:Li1}{\mathsf{Li1}}$ for $\kappa=0$ and $\hyperlink{Table:Li2}{\mathsf{Li2}}$ for $\kappa\neq0$ [after rescaling of $H$].
\item $\begin{pmatrix}
1&
1
\end{pmatrix}$ whenever $x\neq0$ and $y\neq0$. A change of adapted basis 
allows to put the pair in the following form:
\mybox{Ice}{
$\hypertarget{Eq:AK4h2}{\hyperlink{Table:AK4h2}{(\mathsf{AK4}_\kappa,\alh_2)}}$
}
{
\begin{tikzcd}[row sep = small,column sep = small,ampersand replacement=\&]
\br{\tK}{H}=\tP\&\br{H}{\tP}=\tP\&\br{H}{M}=\kappa M\&\text{ where }\kappa\in\mathbb R\text{ is arbitrary}
\end{tikzcd}
}
The obstruction $\br{\tK}{H}=\tP$ is hit, hence the pair $\hypertarget{Eq:AK4h2}{\hyperlink{Table:AK4h2}{(\mathsf{AK4}_\kappa,\alh_2)}}$ is effective.
\ee

\item $\hyperlink{Table:AK5}{\mathsf{AK5}}$ 

The automorphism group $\Aut(\hyperlink{Table:AK5}{\mathsf{AK5}},\ali)$ is given by elements of the form \eqref{equation:varphiaction} with $\alpha=\epsilon\delta$, $\beta=-\epsilon\gamma$ and $\lambda=\epsilon$, with $\epsilon\in\pset{-1,1}$. There is a single orbit corresponding to the effective Klein pair isomorphic to the trivial extension of the euclidean newton Klein pair:
\bi
\item[]
\mybox{Ice}{
$\hypertarget{Eq:AK5h0}{\hyperlink{Table:AK5h0}{(\mathsf{AK5},\alh_0)}}\quad\text{euclidean newton}\oplus\mathbb R$
}
{
\begin{tikzcd}[row sep = small,column sep = small,ampersand replacement=\&]
\br{\tK}{H}=\tP\&\br{H}{\tP}=\tK
\end{tikzcd}
}
\ei
\item $\hyperlink{Table:AK5+}{\mathsf{AK5}^+_\kappa}$ 

The automorphism group $\Aut(\hyperlink{Table:AK5+}{\mathsf{AK5}^+_\kappa},\ali)$ is given by elements of the form \eqref{equation:varphiaction} with $\alpha=\delta$, $\beta=-\gamma$ and $\lambda=1$. There is a single orbit yielding the effective Klein pair:
\bi
\item[]
\mybox{Ice}{
$\hypertarget{Eq:AK5+h0}{\hyperlink{Table:AK5+h0}{(\mathsf{AK5}^+_\kappa,\alh_0)}}$
}
{
\begin{tikzcd}[row sep = small,column sep = small,ampersand replacement=\&]
\br{\tK}{H}=\tP\&\br{H}{\tP}=\tK\&\br{H}{M}=\kappa M\&\text{ where }\kappa>0
\end{tikzcd}
}
\ei

\item $\hyperlink{Table:AK6}{\mathsf{AK6}}$ 

The automorphism group $\Aut(\hyperlink{Table:AK6}{\mathsf{AK6}},\ali)$ is given by elements of the form \eqref{equation:varphiaction} with $\alpha=\epsilon\delta$, $\beta=\epsilon\gamma$ and $\lambda=\epsilon$, with $\epsilon\in\pset{-1,1}$. Contradistinctly to the case $\hyperlink{Table:AK5}{\mathsf{AK5}}$, there are two orbits to distinguish:
\be
\item $\begin{pmatrix}
1&
0
\end{pmatrix}$ whenever $x^2-y^2\neq0$ so that $\alh=\alh_0$. The corresponding effective Klein pair is isomorphic to the trivial extension of the lorentzian newton Klein pair:
\bi
\item[]
\mybox{Ice}{
$\hypertarget{Eq:AK6h0}{\hyperlink{Table:AK6h0}{(\mathsf{AK6},\alh_0)}}
\quad\text{lorentzian newton}\oplus\mathbb R$
}
{
\begin{tikzcd}[row sep = small,column sep = small,ampersand replacement=\&]
\br{\tK}{H}=\tP\&\br{H}{\tP}=-\tK
\end{tikzcd}
}
\ei
\item\hypertarget{Eq:AK6h1}{} $\begin{pmatrix}
1&
1
\end{pmatrix}$ whenever $x^2-y^2=0$.\footnote{Note that there are \textit{a priori} two orbits to distinguish, corresponding to the two branches of solutions $x=\pm y$. However, the sign freedom mediated by $\epsilon$ in $\Aut(\hyperlink{Table:AK6}{\mathsf{AK6}},\ali)$ allows to map one branch to the other hence there is a single orbit $\begin{pmatrix}
1&
1
\end{pmatrix}$ whenever $x^2-y^2=0$. As we will see, this will no longer hold for the case $\hyperlink{Table:AK6+}{\mathsf{AK6}^+_\kappa}$ hence the need to further distinguish those two orbits in this case.}
Performing a change of basis 
 yields:
\[
\begin{tikzcd}[row sep = small,column sep = small,ampersand replacement=\&]
\hyperlink{Table:AK6h1}{(\mathsf{AK6},\alh_1)}\&\br{H}{\tK}=-\tK\&\br{H}{\tP}=\tP.
\end{tikzcd}
\]
Again, $\tK$ spans a non-trivial ideal hence the pair $\hyperlink{Table:AK6h1}{(\mathsf{AK6},\alh_1)}$ is non-effective, with associated effective Klein pair $\hyperlink{Table:Li3}{\mathsf{Li3}_0}$.
\ee
\item $\hyperlink{Table:AK6+}{\mathsf{AK6}^+_\kappa}$  

The automorphism group $\Aut(\hyperlink{Table:AK6+}{\mathsf{AK6}^+_\kappa},\ali)$ is given by elements of the form \eqref{equation:varphiaction} with $\alpha=\delta$, $\beta=\gamma$ and $\lambda=1$. We distinguish between the following orbits:
\be
\item $\begin{pmatrix}
1&
0
\end{pmatrix}$ whenever $x^2-y^2\neq0$ so that $\alh=\alh_0$. The corresponding effective Klein pair reads:
\bi
\item[]
\mybox{Ice}{
$\hypertarget{Eq:AK6+h0}{\hyperlink{Table:AK6+h0}{(\mathsf{AK6}^+_\kappa,\alh_0)}}$
}
{
\begin{tikzcd}[row sep = small,column sep = small,ampersand replacement=\&]
\br{\tK}{H}=\tP\&\br{H}{\tP}=-\tK\&\br{H}{M}=\kappa M\&\text{ where }\kappa>0
\end{tikzcd}
}
\ei
\item\hypertarget{Eq:AK6+h1}{}  $\begin{pmatrix}
1&
\epsilon
\end{pmatrix}$ whenever $x=\epsilon y$, with $\epsilon\in\pset{-1,1}$. Performing a change of basis 
yields:
\[
\begin{tikzcd}[row sep = small,column sep = small,ampersand replacement=\&]
\hyperlink{Table:AK6+h1}{(\mathsf{AK6}^+_\kappa,\alh_{1_\epsilon})}\&\br{H}{\tK}=\epsilon\tK\&\br{H}{\tP}=-\epsilon\tP\&\&\br{H}{M}=\kappa M
\end{tikzcd}
\]
The boost generators $\tK$ span a non-trivial ideal hence the pair $\hyperlink{Table:AK6+h1}{(\mathsf{AK6}^+_\kappa,\alh_{1_\epsilon})}$ is non-effective, with associated effective Klein pair $\hyperlink{Table:Li3}{\mathsf{Li3}_z}$, with $z>0$ (resp. $z<0$) for $\epsilon=-1$ (resp. $\epsilon=1$).
\ee

\item $\hyperlink{Table:AK7}{\mathsf{AK7}_{\alpha_+,\kappa}}$ 

The automorphism group $\Aut(\hyperlink{Table:AK7}{\mathsf{AK7}_{\alpha_+,\kappa}},\ali)$ is given by elements of the form \eqref{equation:varphiaction} with 
$\delta=\alpha+\alpha_+\beta$, $\gamma=-\beta$ and $\lambda=1$. The number of orbits varies according to the values of the parameter $\alpha_+>0$:
\be
\item $\alpha_+<2$: one orbit
\item $\alpha_+=2$: two orbits
\item $\alpha_+>2$: three orbits
\ee
Defining $\Delta_+:=x^2-\alpha_+xy+y^2$, the corresponding orbits are the following:
\be
\item $\begin{pmatrix}
1&
0
\end{pmatrix}$ whenever $\Delta_+\neq0$ so that $\alh=\alh_0$. This orbit is available for any value of $\alpha_+ $ and is unique whenever $0<\alpha_+<2$ [since the equation $\Delta_+=0$ does not possess real solutions then]. The Klein pair is effective and reads:
\mybox{Ice}{
$\hypertarget{Eq:AK7h0}{\hyperlink{Table:AK7h0}{(\mathsf{AK7}_{\alpha_+,\kappa},\alh_0)}}$
}
{
\begin{tikzcd}[row sep = small,column sep = small,ampersand replacement=\&]
\br{\tK}{H}=\tP\&\br{H}{\tP}=\tK+\alpha_+\tP\&\br{H}{M}=\kappa M\&\alpha_+>0
,\ 
\kappa\in\mathbb R
\end{tikzcd}
}
\item\hypertarget{Eq:AK7h1}{} Whenever $\alpha_+=2$, the orbit $\begin{pmatrix}
1&
1
\end{pmatrix}$ is available for every solution of $\Delta_+=0\Leftrightarrow x=y$, on top of the previous one. Performing a change of basis 
yields the non-effective Klein pair:
\[
\begin{tikzcd}[row sep = small,column sep = small,ampersand replacement=\&]
\hyperlink{Table:AK7h1}{(\mathsf{AK7}_{\alpha_+=2,\kappa},\alh_1)}\&\br{H}{\tK}= \tK\&\br{H}{\tP}=\tK+\tP\&\br{H}{M}=\kappa M\&\text{ where }\kappa\in\mathbb R\text{ is arbitrary.}
\end{tikzcd}
\]
The associated effective Klein pair is isomorphic to the lifshitz Klein pair $\hyperlink{Table:Li3}{\mathsf{Li3}_z}$, with $z=\kappa$.
\item\hypertarget{Eq:AK7h2}{} Finally, whenever $\alpha_+>2$, there are two additional orbits on top of the canonical one  $\begin{pmatrix}
1&
0
\end{pmatrix}$, 
 corresponding to the two branches of solutions of $\Delta_+=0$,
  and denoted $\begin{pmatrix}
\frac{1}{2}(\alpha_++\epsilon\sqrt{\alpha_+^2-4})&
1
\end{pmatrix}$, where $\epsilon\in\pset{-1,1}$. Performing a change of basis 
yields the two non-effective families of Klein pairs (where $\kappa\in\mathbb R$ is arbitrary):
\[
\begin{tikzcd}[row sep = small,column sep = small,ampersand replacement=\&]
\hspace{-1cm}\hyperlink{Table:AK7h2}{(\mathsf{AK7}_{\alpha_+>2,\kappa},\alh_{2_\epsilon})}\&\br{H}{\tK}=\frac{1}{2}\big(\alpha_+-\epsilon\sqrt{\alpha_+^2-4}\big)\tK\&\br{H}{\tP}=\frac{1}{2}\big(\alpha_++\epsilon\sqrt{\alpha_+^2-4}\big)\tP\&\br{H}{M}=\kappa M\, .
\end{tikzcd}
\]
The associated effective Klein pair is unique and isomorphic to the lifshitz Klein pair $\hyperlink{Table:Li3}{\mathsf{Li3}_z}$ [after rescaling of $H$ and setting $z=\frac{2\, \kappa}{\alpha_+-\epsilon\sqrt{\alpha_+^2-4}}$].
\ee

\item $\hyperlink{Table:AK8}{\mathsf{AK8}_{\alpha_-,\kappa}}$ 

There are three orbits for any value of $\alpha_->0$. 
Defining $\Delta_-:=x^2+\alpha_-xy-y^2$, the corresponding orbits are the following:
\be
\item $\begin{pmatrix}
1&
0
\end{pmatrix}$ whenever $\Delta_-\neq0$ so that $\alh=\alh_0$. 
The Klein pair is effective and reads:
\mybox{Ice}{
$\hypertarget{Eq:AK8h0}{\hyperlink{Table:AK8h0}{(\mathsf{AK8}_{\alpha_-,\kappa},\alh_0)}}$
}
{
\begin{tikzcd}[row sep = small,column sep = small,ampersand replacement=\&]
\br{\tK}{H}=-\tP\&\br{H}{\tP}=\tK+\alpha_-\tP\&\br{H}{M}=\kappa M\&\alpha_->0
,\ 
\kappa\in\mathbb R
\end{tikzcd}
}
\item\hypertarget{Eq:AK8h1}{}  $\begin{pmatrix}
-\frac{1}{2}(\alpha_-+\epsilon\sqrt{\alpha_-^2+4})&
1
\end{pmatrix}$
where $\epsilon\in\pset{-1,1}$ so that $\Delta_-=0$.
Performing a change of basis 
yields the two non-effective (families of) Klein pairs (where $\kappa\in\mathbb R$ is arbitrary):
\[
\begin{tikzcd}[row sep = small,column sep = small,ampersand replacement=\&]
\hspace{-1cm}\hyperlink{Table:AK8h1}{(\mathsf{AK8}_{\alpha_-,\kappa},\alh_{1_\epsilon})}\&\br{H}{\tK}=\frac{1}{2}\big(\alpha_--\epsilon\sqrt{\alpha_-^2+4}\big)\tK\&\br{H}{\tP}=\frac{1}{2}\big(\alpha_-+\epsilon\sqrt{\alpha_-^2+4}\big)\tP\&\br{H}{M}=\kappa M\, .
\end{tikzcd}
\]

The associated effective Klein pair is unique and isomorphic to the lifshitz Klein pair $\hyperlink{Table:Li3}{\mathsf{Li3}_z}$ [after rescaling of $H$ and $z=\frac{2\, \kappa}{\alpha_-+\epsilon\sqrt{\alpha_-^2+4}}$].

\ee

\ei
\paragraph{Second branch}
Recall that ambient aristotelian algebras belonging to the second branch of the above classification \UnskipRef{section:Ambient aristotelian algebras admitting a scalar ideal} are trivial extensions of the kinematical algebras from the second branch of the classification performed in \cite[Section 2.4]{Figueroa-OFarrill2017a}. The classification of the corresponding Klein pairs then follows \emph{mutatis mutandis} the classification of non-galilean kinematical Klein pairs performed in \cite[Section 3.1]{Figueroa-OFarrill2018}. Note that all Klein pairs are effective in this case.
\bi
\item $\hyperlink{Table:AK9}{\mathsf{AK9}}$

There is a unique orbit:
\bi
\item[]
\mybox{Ice}{
$\hypertarget{Eq:AK9h0}{\hyperlink{Table:AK9h0}{(\mathsf{AK9},\alh_0)}} \quad \text{carroll}\oplus\mathbb R$
}
{
\begin{tikzcd}[row sep = small,column sep = small,ampersand replacement=\&]
\br{\tK}{\tP}=H
\end{tikzcd}
}
\ei
\item $\hyperlink{Table:AK10}{\mathsf{AK10}}$  

There are two orbits:
\be
\item $y=0$
\mybox{Ice}{
$\hypertarget{Eq:AK10h0}{\hyperlink{Table:AK10h0}{(\mathsf{AK10},\alh_0)}} \quad \text{ds carroll}\oplus\mathbb R$
}
{
\begin{tikzcd}[row sep = small,column sep = small,ampersand replacement=\&]
\br{\tK}{\tP}=H\&\br{H}{\tP}=-\tK\&\br{\tP}{\tP}=-\tJ
\end{tikzcd}
}
\item $y\neq0$
\mybox{Ice}{
$\hypertarget{Eq:AK10h1}{\hyperlink{Table:AK10h1}{(\mathsf{AK10},\alh_1)}}\quad \text{euclidean}\oplus\mathbb R$

}
{
\begin{tikzcd}[row sep = small,column sep = small,ampersand replacement=\&]
\br{\tK}{\tP}=H\&\br{\tK}{H}=-\tP\&\br{\tK}{\tK}=-\tJ
\end{tikzcd}
}
\ee
\item $\hyperlink{Table:AK11}{\mathsf{AK11}}$

There are two orbits:
\be
\item $y=0$

\mybox{Ice}{
$\hypertarget{Eq:AK11h0}{\hyperlink{Table:AK11h0}{(\mathsf{AK11},\alh_0)}}\quad \text{ads carroll}\oplus\mathbb R$
}
{
\begin{tikzcd}[row sep = small,column sep = small,ampersand replacement=\&]
\br{\tK}{\tP}=H\&\br{H}{\tP}=\tK\&\br{\tP}{\tP}=\tJ
\end{tikzcd}
}
\item $y\neq0$
\mybox{Ice}{
$\hypertarget{Eq:AK11h1}{\hyperlink{Table:AK11h1}{(\mathsf{AK11},\alh_1)}}\quad \text{minkowski}\oplus\mathbb R$
}
{
\begin{tikzcd}[row sep = small,column sep = small,ampersand replacement=\&]
\br{\tK}{\tP}=H\&\br{\tK}{H}=\tP\&\br{\tK}{\tK}=\tJ
\end{tikzcd}
}
\ee
\item $\hyperlink{Table:AK12}{\mathsf{AK12}}$ 

There are three orbits according to the value of $\Delta:=x^2-y^2$.
\be
\item $\Delta>0$

\mybox{Ice}{
$\hypertarget{Eq:AK12h0}{\hyperlink{Table:AK12h0}{(\mathsf{AK12},\alh_0)}}\quad \text{hyperbolic}\oplus\mathbb R$
}
{
\begin{tikzcd}[row sep = small,column sep = small,ampersand replacement=\&]
\br{\tK}{\tP}=H\&\br{\tK}{H}=-\tP\&\br{H}{\tP}=\tK\&\br{\tP}{\tP}=\tJ\&\br{\tK}{\tK}=-\tJ
\end{tikzcd}
}
\item $\Delta=0$

\mybox{Ice}{
$\hypertarget{Eq:AK12h1}{\hyperlink{Table:AK12h1}{(\mathsf{AK12},\alh_1)}}\quad \text{carrollian light-cone}\oplus\mathbb R$
}
{
\begin{tikzcd}[row sep = small,column sep = small,ampersand replacement=\&]
\br{\tK}{\tP}=H+\tJ\&\br{H}{\tP}=-\tP\&\br{\tK}{H}=-\tK
\end{tikzcd}
}
\item $\Delta<0$
\mybox{Ice}{
$\hypertarget{Eq:AK12h2}{\hyperlink{Table:AK12h2}{(\mathsf{AK12},\alh_2)}}\quad \text{de sitter}\oplus\mathbb R$
}
{
\begin{tikzcd}[row sep = small,column sep = small,ampersand replacement=\&]
\br{\tK}{\tP}=H\&\br{\tK}{H}=\tP\&\br{H}{\tP}=-\tK\&\br{\tP}{\tP}=-\tJ\&\br{\tK}{\tK}=\tJ
\end{tikzcd}
}
\ee

\item $\hyperlink{Table:AK13}{\mathsf{AK13}}$

There is a single orbit:
\bi
\item[]
\mybox{Ice}{
$\hypertarget{Eq:AK13h0}{\hyperlink{Table:AK13h0}{(\mathsf{AK13},\alh_0)}}\quad \text{sphere}\oplus\mathbb R$
}
{
\begin{tikzcd}[row sep = small,column sep = small,ampersand replacement=\&]
\br{\tK}{\tP}=H\&\br{\tK}{H}=-\tP\&\br{H}{\tP}=-\tK\&\br{\tP}{\tP}=-\tJ\&\br{\tK}{\tK}=-\tJ
\end{tikzcd}
}
\ei
\item $\hyperlink{Table:AK14}{\mathsf{AK14}}$ 

There is a single orbit:
\bi
\item[]
\mybox{Ice}{
$\hypertarget{Eq:AK14h0}{\hyperlink{Table:AK14h0}{(\mathsf{AK14},\alh_0)}}\quad \text{anti de sitter}\oplus\mathbb R$
}
{
\begin{tikzcd}[row sep = small,column sep = small,ampersand replacement=\&]
\br{\tK}{\tP}=H\&\br{\tK}{H}=\tP\&\br{H}{\tP}=\tK\&\br{\tP}{\tP}=\tJ\&\br{\tK}{\tK}=\tJ
\end{tikzcd}
}
\ei
\ei
\paragraph{Third branch}

The treatment of the third branch is parallel to the one in the first branch, with the sole exception that all Klein pairs are effective thanks to the presence of the obstruction $\br{\tK}{\tP}=M$. 
\bi
\item $\hyperlink{Table:AK1b}{\mathsf{AK1b}}$

There is a unique orbit:
\bi
\item[]
\mybox{Ice}{
$\hypertarget{Eq:AK1bh0}{\hyperlink{Table:AK1bh0}{(\mathsf{AK1b},\alh_0)}} \quad \text{centrally extended static}$
}
{
\begin{tikzcd}[row sep = small,column sep = small,ampersand replacement=\&]
\br{\tK}{\tP}=M
\end{tikzcd}
}
\ei
\item $\hyperlink{Table:AK2b}{\mathsf{AK2b}}$

There is a unique orbit: 

\bi
\item[]
\mybox{Ice}{
$\hypertarget{Eq:AK2bh0}{\hyperlink{Table:AK2bh0}{(\mathsf{AK2b},\alh_0)}}$
}
{
\begin{tikzcd}[row sep = small,column sep = small,ampersand replacement=\&]
\br{\tK}{\tP}= M\&\br{H}{\tK}= \tK\&\br{H}{\tP}= \tP\&\br{H}{M}=2M
\end{tikzcd}
}
\ei
\item $\hyperlink{Table:AK3b}{\mathsf{AK3b}}$

There are two orbits:
\be
\item $y=0$

\mybox{Ice}{
$\hypertarget{Eq:AK3bh0}{\hyperlink{Table:AK3bh0}{(\mathsf{AK3b},\alh_0)}}$
}
{
\begin{tikzcd}[row sep = small,column sep = small,ampersand replacement=\&]
\br{\tK}{\tP}= M\&\br{H}{\tP}=\tK
\end{tikzcd}
}
\item $y\neq0$
\mybox{Ice}{
$\hypertarget{Eq:AK3bh1}{\hyperlink{Table:AK3bh1}{(\mathsf{AK3b},\alh_1)}}\quad \text{centrally extended galilei}$
}
{
\begin{tikzcd}[row sep = small,column sep = small,ampersand replacement=\&]
\br{\tK}{\tP}= M\&\br{\tK}{H}=\tP
\end{tikzcd}
}
\ee
\item $\hyperlink{Table:AK4b}{\mathsf{AK4b}}$

There are three orbits:
\be
\item $y=0$

\mybox{Ice}{
$\hypertarget{Eq:AK4bh0}{\hyperlink{Table:AK4bh0}{(\mathsf{AK4b},\alh_0)}}$
}
{
\begin{tikzcd}[row sep = small,column sep = small,ampersand replacement=\&]
\br{\tK}{\tP}= M\&\br{H}{\tP}=\tP\&\br{H}{M}=M
\end{tikzcd}
}
\item $y\neq0$, $x=0$ 

\mybox{Ice}{
$\hypertarget{Eq:AK4bh1}{\hyperlink{Table:AK4bh1}{(\mathsf{AK4b},\alh_1)}}$
}
{
\begin{tikzcd}[row sep = small,column sep = small,ampersand replacement=\&]
\br{\tK}{\tP}= M\&\br{H}{\tK}=\tK\&\br{H}{M}=M
\end{tikzcd}
}
\item $y\neq0$, $x\neq0$
\mybox{Ice}{
$\hypertarget{Eq:AK4bh2}{\hyperlink{Table:AK4bh2}{(\mathsf{AK4b},\alh_2)}}$
}
{
\begin{tikzcd}[row sep = small,column sep = small,ampersand replacement=\&]
\br{\tK}{\tP}= M\&\br{\tK}{H}=\tP\&\br{H}{\tP}=\tP\&\br{H}{M}=M
\end{tikzcd}
}
\ee
\item $\hyperlink{Table:AK5b}{\mathsf{AK5b}}$

There is a unique orbit:
\bi
\item[]
\mybox{Ice}{
$\hypertarget{Eq:AK5bh0}{\hyperlink{Table:AK5bh0}{(\mathsf{AK5b},\alh_0)}}\quad \text{centrally extended euclidean newton}$
}
{
\begin{tikzcd}[row sep = small,column sep = small,ampersand replacement=\&]
\br{\tK}{\tP}= M\&\br{\tK}{H}= \tP\&\br{H}{\tP}= \tK
\end{tikzcd}
}
\ei
\item $\hyperlink{Table:AK6b}{\mathsf{AK6b}}$

There are two orbits:
\be
\item $x^2-y^2\neq0$
\mybox{Ice}{
$\hypertarget{Eq:AK6bh0}{\hyperlink{Table:AK6bh0}{(\mathsf{AK6b},\alh_0)}}\quad \text{centrally extended lorentzian newton}$
}
{
\begin{tikzcd}[row sep = small,column sep = small,ampersand replacement=\&]
\br{\tK}{\tP}= M\&\br{\tK}{H}= \tP\&\br{H}{\tP}= -\tK
\end{tikzcd}
}
\item $x^2-y^2=0$ 

\mybox{Ice}{
$\hypertarget{Eq:AK6bh1}{\hyperlink{Table:AK6bh1}{(\mathsf{AK6b},\alh_1)}}$
}
{
\begin{tikzcd}[row sep = small,column sep = small,ampersand replacement=\&]
\br{\tK}{\tP}= M\&\br{H}{\tK}= -\tK\&\br{H}{\tP}= \tP
\end{tikzcd}
}
\ee

\item $\hyperlink{Table:AK7b}{\mathsf{AK7b}_{\alpha_+}}$

Defining $\Delta_+:=x^2-\alpha_+xy+y^2$, the corresponding orbits are the following:
\be
\item $\Delta_+\neq0$
\mybox{Ice}{
$\hypertarget{Eq:AK7bh0}{\hyperlink{Table:AK7bh0}{(\mathsf{AK7b}_{\alpha_+},\alh_0)}}$
}
{
\begin{tikzcd}[row sep = small,column sep = small,ampersand replacement=\&]
\br{\tK}{\tP}= M\&\br{\tK}{H}=\tP\&\br{H}{\tP}=\tK+\alpha_+\tP\&\br{H}{M}=\alpha_+ M
\end{tikzcd}
}
\item $\Delta_+=0$, $\alpha_+=2$ 

\mybox{Ice}{
$\hypertarget{Eq:AK7bh1}{\hyperlink{Table:AK7bh1}{(\mathsf{AK7b}_{\alpha_+=2},\alh_1)}}$
}
{
\begin{tikzcd}[row sep = small,column sep = small,ampersand replacement=\&]
\br{\tK}{\tP}= M\&\br{H}{\tK}= \tK\&\br{H}{\tP}= \tK+\tP\&\br{H}{M}= 2\, M
\end{tikzcd}
}
\item $\Delta_+=0$, $\alpha_+>2$ 

\mybox{Ice}{
$\hypertarget{Eq:AK7bh2}{\hyperlink{Table:AK7bh2}{(\mathsf{AK7b}_{\alpha_+>2},\alh_{2_{\epsilon}})}}$
}
{
{\footnotesize
\begin{tikzcd}[row sep = small,column sep = small,ampersand replacement=\&]
\br{\tK}{\tP}= M\&\br{H}{\tK}=\frac{1}{2}\big(\alpha_+-\epsilon\sqrt{\alpha_+^2-4}\big)\tK\&\br{H}{\tP}=\frac{1}{2}\big(\alpha_++\epsilon\sqrt{\alpha_+^2-4}\big)\tP\&\br{H}{M}=\alpha_+ M
\end{tikzcd}
}
}
\ee

\item $\hyperlink{Table:AK8b}{\mathsf{AK8b}_{\alpha_-}}$

Defining $\Delta_-:=x^2+\alpha_-xy-y^2$, the corresponding orbits are the following:
\be
\item $\Delta_-\neq0$
\mybox{Ice}{
$\hypertarget{Eq:AK8bh0}{\hyperlink{Table:AK8bh0}{(\mathsf{AK8b}_{\alpha_-},\alh_0)}}$
}
{
\begin{tikzcd}[row sep = small,column sep = small,ampersand replacement=\&]
\br{\tK}{\tP}= M\&\br{\tK}{H}=\tP\&\br{H}{\tP}=-\tK+\alpha_-\tP\&\br{H}{M}=\alpha_- M
\end{tikzcd}
}
\item $\Delta_-=0$ 

\mybox{Ice}{
$\hypertarget{Eq:AK8bh1}{\hyperlink{Table:AK8bh1}{(\mathsf{AK8b}_{\alpha_-},\alh_{1_\epsilon})}}$
}
{
{\footnotesize
\begin{tikzcd}[row sep = small,column sep = small,ampersand replacement=\&]
\br{\tK}{\tP}= M\&\br{H}{\tK}=\frac{1}{2}\big(\alpha_--\epsilon\sqrt{\alpha_-^2+4}\big)\tK\&\br{H}{\tP}=\frac{1}{2}\big(\alpha_-+\epsilon\sqrt{\alpha_-^2+4}\big)\tP\&\br{H}{M}=\alpha_- M
\end{tikzcd}
}
}
\ee
\ei

We collect in Table \ref{Table:Effective ambient aristotelian Klein pairs with scalar ideal} all effective Klein pairs associated with the ambient aristotelian algebras classified in Table \ref{Table:Ambient aristotelian algebras admitting a scalar ideal}. Furthermore, all effective lifshitzian Klein pairs associated with non-effective ambient aristotelian Klein pairs are collected in Table \ref{Table:Effective lifshitzian Klein pairs associated with ambient aristotelian algebras}.

\begin{table}[ht]
\begin{adjustwidth}{-0.5cm}{}
\resizebox{18cm}{!}{
\begin{tabular}{l|l|l|l|ll|ll|ll|ll}
\multicolumn{2}{c|}{\textbf{Klein pair}}&\multicolumn{1}{c|}{\textbf{Comments}}&\multicolumn{1}{c|}{$\br{\alh}{\alh}\subset\alh$}&\multicolumn{2}{c|}{$\br{\alh}{\alp}\subset\alp$}&\multicolumn{2}{c|}{$\br{\alp}{\alp}\subset\alh$ (Curvature)}&\multicolumn{2}{c|}{$\br{\alp}{\alp}\subset\alp$ (Torsion)}&\multicolumn{2}{c}{$\br{\alh}{\alp}\subset\alh$ (Non-reductivity)}\\\hline
\hyperlink{Table:AK3}{$\mathsf{AK3}$}&\hypertarget{Table:AK3h1}{\hyperlink{Eq:AK3h1}{$\alh_1$}}&$\text{galilei}\oplus\mathbb R$&&&$\br{\tK}{H}= \tP$&&&&&\\
\rowcolor{Gray}
\hyperlink{Table:AK3a}{$\mathsf{AK3a}$}&\hypertarget{Table:AK3ah1}{\hyperlink{Eq:AK3ah1}{$\alh_1$}}&$\text{galilei}\inplus\mathbb R$&&&$\br{\tK}{H}= \tP$&&&&$\br{H}{M}=M$&&\\
\hyperlink{Table:AK4}{$\mathsf{AK4}$}&\hypertarget{Table:AK4h2}{\hyperlink{Eq:AK4h2}{$\alh_2$}}&&&&$\br{\tK}{H}= \tP$&&&$\br{H}{\tP}= \tP$&&&\\
\rowcolor{Gray}
\hyperlink{Table:AK4+}{$\mathsf{AK4}^+_\kappa$}&\hypertarget{Table:AK4h2+}{\hyperlink{Eq:AK4h2}{$\alh_2$}}&$\kappa>0$&&&$\br{\tK}{H}= \tP$&&&$\br{H}{\tP}= \tP$&$\br{H}{M}=\kappa M$&&\\
\hyperlink{Table:AK4-}{$\mathsf{AK4}^-_\kappa$}&\hypertarget{Table:AK4h2-}{\hyperlink{Eq:AK4h2}{$\alh_2$}}&$\kappa>0$&&&$\br{\tK}{H}= \tP$&&&$\br{H}{\tP}= \tP$&$\br{H}{M}=-\kappa M$&&\\
\rowcolor{Gray}
\hyperlink{Table:AK5}{$\mathsf{AK5}$}&\hypertarget{Table:AK5h0}{\hyperlink{Eq:AK5h0}{$\alh_0$}}&$\text{euclidean newton}\oplus\mathbb R$&&&$\br{\tK}{H}= \tP$&$\br{H}{\tP}= \tK$&&&&&\\
\hyperlink{Table:AK5+}{$\mathsf{AK5}^+_\kappa$}&\hypertarget{Table:AK5+h0}{\hyperlink{Eq:AK5+h0}{$\alh_0$}}&$\kappa>0$&&&$\br{\tK}{H}= \tP$&$\br{H}{\tP}= \tK$&&&$\br{H}{M}=\kappa M$&&\\
\rowcolor{Gray}
\hyperlink{Table:AK6}{$\mathsf{AK6}$}&\hypertarget{Table:AK6h0}{\hyperlink{Eq:AK6h0}{$\alh_0$}}&$\text{lorentzian newton}\oplus\mathbb R$&&&$\br{\tK}{H}= \tP$&$\br{H}{\tP}= -\tK$&&&&&\\
\hyperlink{Table:AK6+}{$\mathsf{AK6}^+_\kappa$}&\hypertarget{Table:AK6+h0}{\hyperlink{Eq:AK6+h0}{$\alh_0$}}&$\kappa>0$&&&$\br{\tK}{H}= \tP$&$\br{H}{\tP}= -\tK$&&&$\br{H}{M}=\kappa M$&&\\
\rowcolor{Gray}
\hyperlink{Table:AK7}{$\mathsf{AK7}_{\alpha_+}$}&\hypertarget{Table:AK7h0}{\hyperlink{Eq:AK7h0}{$\alh_0$}}&$\alpha_+>0$&&&$\br{\tK}{H}= \tP$&$\br{H}{\tP}= \tK$&&$\br{H}{\tP}= \alpha_+\tP$&&&\\
\hyperlink{Table:AK7+}{$\mathsf{AK7}^+_{\alpha_+,\kappa}$}&\hypertarget{Table:AK7+h0}{\hyperlink{Eq:AK7h0}{$\alh_0$}}&$\alpha_+>0\qq\kappa>0$&&&$\br{\tK}{H}= \tP$&$\br{H}{\tP}= \tK$&&$\br{H}{\tP}= \alpha_+\tP$&$\br{H}{M}=\kappa M$&&\\
\rowcolor{Gray}
\hyperlink{Table:AK7-}{$\mathsf{AK7}^-_{\alpha_+,\kappa}$}&\hypertarget{Table:AK7-h0}{\hyperlink{Eq:AK7h0}{$\alh_0$}}&$\alpha_+>0\qq\kappa>0$&&&$\br{\tK}{H}= \tP$&$\br{H}{\tP}= \tK$&&$\br{H}{\tP}= \alpha_+\tP$&$\br{H}{M}=-\kappa M$&&\\
\hyperlink{Table:AK8}{$\mathsf{AK8}_{\alpha_-}$}&\hypertarget{Table:AK8h0}{\hyperlink{Eq:AK8h0}{$\alh_0$}}&$\alpha_->0$&&&$\br{\tK}{H}= \tP$&$\br{H}{\tP}= -\tK$&&$\br{H}{\tP}= \alpha_-\tP$&&&\\
\rowcolor{Gray}
\hyperlink{Table:AK8+}{$\mathsf{AK8}^+_{\alpha_-,\kappa}$}&\hypertarget{Table:AK8+h0}{\hyperlink{Eq:AK8h0}{$\alh_0$}}&$\alpha_->0\qq\kappa>0$&&&$\br{\tK}{H}= \tP$&$\br{H}{\tP}= -\tK$&&$\br{H}{\tP}= \alpha_-\tP$&$\br{H}{M}=\kappa M$&&\\
\hyperlink{Table:AK8-}{$\mathsf{AK8}^-_{\alpha_-,\kappa}$}&\hypertarget{Table:AK8-h0}{\hyperlink{Eq:AK8h0}{$\alh_0$}}&$\alpha_->0\qq\kappa>0$&&&$\br{\tK}{H}= \tP$&$\br{H}{\tP}= -\tK$&&$\br{H}{\tP}= \alpha_-\tP$&$\br{H}{M}=-\kappa M$&&\\
\rowcolor{Gray}
\hline
\hyperlink{Table:AK9}{$\mathsf{AK9}$}&\hypertarget{Table:AK9h0}{\hyperlink{Eq:AK9h0}{$\alh_0$}}&$\text{carroll}\oplus\mathbb R$&&$\br{\tK}{\tP}= H$&&&&&&&\\
\hyperlink{Table:AK10}{$\mathsf{AK10}$}&\hypertarget{Table:AK10h0}{\hyperlink{Eq:AK10h0}{$\alh_0$}}&$\text{ds carroll}\oplus\mathbb R$&&$\br{\tK}{\tP}= H$&&$\br{H}{\tP}= -\tK$&$\br{\tP}{\tP}=-\tJ$&&&&\\
\rowcolor{Gray}
\hyperlink{Table:AK10}{$\mathsf{AK10}$}&\hypertarget{Table:AK10h1}{\hyperlink{Eq:AK10h1}{$\alh_1$}}&$\text{euclidean}\oplus\mathbb R$&$\br{\tK}{\tK}=-\tJ$&$\br{\tK}{\tP}= H$&$\br{\tK}{H}= -\tP$&&&&&&\\
\hyperlink{Table:AK11}{$\mathsf{AK11}$}&\hypertarget{Table:AK11h0}{\hyperlink{Eq:AK11h0}{$\alh_0$}}&$\text{ads carroll}\oplus\mathbb R$&&$\br{\tK}{\tP}= H$&&$\br{H}{\tP}= \tK$&$\br{\tP}{\tP}=\tJ$&&&&\\
\rowcolor{Gray}
\hyperlink{Table:AK11}{$\mathsf{AK11}$}&\hypertarget{Table:AK11h1}{\hyperlink{Eq:AK11h1}{$\alh_1$}}&$\text{minkowski}\oplus\mathbb R$&$\br{\tK}{\tK}=\tJ$&$\br{\tK}{\tP}= H$&$\br{\tK}{H}= \tP$&&&&&&\\
\hyperlink{Table:AK12}{$\mathsf{AK12}$}&\hypertarget{Table:AK12h0}{\hyperlink{Eq:AK12h0}{$\alh_0$}}&$\text{hyperbolic}\oplus\mathbb R$&$\br{\tK}{\tK}=-\tJ$&$\br{\tK}{\tP}= H$&$\br{\tK}{H}= -\tP$&$\br{H}{\tP}=\tK$&$\br{\tP}{\tP}=\tJ$&&&&\\
\rowcolor{Gray}
\hyperlink{Table:AK12}{$\mathsf{AK12}$}&\hypertarget{Table:AK12h1}{\hyperlink{Eq:AK12h1}{$\alh_1$}}&$\text{carroll light cone}\oplus\mathbb R$&&$\br{\tK}{\tP}= H$&&&&$\br{H}{\tP}= -\tP$&&$\br{H}{\tK}=\tK$&$\br{\tK}{\tP}= \tJ$\\
\hyperlink{Table:AK12}{$\mathsf{AK12}$}&\hypertarget{Table:AK12h2}{\hyperlink{Eq:AK12h2}{$\alh_2$}}&$\text{de sitter}\oplus\mathbb R$&$\br{\tK}{\tK}=\tJ$&$\br{\tK}{\tP}= H$&$\br{\tK}{H}= \tP$&$\br{H}{\tP}=-\tK$&$\br{\tP}{\tP}=-\tJ$&&&&\\
\rowcolor{Gray}
\hyperlink{Table:AK13}{$\mathsf{AK13}$}&\hypertarget{Table:AK13h0}{\hyperlink{Eq:AK13h0}{$\alh_0$}}&$\text{sphere}\oplus\mathbb R$&$\br{\tK}{\tK}=-\tJ$&$\br{\tK}{\tP}= H$&$\br{\tK}{H}= -\tP$&$\br{H}{\tP}=-\tK$&$\br{\tP}{\tP}=-\tJ$&&&&\\
\hyperlink{Table:AK14}{$\mathsf{AK14}$}&\hypertarget{Table:AK14h0}{\hyperlink{Eq:AK14h0}{$\alh_0$}}&$\text{anti de sitter}\oplus\mathbb R$&$\br{\tK}{\tK}=\tJ$&$\br{\tK}{\tP}= H$&$\br{\tK}{H}= \tP$&$\br{H}{\tP}=\tK$&$\br{\tP}{\tP}=\tJ$&&&&\\
\rowcolor{Gray}
\hline
\hyperlink{Table:AK1b}{$\mathsf{AK1b}$}&\hypertarget{Table:AK1bh0}{\hyperlink{Eq:AK1bh0}{$\alh_0$}}&centrally extended static&&$\br{\tK}{\tP}=M$&&&&&&&\\
\hyperlink{Table:AK2b}{$\mathsf{AK2b}$}&\hypertarget{Table:AK2bh0}{\hyperlink{Eq:AK2bh0}{$\alh_0$}}&&&$\br{\tK}{\tP}=M$&&&&$\br{H}{\tP}=\tP$&$\br{H}{M}=2M$&$\br{H}{\tK}=\tK$&\\
\rowcolor{Gray}
\hyperlink{Table:AK3b}{$\mathsf{AK3b}$}&\hypertarget{Table:AK3bh0}{\hyperlink{Eq:AK3bh0}{$\alh_0$}}&&&$\br{\tK}{\tP}=M$&&$\br{H}{\tP}=\tK$&&&&&\\
\hyperlink{Table:AK3b}{$\mathsf{AK3b}$}&\hypertarget{Table:AK3bh1}{\hyperlink{Eq:AK3bh1}{$\alh_1$}}&centrally extended galilei&&$\br{\tK}{\tP}=M$&$\br{\tK}{H}= \tP$&&&&&&\\
\rowcolor{Gray}
\hyperlink{Table:AK4b}{$\mathsf{AK4b}$}&\hypertarget{Table:AK4bh0}{\hyperlink{Eq:AK4bh0}{$\alh_0$}}&&&$\br{\tK}{\tP}=M$&&&&$\br{H}{\tP}= \tP$&$\br{H}{M}=M$&&\\
\hyperlink{Table:AK4b}{$\mathsf{AK4b}$}&\hypertarget{Table:AK4bh1}{\hyperlink{Eq:AK4bh1}{$\alh_1$}}&&&$\br{\tK}{\tP}=M$&&&&&$\br{H}{M}=M$&$\br{H}{\tK}=\tK$&\\
\rowcolor{Gray}
\hyperlink{Table:AK4b}{$\mathsf{AK4b}$}&\hypertarget{Table:AK4bh2}{\hyperlink{Eq:AK4bh2}{$\alh_2$}}&&&$\br{\tK}{\tP}=M$&$\br{\tK}{H}= \tP$&&&$\br{H}{\tP}= \tP$&$\br{H}{M}=M$&&\\
\hyperlink{Table:AK5b}{$\mathsf{AK5b}$}&\hypertarget{Table:AK5bh0}{\hyperlink{Eq:AK5bh0}{$\alh_0$}}&centrally extended euclidean newton&&$\br{\tK}{\tP}=M$&$\br{\tK}{H}= \tP$&$\br{H}{\tP}= \tK$&&&&&\\
\rowcolor{Gray}
\hyperlink{Table:AK6b}{$\mathsf{AK6b}$}&\hypertarget{Table:AK6bh0}{\hyperlink{Eq:AK6bh0}{$\alh_0$}}&centrally extended lorentzian newton&&$\br{\tK}{\tP}=M$&$\br{\tK}{H}= \tP$&$\br{H}{\tP}= -\tK$&&&&&\\
\hyperlink{Table:AK6b}{$\mathsf{AK6b}$}&\hypertarget{Table:AK6bh1}{\hyperlink{Eq:AK6bh1}{$\alh_1$}}&&&$\br{\tK}{\tP}=M$&&&&$\br{H}{\tP}=-\tP$&&$\br{H}{\tK}=\tK$&\\
\rowcolor{Gray}
\hyperlink{Table:AK7b}{$\mathsf{AK7b}_{\alpha_+}$}&\hypertarget{Table:AK7bh0}{\hyperlink{Eq:AK7bh0}{$\alh_0$}}&$\alpha_+>0$&&$\br{\tK}{\tP}=M$&$\br{\tK}{H}= \tP$&$\br{H}{\tP}= \tK$&&$\br{H}{\tP}= \alpha_+\tP$&$\br{H}{M}=\alpha_+ M$&&\\
\hyperlink{Table:AK7b}{$\mathsf{AK7b}_{\alpha_+=2}$}&\hypertarget{Table:AK7bh1}{\hyperlink{Eq:AK7bh1}{$\alh_1$}}&$\alpha_+=2$&&$\br{\tK}{\tP}=M$&&$\br{H}{\tP}= \tK$&&$\br{H}{\tP}=\tP$&$\br{H}{M}=2 M$&$\br{H}{\tK}=\tK$&\\
\rowcolor{Gray}
\hyperlink{Table:AK7b}{$\mathsf{AK7b}_{\alpha_+>2}$}&\hypertarget{Table:AK7bh2}{\hyperlink{Eq:AK7bh2}{$\alh_{2_\epsilon}$}}&$\alpha_+>2\qq\epsilon\in\pset{-1,1}$&&$\br{\tK}{\tP}=M$&&&&$\br{H}{\tP}=\frac{1}{2}\big(\alpha_++\epsilon\sqrt{\alpha_+^2-4}\big)\tP$&$\br{H}{M}=\alpha_+ M$&\multicolumn{2}{c}{$\br{H}{\tK}=\frac{1}{2}\big(\alpha_+-\epsilon\sqrt{\alpha_+^2-4}\big)\tK$}\\
\hyperlink{Table:AK8b}{$\mathsf{AK8b}_{\alpha_-}$}&\hypertarget{Table:AK8bh0}{\hyperlink{Eq:AK8bh0}{$\alh_0$}}&$\alpha_->0$&&$\br{\tK}{\tP}=M$&$\br{\tK}{H}= \tP$&$\br{H}{\tP}= -\tK$&&$\br{H}{\tP}= \alpha_-\tP$&$\br{H}{M}=\alpha_- M$&&\\
\rowcolor{Gray}
\hyperlink{Table:AK8b}{$\mathsf{AK8b}_{\alpha_-}$}&\hypertarget{Table:AK8bh1}{\hyperlink{Eq:AK8bh1}{$\alh_{1_\epsilon}$}}&$\alpha_->0\qq\epsilon\in\pset{-1,1}$&&$\br{\tK}{\tP}=M$&&&&$\br{H}{\tP}=\frac{1}{2}\big(\alpha_-+\epsilon\sqrt{\alpha_-^2+4}\big)\tP$&$\br{H}{M}=\alpha_- M$&\multicolumn{2}{c}{$\br{H}{\tK}=\frac{1}{2}\big(\alpha_--\epsilon\sqrt{\alpha_-^2+4}\big)\tK$}
\end{tabular}
  }
      \end{adjustwidth}
      \caption{Universal effective ambient aristotelian Klein pairs with scalar ideal}
      \label{Table:Effective ambient aristotelian Klein pairs with scalar ideal}
\end{table}

\begin{table}[ht]
\centering
\resizebox{8cm}{!}{
\begin{tabular}{l|l|lll}
\multicolumn{2}{c|}{\textbf{Klein pair}} &\multicolumn{3}{c}{\textbf{Lifshitzian Klein pair}}  \\ \hline
\hyperlink{Table:AK1}{$\mathsf{AK1}$}&\hypertarget{Table:AK1h0}{\hyperlink{Eq:AK1h0}{$\alh_0$}}& \hyperlink{Table:Li1}{$\mathsf{Li1}$} && \\ 
\rowcolor{Gray}
\hyperlink{Table:AK1a}{$\mathsf{AK1a}$}&\hypertarget{Table:AK1ah0}{\hyperlink{Eq:AK1ah0}{$\alh_0$}}& \hyperlink{Table:Li2}{$\mathsf{Li2}$} && \\ 
\hyperlink{Table:AK2}{$\mathsf{AK2}$}&\hypertarget{Table:AK2h0}{\hyperlink{Eq:AK2h0}{$\alh_0$}}& \hyperlink{Table:Li3}{$\mathsf{Li3}_z$} && $z=0$\\ 
\rowcolor{Gray}
\hyperlink{Table:AK2+}{$\mathsf{AK2}^+_\kappa$}&\hypertarget{Table:AK2+h0}{\hyperlink{Eq:AK2h0}{$\alh_0$}}& \hyperlink{Table:Li3}{$\mathsf{Li3}_z$} && $z=\kappa$\\ 
\hyperlink{Table:AK2-}{$\mathsf{AK2}^-_\kappa$}&\hypertarget{Table:AK2-h0}{\hyperlink{Eq:AK2h0}{$\alh_0$}}& \hyperlink{Table:Li3}{$\mathsf{Li3}_z$} && $z=-\kappa$\\ 
\rowcolor{Gray}
\hyperlink{Table:AK3}{$\mathsf{AK3}$}&\hypertarget{Table:AK3h0}{\hyperlink{Eq:AK3h0}{$\alh_0$}}& \hyperlink{Table:Li1}{$\mathsf{Li1}$} &&\\ 
\hyperlink{Table:AK3a}{$\mathsf{AK3a}$}&\hypertarget{Table:AK3ah0}{\hyperlink{Eq:AK3ah0}{$\alh_0$}}& \hyperlink{Table:Li2}{$\mathsf{Li2}$} &&\\ 
\rowcolor{Gray}
\hyperlink{Table:AK4}{$\mathsf{AK4}$}&\hypertarget{Table:AK4h0}{\hyperlink{Eq:AK4h0}{$\alh_0$}}& \hyperlink{Table:Li3}{$\mathsf{Li3}_z$} &&$z=0$\\ 
\hyperlink{Table:AK4+}{$\mathsf{AK4}^+_\kappa$}&\hypertarget{Table:AK4+h0}{\hyperlink{Eq:AK4h0}{$\alh_0$}}& \hyperlink{Table:Li3}{$\mathsf{Li3}_z$} &&$z=\kappa$\\ 
\rowcolor{Gray}
\hyperlink{Table:AK4-}{$\mathsf{AK4}^-_\kappa$}&\hypertarget{Table:AK4-h0}{\hyperlink{Eq:AK4h0}{$\alh_0$}}& \hyperlink{Table:Li3}{$\mathsf{Li3}_z$} &&$z=-\kappa$\\ 
\hyperlink{Table:AK4}{$\mathsf{AK4}$}&\hypertarget{Table:AK4h1}{\hyperlink{Eq:AK4h1}{$\alh_1$}}& \hyperlink{Table:Li3}{$\mathsf{Li1}$} &&\\ 
\rowcolor{Gray}
\hyperlink{Table:AK4+}{$\mathsf{AK4}^+_\kappa$}&\hypertarget{Table:AK4+h1}{\hyperlink{Eq:AK4h1}{$\alh_1$}}& \hyperlink{Table:Li2}{$\mathsf{Li2}$} &&\\ 
\hyperlink{Table:AK4-}{$\mathsf{AK4}^-_\kappa$}&\hypertarget{Table:AK4-h1}{\hyperlink{Eq:AK4h1}{$\alh_1$}}& \hyperlink{Table:Li2}{$\mathsf{Li2}$} &&\\ 
\rowcolor{Gray}
\hyperlink{Table:AK6}{$\mathsf{AK6}$}&\hypertarget{Table:AK6h1}{\hyperlink{Eq:AK6h1}{$\alh_1$}}& \hyperlink{Table:Li3}{$\mathsf{Li3}_z$} &&$z=0$\\ 
\hyperlink{Table:AK6+}{$\mathsf{AK6}^+_\kappa$}&\hypertarget{Table:AK6+h1}{\hyperlink{Eq:AK6+h1}{$\alh_{1_\epsilon}$}}& \hyperlink{Table:Li3}{$\mathsf{Li3}_z$} &&$z>0$ for $\epsilon=-1$ and $z<0$ for $\epsilon=1$\\ 
\rowcolor{Gray}
\hyperlink{Table:AK7}{$\mathsf{AK7}_{\alpha_+=2,\kappa}$}&\hypertarget{Table:AK7h1}{\hyperlink{Eq:AK7h1}{$\alh_1$}}& \hyperlink{Table:Li3}{$\mathsf{Li3}_z$} &&$z=\kappa$\\ 
\hyperlink{Table:AK7}{$\mathsf{AK7}_{\alpha_+>2,\kappa}$}&\hypertarget{Table:AK7h2}{\hyperlink{Eq:AK7h2}{$\alh_{2_\epsilon}$}}& \hyperlink{Table:Li3}{$\mathsf{Li3}_z$} &&$z=\frac{2\, \kappa}{\alpha_+-\epsilon\sqrt{\alpha_+^2-4}}$\\ 
\rowcolor{Gray}
\hyperlink{Table:AK8}{$\mathsf{AK8}_{\alpha_-,\kappa}$}&\hypertarget{Table:AK8h1}{\hyperlink{Eq:AK8h1}{$\alh_{1_\epsilon}$}}& \hyperlink{Table:Li3}{$\mathsf{Li3}_z$} &&$z=\frac{2\, \kappa}{\alpha_-+\epsilon\sqrt{\alpha_-^2+4}}$
\end{tabular}
}
\caption{Effective lifshitzian Klein pairs associated with non-effective ambient aristotelian Klein pairs}
\label{Table:Effective lifshitzian Klein pairs associated with ambient aristotelian algebras}
\end{table}

\subsection{Projectable ambient triplets}
\label{section:Projectable ambient triplets}

Having classified effective ambient aristotelian Klein pairs [with scalar ideal], we now proceed to classify projectable ambient triplets \UnskipRef{Definition:Projectable ambient triplets} by examining each effective Klein pair collected in Table \ref{Table:Effective ambient aristotelian Klein pairs with scalar ideal} and checking the effectiveness of the quotient Klein pair $(\alg/\ali,\alh)$. 
This step requires us to verify that the following obstructions [as described in footnote \ref{footnote:effectiveness}] are not simultaneously vanishing:
\[
 \begin{tikzcd}[row sep = small,column sep = small,ampersand replacement=\&]
\br{\tK}{\tK}\sim \tJ \& \br{\tK}{H}\sim \tP\&\br{\tK}{\tP}\sim H\&\br{\tK}{\tP}\sim \tJ\, .
\end{tikzcd}
\]
Note that all effective Klein pairs originating from the first and second branches successfully pass the effectiveness test at the quotient level. This can be traced back to the fact that their characteristic obstruction---\ie $\br{\tK}{H}=\tP$ and $\br{\tK}{\tP}=H$ respectively---pass well to the quotient. Contradistinctly, the characteristic obstruction of the third branch is given by $\br{\tK}{\tP}=M$ which does not survive the quotient along $\ali=\Span M$. Therefore, out of the 15 families of effective Klein pairs originating from the third branch, only 6 pass the effectiveness test at the quotient level, namely the ones with additional non-vanishing obstruction $\br{\tK}{H}=\tP$.\footnote{As we will see, these form the bargmannian family of projectable ambient triplets.} 
\medskip

The last remaining step consists in organising the obtained projectable ambient triplets according to the type of invariant structures living on the quotient space $\alg/\alh$. Letting $\alp=\Span M\oplus H\oplus \tP$ be a supplementary of $\alh$ [\ie $\alg=\alh\oplus\alp$ as a vector space\footnote{This decomposition is one of $\alh$-modules for all projectable ambient triplets, with the exception of the non-reductive trivial extension \hyperlink{Table:AK12h1}{$(\mathsf{AK12},\alh_1)$} of the carroll light cone \hyperlink{Table:E10}{$\mathsf{E10}$}.}], the set of commutators $\br{\alh}{\alp}\subset\alp$ determine what (possibly degenerate) metric structure on $\alp$ is $\ad_\alh$-invariant. Hence, to a different set of obstructions to non-effectiveness will correspond different invariant metric structures. 
\medskip

Let us now list the $\ad_\alh$-invariant structures living on the quotient space $\alp\simeq\alg/\alh$:
\bi
\item We start by noting that all Klein pairs listed in Table \ref{Table:Effective ambient aristotelian Klein pairs with scalar ideal} satisfy $\br{\alh}{M}=0$, hence they all admit $\bs{\xi}=M$ as an $\ad_\alh$-invariant.
\item The projectable Klein pairs originating from the first and third branches admit moreover the \emph{leibnizian} metric structure \UnskipRef{Definition:leibnizian metric structures on Klein pairs} defined in eq.\eqref{equation:leibniz metric structure on leibniz Klein pair}, supplementing the vector $\bs{\xi}$ with the $\ad_\alh$-invariant linear form $\bs{\psi}=-H^*$ and the $\ad_\alh$-invariant bilinear form $\bs{\gamma}=\tP^*\otimes \tP^*$ defined on $\Ker\bs{\psi}$.
\item Klein pairs originating from the first and second branches further admit the $\ad_\alh$-invariant linear form $\bs{{A}}=M^*$. 
\item Combining the above points, the leibnizian structure living on Klein pairs originating from the first branch can be upgraded to a \emph{$\mathsf G$-ambient} structure $(\bs{\xi},\bs{\psi},\bs{A},\bs{\gamma})$  \UnskipRef{Definition:Ambient aristotelian metric structures on Klein pairs}. 
\item Klein pairs originating from the second branch can be further partitioned into three categories, according to the type of $\ad_\alh$-invariant structures they admit, on top of the pair $(\bs{\xi},\bs{A})$\footnote{The notation `$\textnormal{kinematical structure}\oplus\mathbb R$' acts as a reminder that the corresponding ambient algebras are trivial extensions of the corresponding kinematical algebras. Let us emphasise that the corresponding metric structures are {\it not} kinematical \UnskipRef{Definition:Metric structures on Klein pairs} on the whole of the ambient space $\alp\simeq\gh$, but only on the quotient $\alp/\ali$ thereof. For example, the metric structure $\textnormal{riemann}\oplus\mathbb R$ is not riemannian on $\alp$ [since $\bs{g}$ admits a non-trivial radical spanned by $\bs{\xi}$], but does induce a riemannian structure on the associated quotient space $\alp/\ali$.}:
\begin{description}
\item[$\textnormal{carrollian}\oplus\mathbb R$]: $\bs{N}=H$ and $\bs{\gamma}=\tP^*\otimes \tP^*$
\item[$\textnormal{riemannian}\oplus\mathbb R$]: $\bs{g}=H^*\otimes H^*+\tP^*\otimes \tP^*$
\item[$\textnormal{lorentzian}\oplus\mathbb R$]: $\bs{g}=-H^*\otimes H^*+\tP^*\otimes \tP^*$
\end{description}

\item Projectable Klein pairs originating from the third branch are characterised by the extra non-vanishing obstruction $\br{\tK}{\tP}=M$ and as such fail to preserve the linear form $\bs{{A}}=M^*$. Rather, they are endowed with an invariant non-degenerate bilinear form $\bs{g}=-M^*\otimes H^*-H^*\otimes M^*+\tP^*\otimes \tP^*$ defined on the whole of $\alp$, thus upgrading the above leibnizian structure to a \emph{bargmannian} structure $(\bs{\xi},\bs{g})$  \UnskipRef{Definition:Ambient aristotelian metric structures on Klein pairs}. 
\ei
We collect the obtained classes of projectable ambient triplets in Table \ref{Table:Projectable ambient triplets}. 
\begin{Remark}
\label{Remark:Projectable ambient triplets table}
\hfill
\bi
\item Non-galilean effective kinematical Klein pairs \UnskipRef{Table:Effective kinematical Klein pairs} are seen to admit a unique (and trivial) ambient lift.
\item Contradistinctly, effective galilean Klein pairs are shown to admit ambient lifts into two distinct classes, namely the $\mathsf G$-ambient and bargmannian classes. 
\item The bargmannian projectable triplets classified in Table \ref{Table:Projectable ambient triplets} coincide with the ones exhibited and studied in \cite{FigueroaOFarrill2022e}.\footnote{Our choice of parameterisation for bargmannian Klein pairs differs from the one used in \cite[Table 2]{FigueroaOFarrill2022e}. The Behistun Inscription between the two characterisations can be found below: 
\medskip

\begin{center}
\begin{tabular}{l|l}
Table 2 of \cite{FigueroaOFarrill2022e}&Table \ref{Table:Projectable ambient triplets}\\\hline
$\mathsf{PG}$& \hyperlink{Table:AE1b}{$\mathsf{AE1b}$}\\
$\mathsf{PdSG} = \mathsf{PdSG}_{\gamma=-1}$\cellcolor{Gray}&\cellcolor{Gray}\hyperlink{Table:AE4b}{$\mathsf{AE4b}$}\\
$\mathsf{PdSG}_{\gamma\in(-1,0)}$&\hyperlink{Table:AE6b}{$\mathsf{AE6b}_{\alpha_-}$} with $\alpha_->0$ where $\alpha_-=\frac{\gamma+1}{\sqrt{-\gamma}}$\\
$\mathsf{PdSG}_{\gamma=0}$\cellcolor{Gray}&\cellcolor{Gray}\hyperlink{Table:AE2b}{$\mathsf{AE2b}$}\\
$\mathsf{PdSG}_{\gamma\in(0,1)}$&\hyperlink{Table:AE5b}{$\mathsf{AE5b}_{\alpha_+}$} with $\alpha_+>2$ where $\alpha_+=\frac{\gamma+1}{\sqrt{\gamma}}$\\
$\mathsf{PdSG}_{\gamma=1}$\cellcolor{Gray}&\cellcolor{Gray}\hyperlink{Table:AE5b}{$\mathsf{AE5b}_{\alpha_+}$} with $\alpha_+=2$\\
$\mathsf{PAdSG} = \mathsf{PAdSG}_{\chi=0}$&\hyperlink{Table:AE3b}{$\mathsf{AE3b}$}\\
$\mathsf{PAdSG}_{\chi>0}$\cellcolor{Gray}&\cellcolor{Gray}\hyperlink{Table:AE5b}{$\mathsf{AE5b}_{\alpha_+}$} with $0<\alpha_+<2$ where $\alpha_+=\frac{2\chi}{\sqrt{1+\chi^2}}$
  \end{tabular}
  \end{center}
} This class is distinguished as the only ambient class admitting an invariant non-degenerate bilinear form on the whole ambient space $\alp\simeq\gh$. 
\item While every effective galilean Klein pair admits a \emph{unique} bargmannian lift, embedding galilean homogeneous spaces into $\mathsf{G}$-ambient Klein pairs leaves greater leeway, (generically) allowing a one-parameter family of ambient lifts\footnote{The sole exception to this generic trend is the galilei Klein pair \hyperlink{Table:E1}{$\mathsf{E1}$}, for which the rescaling freedom of the generator $H$ can be used to kill the one-parameter freedom in the ambient torsion [\cf \hyperlink{Eq:AK3a}{$\mathsf{AK3a}$}]. } whose arbitrariness is encoded into the ambient torsion. 
\item Possible lifts of effective galilean Klein pairs are summarised in Table \ref{Table:Possible lifts of effective galilean Klein pairs}.
\item For future uses, let us abstract away the bargmannian Klein pairs in Table \ref{Table:Projectable ambient triplets} as  the 2-parameter family $\mathsf{AEb_{\epsilon,\omega}}$ with non-trivial commutation relations:
 \bea
  \label{equation:algebra bargmann}
 \begin{tikzcd}[row sep = small,column sep = small,ampersand replacement=\&]
\mathsf{AEb_{\epsilon,\omega}}\&\br{\tK}{\tP}= M\&\br{\tK}{H}=\tP\&\br{H}{\tP}=\epsilon\, \tK+\omega\, \tP\&\br{H}{M}=\omega\, M
\end{tikzcd}
 \eea
where the two free parameters $\epsilon\in\mathbb R$ and $\omega\in\mathbb R$ encode curvature and torsion, respectively. 

Proceeding similarly for $\mathsf{G}$-ambient Klein pairs leads us to introduce the 3-parameter family $\mathsf{AE_{\epsilon,\omega,\lambda}}$ with non-trivial commutation relations:
 \bea
 \label{equation:algebra G-ambient}
 \hspace{-2.65cm}
 \begin{tikzcd}[row sep = small,column sep = small,ampersand replacement=\&]
\mathsf{AE_{\epsilon,\omega,\lambda}}\&\br{\tK}{H}=\tP\&\br{H}{\tP}=\epsilon\, \tK+\omega\, \tP\&\br{H}{M}=\lambda\, M
\end{tikzcd}
 \eea
where the extra parameter $\lambda\in\mathbb R$ encodes the mass part of the torsion.
\ei
\end{Remark}

\begin{table}[ht]
\begin{adjustwidth}{-0.5cm}{}
\resizebox{18cm}{!}{
\begin{tabular}{c|l|l|l|l|ll|ll|ll|ll}
\textbf{Metric structure}&\multicolumn{1}{c|}{\textbf{Label}}&\multicolumn{1}{c|}{\textbf{Algebra}}&\multicolumn{1}{c|}{\textbf{Comments}}&\multicolumn{1}{c|}{$\br{\alh}{\alh}\subset\alh$}&\multicolumn{2}{c|}{$\br{\alh}{\alp}\subset\alp$}&\multicolumn{2}{c|}{$\br{\alp}{\alp}\subset\alh$ (Curvature)}&\multicolumn{2}{c|}{$\br{\alp}{\alp}\subset\alp$ (Torsion)}&\multicolumn{2}{c}{$\br{\alh}{\alp}\subset\alh$ (Non-reductivity)}\\\hline
\multirow{15}{*}{$\mathsf G$-\textnormal{ambient}}&\hypertarget{Table:AE1}{$\mathsf{AE1}$}&\hyperlink{Table:AK3}{$\mathsf{AK3}$}&$\text{galilei}\oplus\mathbb R$&&&$\br{\tK}{H}= \tP$&&&&&\\
&\cellcolor{Gray}\hypertarget{Table:AE1a}{$\mathsf{AE1a}$}&\cellcolor{Gray}\hyperlink{Table:AK3a}{$\mathsf{AK3a}$}&$\text{galilei}\inplus\mathbb R$\cellcolor{Gray}&\cellcolor{Gray}&\cellcolor{Gray}&\cellcolor{Gray}$\br{\tK}{H}= \tP$&\cellcolor{Gray}&\cellcolor{Gray}&\cellcolor{Gray}&\cellcolor{Gray}$\br{H}{M}=M$&\cellcolor{Gray}&\cellcolor{Gray}\\
&\hypertarget{Table:AE2}{$\mathsf{AE2}$}&\hyperlink{Table:AK4}{$\mathsf{AK4}$}&&&&$\br{\tK}{H}= \tP$&&&$\br{H}{\tP}= \tP$&&&\\
&\cellcolor{Gray}\hypertarget{Table:AE2+}{$\mathsf{AE2}^+_\kappa$}&\cellcolor{Gray}\hyperlink{Table:AK4+}{$\mathsf{AK4}^+_\kappa$}&\cellcolor{Gray}$\kappa>0$&\cellcolor{Gray}&\cellcolor{Gray}&\cellcolor{Gray}$\br{\tK}{H}= \tP$&\cellcolor{Gray}&\cellcolor{Gray}&\cellcolor{Gray}$\br{H}{\tP}= \tP$&\cellcolor{Gray}$\br{H}{M}=\kappa M$&\cellcolor{Gray}&\cellcolor{Gray}\\
&\hypertarget{Table:AE2-}{$\mathsf{AE2}^-_\kappa$}&\hyperlink{Table:AK4+}{$\mathsf{AK4}^-_\kappa$}&$\kappa>0$&&&$\br{\tK}{H}= \tP$&&&$\br{H}{\tP}= \tP$&$\br{H}{M}=-\kappa M$&&\\
&\cellcolor{Gray}\hypertarget{Table:AE3}{$\mathsf{AE3}$}&\cellcolor{Gray}\hyperlink{Table:AK5}{$\mathsf{AK5}$}&\cellcolor{Gray}$\text{euclidean newton}\oplus\mathbb R$&\cellcolor{Gray}&\cellcolor{Gray}&\cellcolor{Gray}$\br{\tK}{H}= \tP$&\cellcolor{Gray}$\br{H}{\tP}= \tK$&\cellcolor{Gray}&\cellcolor{Gray}&\cellcolor{Gray}&\cellcolor{Gray}&\cellcolor{Gray}\\
&\hypertarget{Table:AE3+}{$\mathsf{AE3}^+_\kappa$}&\hyperlink{Table:AK5+}{$\mathsf{AK5}^+_\kappa$}&$\kappa>0$&&&$\br{\tK}{H}= \tP$&$\br{H}{\tP}= \tK$&&&$\br{H}{M}=\kappa M$&&\\
&\cellcolor{Gray}\hypertarget{Table:AE4}{$\mathsf{AE4}$}&\cellcolor{Gray}\hyperlink{Table:AK6}{$\mathsf{AK6}$}&\cellcolor{Gray}$\text{lorentzian newton}\oplus\mathbb R$&\cellcolor{Gray}&\cellcolor{Gray}&\cellcolor{Gray}$\br{\tK}{H}= \tP$&\cellcolor{Gray}$\br{H}{\tP}= -\tK$&\cellcolor{Gray}&\cellcolor{Gray}&\cellcolor{Gray}&\cellcolor{Gray}&\cellcolor{Gray}\\
&\hypertarget{Table:AE4+}{$\mathsf{AE4}^+_\kappa$}&\hyperlink{Table:AK6+}{$\mathsf{AK6}^+_\kappa$}&$\kappa>0$&&&$\br{\tK}{H}= \tP$&$\br{H}{\tP}= -\tK$&&&$\br{H}{M}=\kappa M$&&\\
&\cellcolor{Gray}\hypertarget{Table:AE5}{$\mathsf{AE5}_{\alpha_+}$}&\cellcolor{Gray}\hyperlink{Table:AK7}{$\mathsf{AK7}_{\alpha_+}$}&\cellcolor{Gray}$\alpha_+>0$&\cellcolor{Gray}&\cellcolor{Gray}&\cellcolor{Gray}$\br{\tK}{H}= \tP$&\cellcolor{Gray}$\br{H}{\tP}= \tK$&\cellcolor{Gray}&\cellcolor{Gray}$\br{H}{\tP}= \alpha_+\tP$&\cellcolor{Gray}&\cellcolor{Gray}&\cellcolor{Gray}\\
&\hypertarget{Table:AE5+}{$\mathsf{AE5}^+_{\alpha_+,\kappa}$}&\hyperlink{Table:AK7+}{$\mathsf{AK7}^+_{\alpha_+,\kappa}$}&$\alpha_+>0\qq\kappa>0$&&&$\br{\tK}{H}= \tP$&$\br{H}{\tP}= \tK$&&$\br{H}{\tP}= \alpha_+\tP$&$\br{H}{M}=\kappa M$&&\\
&\cellcolor{Gray}\hypertarget{Table:AE5-}{$\mathsf{AE5}^-_{\alpha_+,\kappa}$}&\cellcolor{Gray}\hyperlink{Table:AK7-}{$\mathsf{AK7}^-_{\alpha_+,\kappa}$}&\cellcolor{Gray}$\alpha_+>0\qq\kappa>0$&\cellcolor{Gray}&\cellcolor{Gray}&\cellcolor{Gray}$\br{\tK}{H}= \tP$&\cellcolor{Gray}$\br{H}{\tP}= \tK$&\cellcolor{Gray}&\cellcolor{Gray}$\br{H}{\tP}= \alpha_+\tP$&\cellcolor{Gray}$\br{H}{M}=-\kappa M$&\cellcolor{Gray}&\cellcolor{Gray}\\
&\hypertarget{Table:AE6}{$\mathsf{AE6}_{\alpha_-}$}&\hyperlink{Table:AK8}{$\mathsf{AK8}_{\alpha_-}$}&$\alpha_->0$&&&$\br{\tK}{H}= \tP$&$\br{H}{\tP}= -\tK$&&$\br{H}{\tP}= \alpha_-\tP$&&&\\
&\cellcolor{Gray}\hypertarget{Table:AE6+}{$\mathsf{AE6}^+_{\alpha_-,\kappa}$}&\cellcolor{Gray}\hyperlink{Table:AK8+}{$\mathsf{AK8}^+_{\alpha_-,\kappa}$}&\cellcolor{Gray}$\alpha_->0\qq\kappa>0$&\cellcolor{Gray}&\cellcolor{Gray}&\cellcolor{Gray}$\br{\tK}{H}= \tP$&\cellcolor{Gray}$\br{H}{\tP}= -\tK$&\cellcolor{Gray}&\cellcolor{Gray}$\br{H}{\tP}= \alpha_-\tP$&\cellcolor{Gray}$\br{H}{M}=\kappa M$&\cellcolor{Gray}&\cellcolor{Gray}\\
&\hypertarget{Table:AE6-}{$\mathsf{AE6}^-_{\alpha_-,\kappa}$}&\hyperlink{Table:AK8+}{$\mathsf{AK8}^-_{\alpha_-,\kappa}$}&$\alpha_->0\qq\kappa>0$&&&$\br{\tK}{H}= \tP$&$\br{H}{\tP}= -\tK$&&$\br{H}{\tP}= \alpha_-\tP$&$\br{H}{M}=-\kappa M$&&\\
\hline
\multirow{6}{*}{\textnormal{bargmannian}}&\cellcolor{Gray}\hypertarget{Table:AE1b}{$\mathsf{AE1b}$}&\cellcolor{Gray}\hyperlink{Table:AK3b}{$\mathsf{AK3b}$}&\cellcolor{Gray}centrally extended galilei&\cellcolor{Gray}&\cellcolor{Gray}$\br{\tK}{\tP}=M$&\cellcolor{Gray}$\br{\tK}{H}= \tP$&\cellcolor{Gray}&\cellcolor{Gray}&\cellcolor{Gray}&\cellcolor{Gray}&\cellcolor{Gray}&\cellcolor{Gray}\\
&\hypertarget{Table:AE2b}{$\mathsf{AE2b}$}&\hyperlink{Table:AK4b}{$\mathsf{AK4b}$}&&&$\br{\tK}{\tP}=M$&$\br{\tK}{H}= \tP$&&&$\br{H}{\tP}= \tP$&$\br{H}{M}=M$&&\\
&\cellcolor{Gray}\hypertarget{Table:AE3b}{$\mathsf{AE3b}$}&\cellcolor{Gray}\hyperlink{Table:AK5b}{$\mathsf{AK5b}$}&\cellcolor{Gray}centrally extended euclidean newton&\cellcolor{Gray}&\cellcolor{Gray}$\br{\tK}{\tP}=M$&\cellcolor{Gray}$\br{\tK}{H}= \tP$&\cellcolor{Gray}$\br{H}{\tP}= \tK$&\cellcolor{Gray}&\cellcolor{Gray}&\cellcolor{Gray}&\cellcolor{Gray}&\cellcolor{Gray}\\
&\hypertarget{Table:AE4b}{$\mathsf{AE4b}$}&\hyperlink{Table:AK6b}{$\mathsf{AK6b}$}&centrally extended lorentzian newton&&$\br{\tK}{\tP}=M$&$\br{\tK}{H}= \tP$&$\br{H}{\tP}= -\tK$&&&&&\\
&\cellcolor{Gray}\hypertarget{Table:AE5b}{$\mathsf{AE5b}_{\alpha_+}$}&\cellcolor{Gray}\hyperlink{Table:AK7b}{$\mathsf{AK7b}_{\alpha_+}$}&\cellcolor{Gray}$\alpha_+>0$&\cellcolor{Gray}&\cellcolor{Gray}$\br{\tK}{\tP}=M$&\cellcolor{Gray}$\br{\tK}{H}= \tP$&\cellcolor{Gray}$\br{H}{\tP}= \tK$&\cellcolor{Gray}&\cellcolor{Gray}$\br{H}{\tP}= \alpha_+\tP$&\cellcolor{Gray}$\br{H}{M}=\alpha_+ M$&\cellcolor{Gray}&\cellcolor{Gray}\\
&\hypertarget{Table:AE6b}{$\mathsf{AE6b}_{\alpha_-}$}&\hyperlink{Table:AK8b}{$\mathsf{AK8b}_{\alpha_-}$}&$\alpha_->0$&&$\br{\tK}{\tP}=M$&$\br{\tK}{H}= \tP$&$\br{H}{\tP}= -\tK$&&$\br{H}{\tP}= \alpha_-\tP$&$\br{H}{M}=\alpha_- M$&&\\
\hline
\multirow{4}{*}{$\textnormal{carrollian}\oplus\mathbb R$}&\cellcolor{Gray}\hypertarget{Table:AE7}{$\mathsf{AE7}$}&\cellcolor{Gray}\hyperlink{Table:AK9}{$\mathsf{AK9}$}&\cellcolor{Gray}$\text{carroll}\oplus\mathbb R$&\cellcolor{Gray}&\cellcolor{Gray}$\br{\tK}{\tP}= H$&\cellcolor{Gray}&\cellcolor{Gray}&\cellcolor{Gray}&\cellcolor{Gray}&\cellcolor{Gray}&\cellcolor{Gray}&\cellcolor{Gray}\\
&\hypertarget{Table:AE8}{$\mathsf{AE8}$}&\hyperlink{Table:AK10}{$\mathsf{AK10}$}&$\text{ds carroll}\oplus\mathbb R$&&$\br{\tK}{\tP}= H$&&$\br{H}{\tP}= -\tK$&$\br{\tP}{\tP}=-\tJ$&&&&\\
&\cellcolor{Gray}\hypertarget{Table:AE9}{$\mathsf{AE9}$}&\cellcolor{Gray}\hyperlink{Table:AK11}{$\mathsf{AK11}$}&\cellcolor{Gray}$\text{ads carroll}\oplus\mathbb R$&\cellcolor{Gray}&\cellcolor{Gray}$\br{\tK}{\tP}= H$&\cellcolor{Gray}&\cellcolor{Gray}$\br{H}{\tP}= \tK$&\cellcolor{Gray}$\br{\tP}{\tP}=\tJ$&\cellcolor{Gray}&\cellcolor{Gray}&\cellcolor{Gray}&\cellcolor{Gray}\\
&\hypertarget{Table:AE10}{$\mathsf{AE10}$}&\hyperlink{Table:AK12}{$\mathsf{AK12}$}&$\text{carroll light cone}\oplus\mathbb R$&&$\br{\tK}{\tP}= H$&&&&$\br{H}{\tP}= -\tP$&&$\br{\tK}{\tP}= \tJ$&$\br{\tK}{H}= -\tK$\\
\hline
&\cellcolor{Gray}\hypertarget{Table:AE11}{$\mathsf{AE11}$}&\cellcolor{Gray}\hyperlink{Table:AK10}{$\mathsf{AK10}$}&\cellcolor{Gray}$\text{euclidean}\oplus\mathbb R$&\cellcolor{Gray}$\br{\tK}{\tK}=-\tJ$&\cellcolor{Gray}$\br{\tK}{\tP}= H$&\cellcolor{Gray}$\br{\tK}{H}= -\tP$&\cellcolor{Gray}&\cellcolor{Gray}&\cellcolor{Gray}&\cellcolor{Gray}&\cellcolor{Gray}&\cellcolor{Gray}\\
$\textnormal{riemannian}\oplus\mathbb R$&\hypertarget{Table:AE12}{$\mathsf{AE12}$}&\hyperlink{Table:AK13}{$\mathsf{AK13}$}&$\text{sphere}\oplus\mathbb R$&$\br{\tK}{\tK}=-\tJ$&$\br{\tK}{\tP}= H$&$\br{\tK}{H}= -\tP$&$\br{H}{\tP}=-\tK$&$\br{\tP}{\tP}=-\tJ$&&&&\\
&\cellcolor{Gray}\hypertarget{Table:AE13}{$\mathsf{AE13}$}&\cellcolor{Gray}\hyperlink{Table:AK12}{$\mathsf{AK12}$}&\cellcolor{Gray}$\text{hyperbolic}\oplus\mathbb R$&\cellcolor{Gray}$\br{\tK}{\tK}=-\tJ$&\cellcolor{Gray}$\br{\tK}{\tP}= H$&\cellcolor{Gray}$\br{\tK}{H}= -\tP$&\cellcolor{Gray}$\br{H}{\tP}=\tK$&\cellcolor{Gray}$\br{\tP}{\tP}=\tJ$&\cellcolor{Gray}&\cellcolor{Gray}&\cellcolor{Gray}&\cellcolor{Gray}\\
\hline
&\hypertarget{Table:AE14}{$\mathsf{AE14}$}&\hyperlink{Table:AK11}{$\mathsf{AK11}$}&$\text{minkowski}\oplus\mathbb R$&$\br{\tK}{\tK}=\tJ$&$\br{\tK}{\tP}= H$&$\br{\tK}{H}= \tP$&&&&&&\\
$\textnormal{lorentzian}\oplus\mathbb R$&\cellcolor{Gray}\hypertarget{Table:AE15}{$\mathsf{AE15}$}&\cellcolor{Gray}\hyperlink{Table:AK12}{$\mathsf{AK12}$}&\cellcolor{Gray}$\text{de sitter}\oplus\mathbb R$&\cellcolor{Gray}$\br{\tK}{\tK}=\tJ$&\cellcolor{Gray}$\br{\tK}{\tP}= H$&\cellcolor{Gray}$\br{\tK}{H}= \tP$&\cellcolor{Gray}$\br{H}{\tP}=-\tK$&\cellcolor{Gray}$\br{\tP}{\tP}=-\tJ$&\cellcolor{Gray}&\cellcolor{Gray}&\cellcolor{Gray}&\cellcolor{Gray}\\
&\hypertarget{Table:AE16}{$\mathsf{AE16}$}&\hyperlink{Table:AK14}{$\mathsf{AK14}$}&$\text{anti de sitter}\oplus\mathbb R$&$\br{\tK}{\tK}=\tJ$&$\br{\tK}{\tP}= H$&$\br{\tK}{H}= \tP$&$\br{H}{\tP}=\tK$&$\br{\tP}{\tP}=\tJ$&&&&
\end{tabular}
  }
      \end{adjustwidth}
      \caption{Universal projectable ambient triplets}
      \label{Table:Projectable ambient triplets}
\end{table}

\begin{table}[ht]
\centering
\resizebox{10cm}{!}{
\begin{tabular}{l|l|lll|c}
\centering
\textbf{Galilean Klein pair}&\textbf{Ambient Klein pair}&\multicolumn{3}{c|}{\textbf{Additional commutators}}&\textbf{Extra parameter}\\\hline
\multirow{3}{*}{\hyperlink{Table:E1}{$\mathsf{E1}$}}&\hyperlink{Table:AE1}{$\mathsf{AE1}$}&&&&\\
&\cellcolor{Gray}\hyperlink{Table:AE1a}{$\mathsf{AE1a}$}&\cellcolor{Gray}&\cellcolor{Gray}&\cellcolor{Gray}$\br{H}{M}=M$&\cellcolor{Gray}\\
&\hyperlink{Table:AE1b}{$\mathsf{AE1b}$}&$\br{\tK}{\tP}=M$&&&\\
\hline
\multirow{4}{*}{\hyperlink{Table:E2}{$\mathsf{E2}$}}&\cellcolor{Gray}\hyperlink{Table:AE2}{$\mathsf{AE2}$}&\cellcolor{Gray}&\cellcolor{Gray}&\cellcolor{Gray}&\cellcolor{Gray}\\
&\hyperlink{Table:AE2+}{$\mathsf{AE2}^+_\kappa$}&&&$\br{H}{M}=\kappa M$&$\kappa>0$\\
&\cellcolor{Gray}\hyperlink{Table:AE2-}{$\mathsf{AE2}^-_\kappa$}&\cellcolor{Gray}&\cellcolor{Gray}&\cellcolor{Gray}$\br{H}{M}=-\kappa M$&\cellcolor{Gray}$\kappa>0$\\
&\hyperlink{Table:AE2b}{$\mathsf{AE2b}$}&$\br{\tK}{\tP}=M$&&$\br{H}{M}=M$&\\
\hline
\multirow{3}{*}{\hyperlink{Table:E3}{$\mathsf{E3}$}}&\cellcolor{Gray}\hyperlink{Table:AE3}{$\mathsf{AE3}$}&\cellcolor{Gray}&\cellcolor{Gray}&\cellcolor{Gray}&\cellcolor{Gray}\\
&\hyperlink{Table:AE3+}{$\mathsf{AE3}^+_\kappa$}&&&$\br{H}{M}=\kappa M$&$\kappa>0$\\
&\cellcolor{Gray}\hyperlink{Table:AE3b}{$\mathsf{AE3b}$}&\cellcolor{Gray}$\br{\tK}{\tP}=M$&\cellcolor{Gray}&\cellcolor{Gray}&\cellcolor{Gray}\\
\hline
\multirow{3}{*}{\hyperlink{Table:E4}{$\mathsf{E4}$}}&\hyperlink{Table:AE4}{$\mathsf{AE4}$}&&&&\\
&\cellcolor{Gray}\hyperlink{Table:AE4+}{$\mathsf{AE4}^+_\kappa$}&\cellcolor{Gray}&\cellcolor{Gray}&\cellcolor{Gray}$\br{H}{M}=\kappa M$&\cellcolor{Gray}$\kappa>0$\\
&\hyperlink{Table:AE4b}{$\mathsf{AE4b}$}&$\br{\tK}{\tP}=M$&&&\\
\hline
\multirow{4}{*}{\hyperlink{Table:E5}{$\mathsf{E5}_{\alpha_+}$}}&\cellcolor{Gray}\hyperlink{Table:AE5}{$\mathsf{AE5}_{\alpha_+}$}&\cellcolor{Gray}&\cellcolor{Gray}&\cellcolor{Gray}&\cellcolor{Gray}\\
&\hyperlink{Table:AE5+}{$\mathsf{AE5}^+_{\alpha_+,\kappa}$}&&&$\br{H}{M}=\kappa M$&$\kappa>0$\\
&\cellcolor{Gray}\hyperlink{Table:AE5-}{$\mathsf{AE5}^-_{\alpha_+,\kappa}$}&\cellcolor{Gray}&\cellcolor{Gray}&\cellcolor{Gray}$\br{H}{M}=-\kappa M$&\cellcolor{Gray}$\kappa>0$\\
&\hyperlink{Table:AE5b}{$\mathsf{AE5b}_{\alpha_+}$}&$\br{\tK}{\tP}=M$&&$\br{H}{M}=\alpha_+M$&\\
\hline
\multirow{4}{*}{\hyperlink{Table:E6}{$\mathsf{E6}_{\alpha_-}$}}&\cellcolor{Gray}\hyperlink{Table:AE6}{$\mathsf{AE6}_{\alpha_-}$}&\cellcolor{Gray}&\cellcolor{Gray}&\cellcolor{Gray}&\cellcolor{Gray}\\
&\hyperlink{Table:AE6+}{$\mathsf{AE6}^+_{\alpha_-,\kappa}$}&&&$\br{H}{M}=\kappa M$&$\kappa>0$\\
&\cellcolor{Gray}\hyperlink{Table:AE6-}{$\mathsf{AE6}^-_{\alpha_-,\kappa}$}&\cellcolor{Gray}&\cellcolor{Gray}&\cellcolor{Gray}$\br{H}{M}=\kappa M$&\cellcolor{Gray}$\kappa>0$\\
&\hyperlink{Table:AE6b}{$\mathsf{AE6b}_{\alpha_-}$}&$\br{\tK}{\tP}=M$&&$\br{H}{M}=\alpha_-M$&
  \end{tabular}
  }
      \caption{Possible lifts of effective galilean Klein pairs}
      \label{Table:Possible lifts of effective galilean Klein pairs}
\end{table}


\section{Drink me: \IW contractions of Klein pairs}
\label{section:Drink me}
Section \ref{section:Climbing up one leg of the table: an ambient perspective on Klein pairs} led us to a crossroad, with two paths diverging before us. On our right lied the---admittedly simpler and well-trodden---path of ambient aristotelian algebras, which we explored in Section \ref{section:Bargmann and his modern rivals}. This path led us to the classification of projectable ambient triplets, as seen in Table \ref{Table:Projectable ambient triplets}. On our left, we saw the---harder and less traveled---path of ambient kinematical algebras. Exploring this direction via the main road would amount in classifying Klein pairs lying at the intersection of projectable triplets \UnskipRef{Definition:Projectable triplet} and ambient kinematical Klein pairs \UnskipRef{Definition:Ambient kinematical Klein pairs}. Alas, such a pursuit is too daunting a task to be taken on now and we will have to save it for future exploration. Rather, we will prefer to take a shortcut at present, at the intersection of the two following directions: dimensional reduction via branching rules, as discussed in \SectionRef{section:A Tangled Tale: (s,v)-Lie algebras}, on the one hand, and \IW contractions, on the other, that we shall address now. 

\subsection{Contraction of Klein pairs}
\label{section:Contraction of Klein pairs}

Letting $(\alg,\IWun)$ be a Klein pair and $\IWdeux$ be a supplementary of $\IWun$ in $\alg$ so that $\alg=\IWun\oplus\IWdeux$ as a vector space, the \emph{\IW contraction} \cite{Inonu1953} of $\alg$ along the splitting $\alg=\IWun\oplus \IWdeux$ is defined as the Lie algebra whose underlying vector space is the one underlying $\alg$ and whose Lie structure is obtained from the one of $\alg$ by sending the commutators:
\[
 \begin{tikzcd}[row sep = small,column sep = small,ampersand replacement=\&]
 \br{\IWun}{\IWdeux}\subset\IWun\& \br{\IWdeux}{\IWdeux}\subset\IWun\& \br{\IWdeux}{\IWdeux}\subset\IWdeux \quad\text{to zero}.
\end{tikzcd}
\]
 The only non-vanishing commutators resulting from the contraction are therefore:
\[
 \begin{tikzcd}[row sep = small,column sep = small,ampersand replacement=\&]
 \br{\IWun}{\IWun}\subset\IWun\&\br{\IWun}{\IWdeux}\subset\IWdeux
\end{tikzcd}
\]
  so that $\IWdeux$ becomes an abelian ideal with respect to the contracted Lie structure.
 \begin{Example}
Applying the above \IW contraction to the kinematical Klein pairs of Table \ref{Table:Effective kinematical Klein pairs} along the splitting $\alg=\IWun\oplus\IWdeux$ with $\IWun=\tK\oplus\tJ$ and $\IWdeux=H\oplus\tP$ yields, for each of the four families, the pair appearing in the upper row of the family (\eg the \IW contraction of any lorentzian kinematical Klein pair yields the minkowski Klein pair, \etc) \ie the contraction sets the curvature, torsion and non-reductivity to zero.
\end{Example}

\paragraph{Reducing and contracting}

As noted in Section \ref{section:A Tangled Tale: (s,v)-Lie algebras}, the use of branching rules allows to map Klein pairs with $(d+1)$-dimensional spatial isotropy to Klein pairs with $d$-dimensional spatial isotropy containing an augmented set of generators. Specifically, the restriction of Corollary \ref{corollary:dimensional reduction of Klein pairs} to the case $k=1$ asserts that any kinematical Klein pair with $(d+1)$-dimensional spatial isotropy is isomorphic as a Klein pair to an ambient kinematical Klein pair \UnskipRef{Definition:Ambient kinematical Klein pairs} with $d$-dimensional spatial isotropy. A promising starting point to generate ambient kinematical lifts of galilean Klein pairs is then to start from the corresponding kinematical Klein pair with $(d+1)$-dimensional spatial isotropy and to perform a dimensional reduction.   
Let us illustrate the above on the simplest example, namely the galilei Klein pair \hyperlink{Table:E1}{$\mathsf{E1}$}:
\begin{Example}[Galilei Klein pair]
\label{Example:Galilei Klein pair}
Starting from the galilei Klein pair \hyperlink{Table:E1}{$\mathsf{E1}$} with $(d+1)$-dimensional spatial isotropy:
\[
 \begin{tikzcd}[row sep = small,column sep = small,ampersand replacement=\&]
\br{\tK}{H}=\tP\&\br{\tJ}{\tP}=\tP\&\br{\tJ}{\tJ}=\tJ
\end{tikzcd}
\]
and applying the branching rules:
\[
 \begin{tikzcd}[row sep = small,column sep = small,ampersand replacement=\&]
 H\mapsto H\&  P_0\mapsto M\&  \tP\mapsto \tP\&  K_0\mapsto C\&  \tK\mapsto \tK\&  J_{0i}\mapsto \tD\&  \tJ\mapsto \tJ
 \end{tikzcd}
\]
yields the following ambient kinematical Klein pair with $d$-dimensional spatial isotropy:
\bea
\label{equation:galilei in ambient form}
 \begin{tikzcd}[row sep = small,column sep = small,ampersand replacement=\&]
\br{\tK}{H}=\tP\&\br{C}{H}= M\&\br{\tD}{M}= -\tP\& \br{\tD}{\tP}= M\&\br{\tD}{C}= -\tK\\
\br{\tD}{\tK}= C\&\br{\tD}{\tD}= -\tJ\&\br{\tJ}{\tD}=\tD\&\br{\tJ}{\tK}=\tK\&\br{\tJ}{\tP}=\tP\&\br{\tJ}{\tJ}=\tJ\, .
 \end{tikzcd}
\eea
\end{Example}
The obtained ambient kinematical Klein pair admits as $\ad_\alh$-invariant structure on $\gh$ the $(d+1)$-dimensional galilean structure \UnskipRef{Definition:Metric structures on Klein pairs}:
\[
 \begin{tikzcd}[row sep = small,column sep = small,ampersand replacement=\&]
\bs{\psi}=-H^*\&\bs{\gamma}=M^*\otimes M^*+\tP^*\otimes \tP^*\, .
\end{tikzcd}
\]
As such, the obtained ambient Klein pair is \emph{not} leibnizian \UnskipRef{Definition:leibnizian metric structures on Klein pairs}, as can be readily seen from the fact that the obstruction $\br{\tD}{M}\sim\tP$ does not vanish [\cf footnote \ref{footnote:leibnizian criteria}]. This can however be cured by performing a suitable \IW contraction that will send the corresponding obstruction to zero. As reviewed above, such contraction presupposes to find a subalgebra $\alk$ of our ambient algebra $\alg$. Note that, by construction, the above ambient kinematical algebra possesses a canonical subalgebra \UnskipRef{Remark:subalgebra} $\alk:=H\oplus\tP\oplus\tK\oplus\tJ$ whose commutators take the same form as those from the original kinematical algebra (the galilei algebra in the above example \UnskipRef{Example:Galilei Klein pair}) albeit with $d$-dimensional spatial isotropy. Performing an \IW contraction of the Klein pair $(\alg,\alk)$ along the decomposition $\alg=\alk\oplus\ali$, where $\ali:=M\oplus C\oplus \tD$, yields a new ambient kinematical Klein pair for which $\ali$ is a canonical abelian ideal.\footnote{This ensures in particular that the obstruction $\br{\tD}{M}\sim\tP$ vanishes.} The output of the procedure is thus a projectable triplet $(\alg,\alh,\ali)$ \UnskipRef{Definition:Projectable triplet} canonically associated with any kinematical algebra. Applying such canonical contraction to the above example yields:
 \begin{Example}[From galilei to leibniz]
\label{Example:From galilei to leibniz}
Starting from the ambient kinematical algebra \eqref{equation:galilei in ambient form} with $d$-dimensional spatial isotropy---obtained by dimensional reduction of the galilei Klein pair with $(d+1)$-dimensional spatial isotropy---and performing an \IW contraction along the canonical splitting $\alg=\alk\oplus\ali$ yields the leibniz projectable triplet \UnskipRef{Example:Leibniz projectable triplet}. The associated quotient Klein pair is obviously isomorphic to the original galilei Klein pair \hyperlink{Table:E1}{$\mathsf{E1}$} with $d$-dimensional spatial isotropy.
\end{Example}

\subsection{Leibnizian lifts of galilean Klein pairs}
Democratising the procedure illustrated in Examples \ref{Example:Galilei Klein pair}-\ref{Example:From galilei to leibniz} to each of the galilean Klein pairs \hyperlink{Table:E1}{$\mathsf{E1}$}-\hyperlink{Table:E6}{$\mathsf{E6}_{\alpha_-}$} listed in Table \ref{Table:Effective kinematical Klein pairs} allows to define curved/torsional avatars of the leibniz Klein pair \UnskipRef{Example:Leibniz Klein pair}. For all the induced Klein pairs\footnote{Where $\alg=M\oplus H\oplus C\oplus \tP\oplus\tD\oplus\tK\oplus\tJ$ and $\alh=C\oplus\tD\oplus\tK\oplus\tJ$. } $(\alg,\alh)$, the subspace $\ali=M\oplus C\oplus \tD$ is by construction an abelian ideal, making the triplet $(\alg,\alh,\ali)$ into a projectable triplet \UnskipRef{Definition:Projectable triplet} lifting the initial galilean Klein pair.

\begin{Proposition}[Leibnizian lifts of galilean Klein pairs]
\label{Proposition:Leibnizian lifts of galilean Klein pairs}
The projectable triplets obtained from dimensional reduction and \IW contraction of the galilean Klein pairs \hyperlink{Table:E1}{$\mathsf{E1}$}-\hyperlink{Table:E6}{$\mathsf{E6}_{\alpha_-}$} are listed in \textnormal{Table \ref{Table:Reductive leibnizian lifts of galilean algebras}}.
\end{Proposition}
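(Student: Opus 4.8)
The plan is to proceed constructively, uniformly across the six galilean families, following the template worked out in Examples \ref{Example:Galilei Klein pair}--\ref{Example:From galilei to leibniz}. For each galilean kinematical Klein pair among \hyperlink{Table:E1}{$\mathsf{E1}$}--\hyperlink{Table:E6}{$\mathsf{E6}_{\alpha_-}$}, I would start from its presentation \emph{with $(d+1)$-dimensional spatial isotropy}: the canonical relations $\br{\tJ}{\tJ}\sim\tJ$, $\br{\tJ}{\tP}\sim\tP$, $\br{\tJ}{\tK}\sim\tK$ supplemented by $\br{\tK}{H}=\tP$ and $\br{H}{\tP}=\epsilon\,\tK+\omega\,\tP$, with $(\epsilon,\omega)$ read off from Table \ref{Table:Effective kinematical Klein pairs}. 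Applying the branching rules of \Prop{Proposition:general branching rules} with the relabelling $H\mapsto H$, $P_0\mapsto M$, $P_i\mapsto\tP$, $K_0\mapsto C$, $K_i\mapsto\tK$, $J_{0i}\mapsto\tD$, $J_{ij}\mapsto\tJ$ then produces an ambient kinematical Klein pair with $d$-dimensional spatial isotropy whose bracket is a routine index split of the above relations (reproducing \eqref{equation:galilei in ambient form} in the $\epsilon=\omega=0$ case).

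Next I would invoke the canonical subalgebra $\alk:=H\oplus\tP\oplus\tK\oplus\tJ$ of Remark \ref{Remark:subalgebra} together with the supplementary $\ali:=M\oplus C\oplus\tD$, and perform the \IW contraction of $(\alg,\alk)$ along $\alg=\alk\oplus\ali$ as reviewed in Section \ref{section:Contraction of Klein pairs}. By definition this annihilates the blocks $\br{\alk}{\ali}\subset\alk$, $\br{\ali}{\ali}\subset\alk$ and $\br{\ali}{\ali}\subset\ali$, leaving only $\br{\alk}{\alk}\subset\alk$ and $\br{\alk}{\ali}\subset\ali$. Tracking which branched commutator lands in which block, one finds the surviving non-trivial brackets to be exactly $\br{\tK}{H}=\tP$, $\br{H}{\tP}=\epsilon\,\tK+\omega\,\tP$, $\br{C}{H}=M$, $\br{H}{M}=\epsilon\,C+\omega\,M$, $\br{\tD}{\tP}=M$, $\br{\tD}{\tK}=C$, together with the canonical $\so(d)$-action; in particular $\br{\tD}{M}$, $\br{\tD}{C}$ and $\br{\tD}{\tD}$ are all sent to zero. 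For $\epsilon=\omega=0$ this reproduces the leibniz algebra \eqref{equation:leibniz algebra}. By construction $\ali$ is then abelian (all $\br{\ali}{\ali}$ vanish) and an ideal (since $\br{\alk}{\ali}\subset\ali$ survives), so with $\alh=C\oplus\tD\oplus\tK\oplus\tJ$ and $\alj=\alh\cap\ali=C\oplus\tD$ the triplet $(\alg,\alh,\ali)$ is a candidate projectable triplet \UnskipRef{Definition:Projectable triplet}.

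It then remains to verify three points for each case. First, effectiveness: the quotient $(\alg/\ali,\alh/\alj)$ is $(\alk,\tK\oplus\tJ)$ by construction, i.e.\ the original galilean Klein pair with $d$-dimensional spatial isotropy, which is effective \cite{Figueroa-OFarrill2018}; and $(\alg,\alh)$ itself is effective because the surviving obstruction $\br{\tK}{H}=\tP$ [\cf footnote \ref{footnote:effectiveness}] prevents $\tK$, hence any ideal of $\alg$ contained in $\alh$, from being nontrivial. Second, the leibnizian metric structure: I would check against the vanishing criterion of footnote \ref{footnote:leibnizian criteria} that $\br{\tD}{M}$, $\br{\tK}{M}$, $\br{C}{M}$, $\br{C}{H}\sim H$, $\br{\tD}{\tP}\sim H$, $\br{\tK}{\tP}\sim H$ and $\br{C}{\tP}\sim\tP$ all vanish on the contracted algebra, which is immediate from the list above, so the canonical triplet $\bs{\xi}=M$, $\bs{\psi}=-H^*$, $\bs{\gamma}=\tP^*\otimes\tP^*$ is $\ad_\alh$-invariant and $(\alg,\alh)$ is leibnizian \UnskipRef{Definition:leibnizian metric structures on Klein pairs}. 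Third, reductivity: $\alg=\alh\oplus\alp$ with $\alp=M\oplus H\oplus\tP$ is a decomposition of $\alh$-modules since $\br{\alh}{\alp}\subset\alp$ follows from the surviving brackets. Collecting the six resulting reductive leibnizian projectable triplets, parametrised by $(\epsilon,\omega)\in\{(0,0),(0,1),(1,0),(-1,0),(1,\alpha_+),(-1,\alpha_-)\}$, yields precisely the entries of Table \ref{Table:Reductive leibnizian lifts of galilean algebras}.

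The main obstacle is organisational rather than conceptual: one must carry out the index split of the $\so(d+1)$-covariant brackets and the subsequent block-by-block truncation consistently for all six families (including the two one-parameter subfamilies $\mathsf{E5}_{\alpha_+}$ and $\mathsf{E6}_{\alpha_-}$), and confirm in each case that the contraction simultaneously (i) turns $\ali$ into an abelian ideal, (ii) preserves effectiveness at both levels, and (iii) kills every obstruction to the leibnizian structure while leaving the characteristic galilean obstruction $\br{\tK}{H}=\tP$ intact. Since the contraction touches neither the $\so(d)$-module grading nor the sub-block $\br{\alk}{\alk}\subset\alk$, which is precisely a copy of the original galilean algebra, these checks are uniform and the bookkeeping, though lengthy, is mechanical.
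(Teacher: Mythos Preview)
Your proposal is correct and follows exactly the approach the paper takes: the paper does not give a formal proof of this proposition but rather states that it results from ``democratising the procedure illustrated in Examples \ref{Example:Galilei Klein pair}--\ref{Example:From galilei to leibniz}'' to all six galilean pairs, which is precisely what you spell out. Your uniform parametrisation by $(\epsilon,\omega)$ and identification of the surviving brackets $\br{\tK}{H}=\tP$, $\br{H}{\tP}=\epsilon\,\tK+\omega\,\tP$, $\br{C}{H}=M$, $\br{H}{M}=\epsilon\,C+\omega\,M$, $\br{\tD}{\tP}=M$, $\br{\tD}{\tK}=C$ match the paper's later abstraction $\mathsf{Li}_{\epsilon,\omega}$ exactly.

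One small point: your effectiveness argument for $(\alg,\alh)$ appeals to footnote \ref{footnote:effectiveness}, but that footnote treats kinematical Klein pairs with $\alh=\tK\oplus\tJ$, whereas here $\alh=C\oplus\tD\oplus\tK\oplus\tJ$. The single obstruction $\br{\tK}{H}=\tP$ does not by itself rule out an ideal of $\alg$ contained in $\alh$ that involves $C$ or $\tD$; you also need $\br{C}{H}=M\notin\alh$ and $\br{\tD}{\tP}=M\notin\alh$. You have already listed these among the surviving brackets, so the fix is immediate --- and indeed the paper phrases it as ``effective thanks to the obstructions of type $\br{\alh}{\alp}\subset\alp$'' (plural). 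Your inclusion of the leibnizian check is strictly the content of the next proposition (\Prop{Proposition:Leibnizian structure on Klein pairs}), but it does no harm here.
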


\begin{table}[ht]
\begin{adjustwidth}{-0.5cm}{}
\resizebox{18cm}{!}{
\begin{tabular}{l|l|l|lll|ll|ll|l}
\multicolumn{1}{c|}{\textbf{Label}}&\multicolumn{1}{c|}{\textbf{Comments}}&\multicolumn{1}{c|}{$\br{\alh}{\alh}\subset\alh$}&\multicolumn{3}{c|}{$\br{\alh}{\alp}\subset\alp$}&\multicolumn{2}{c|}{$\br{\alp}{\alp}\subset\alh$ (Curvature)}&\multicolumn{2}{c|}{$\br{\alp}{\alp}\subset\alp$ (Torsion)}&\multicolumn{1}{c}{$\br{\alh}{\alp}\subset\alh$ (Non-reductivity)}\\\hline
\hypertarget{Table:L1}{$\mathsf{L1}$}&\text{leibniz}&$\br{\tD}{\tK}= C$&$\br{\tK}{H}= \tP$&$\br{\tD}{\tP}= M$&$\br{C}{H}= M$&&&&&\\
\rowcolor{Gray}
\hypertarget{Table:L2}{$\mathsf{L2}$}&&$\br{\tD}{\tK}= C$&$\br{\tK}{H}= \tP$&$\br{\tD}{\tP}= M$&$\br{C}{H}= M$&&&$\br{H}{M}=M$&$\br{H}{\tP}= \tP$&\\
\hypertarget{Table:L3}{$\mathsf{L3}$}&\text{euclidean leibniz}&$\br{\tD}{\tK}= C$&$\br{\tK}{H}= \tP$& $\br{\tD}{\tP}= M$&$\br{C}{H}= M$&$\br{H}{M}= C$& $\br{H}{\tP}= \tK$&&&\\
\rowcolor{Gray}
\hypertarget{Table:L4}{$\mathsf{L4}$}&\text{lorentzian leibniz}&$\br{\tD}{\tK}= C$&$\br{\tK}{H}= \tP$& $\br{\tD}{\tP}= M$&$\br{C}{H}= M$&$\br{H}{M}=-C$&$\br{H}{\tP}=-\tK$&&&\\
\hypertarget{Table:L5}{$\mathsf{L5}_{\alpha_+}$}&$\alpha_+>0$&$\br{\tD}{\tK}= C$&$\br{\tK}{H}= \tP$& $\br{\tD}{\tP}= M$&$\br{C}{H}= M$&$\br{H}{M}= C$& $\br{H}{\tP}= \tK$&$\br{H}{M}= \alpha_+M$& $\br{H}{\tP}= \alpha_+\tP$&\\
\rowcolor{Gray}
\hypertarget{Table:L6}{$\mathsf{L6}_{\alpha_-}$}&$\alpha_->0$&$\br{\tD}{\tK}= C$&$\br{\tK}{H}= \tP$& $\br{\tD}{\tP}= M$&$\br{C}{H}= M$&$\br{H}{M}=-C$&$\br{H}{\tP}=-\tK$&$\br{H}{M}=\alpha_-M$& $\br{H}{\tP}= \alpha_-\tP$&
\end{tabular}
  }
      \end{adjustwidth}
      \caption{Reductive leibnizian lifts of galilean Klein pairs}
      \label{Table:Reductive leibnizian lifts of galilean algebras}
\end{table}
\begin{Remark}
Note that, for each of the Klein pairs in Table \ref{Table:Reductive leibnizian lifts of galilean algebras}, the subalgebra $\alh=C\oplus\tD\oplus\tK\oplus\tJ$ is isomorphic to the carroll algebra \hyperlink{Table:K9}{$\mathsf{K9}$}. This implies in particular that the boost subalgebra spanned by $C\oplus\tD\oplus\tK$ is isomorphic to the Heisenberg algebra in $d$ dimensions, the geometric counterpart thereof being a unification of the notions of \emph{galilean boosts} (also called  \emph{Milne boosts} \cite{Carter1994,Duval:1993pe,Bekaert:2014bwa}) and \emph{carrollian boosts} \cite{Bekaert2015b,Hartong:2015xda} into the non-abelian \emph{leibnizian boosts} mediated by the local Heisenberg group, \cf \cite{Bekaert2015b,Morand2023} for details.
\end{Remark}
The Klein pairs collected in Table \ref{Table:Reductive leibnizian lifts of galilean algebras} are reductive and can readily be seen to be effective thanks to the obstructions of type $\br{\alh}{\alp}\subset\alp$ [where $\alp=M\oplus H\oplus \tP$]. Furthermore, the quotient Klein pairs $(\alg/\ali,\alh/\alj)$ [where $\alj$ denotes the ideal of $\alh$ defined as $\alj:=\alh\cap \ali$] are also effective, thanks to the fact that the obstruction $\br{\tK}{H}= \tP$ survives the quotient along the canonical ideal $\ali=M\oplus C\oplus \tD$. The following sequence of Lie algebras:
\[
\begin{tikzcd}
0 \arrow[r] &\ali  \arrow[r, hook, "i"] & \alg \arrow[r, two heads,"\pi"]        & \alg_0\arrow[r]  & 0
\end{tikzcd}
\]
where $\alg=\mathsf{L}_i$ and $\alg_0=\mathsf{E}_i$,  is exact for all $i\in\pset{1,\ldots,6}$ hence the above Klein pairs provide ambient kinematical lifts for all galilean Klein pairs.

\begin{Proposition}
\label{Proposition:Leibnizian structure on Klein pairs}
The Klein pairs collected in \textnormal{Table \ref{Table:Reductive leibnizian lifts of galilean algebras}} are leibnizian \UnskipRef{Definition:leibnizian metric structures on Klein pairs} for the canonical leibnizian metric structure \eqref{equation:leibniz metric structure on leibniz Klein pair}.
\end{Proposition}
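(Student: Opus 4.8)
The plan is to verify, uniformly in $i\in\{1,\dots,6\}$, the three $\ad_\alh$-invariance conditions of Definition \ref{Definition:leibnizian metric structures on Klein pairs} for the triplet $(\bs{\xi},\bs{\psi},\bs{\gamma})=(M,\,-H^*,\,\tP^*\otimes \tP^*)$ appearing in \eqref{equation:leibniz metric structure on leibniz Klein pair}. Two preliminary observations organise the whole argument. First, every $\mathsf{L}_i$ in Table \ref{Table:Reductive leibnizian lifts of galilean algebras} is reductive [the non-reductivity column is empty], so that $\gh$ may be identified with $\alp=\Span M\oplus H\oplus \tP$ and the action of $\ad_\alh$ on $\gh$ is simply $\br{\alh}{-}$ restricted to $\alp$. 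Second, the block $\br{\alh}{\alp}\subset\alp$ is \emph{the same} for all six Klein pairs: beyond the canonical $\so(d)$-action, its only non-trivial brackets are $\br{C}{H}= M$, $\br{\tK}{H}= \tP$ and $\br{D_i}{P_j}=\delta_{ij}M$. The commutators that distinguish the $\mathsf{L}_i$'s from one another --- those in the curvature, torsion and $\br{\alh}{\alh}\subset\alh$ columns --- never enter the action of $\alh$ on $\alp$, so it suffices to verify the invariance conditions once.

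First I would dispose of the non-degeneracy clauses: $\bs{\xi}=M$ and $\bs{\psi}=-H^*$ are non-zero, $\bs{\psi}(\bs{\xi})=-H^*(M)=0$, and the restriction of $\bs{\gamma}=\tP^*\otimes \tP^*$ to $\Ker\bs{\psi}=\Span M\oplus \tP$ is symmetric, positive semi-definite, with radical spanned by $\bs{\xi}=M$ [a non-zero $\tP$-component pairs non-trivially]. Next, invariance of $\bs{\xi}$: no bracket $\br{X}{M}$ with $X\in\alh$ is non-trivial [$\br{C}{M}=\br{\tD}{M}=\br{\tK}{M}=0$ from the table, and $\br{\tJ}{M}=0$ since $M$ is a scalar], hence $\ad_\alh(\bs{\xi})=0$; this is in any case automatic from the \IW construction, $\ali=M\oplus C\oplus \tD$ being an abelian ideal. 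Invariance of $\bs{\psi}=-H^*$ amounts to the vanishing of the $H$-component of $\br{X}{v}$ for all $X\in\alh$ and $v\in\alp$, which holds because the admissible brackets $\br{C}{H}= M$, $\br{\tK}{H}= \tP$, $\br{D_i}{P_j}=\delta_{ij}M$ and the $\so(d)$-action all take values in $\Span M\oplus \tP$. In particular $\Ker\bs{\psi}=\Span M\oplus \tP$ is $\ad_\alh$-stable, so the induced action on $(\Ker\bs{\psi})^*$ is well-defined.

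The one step that requires a little care --- and, insofar as this proof has a hard part, this is it --- is invariance of $\bs{\gamma}=\delta_{ij}P_i^*\otimes P_j^*$ on $\Ker\bs{\psi}$. Beyond the $\so(d)$-rotations, which manifestly preserve the rotationally invariant form $\bs{\gamma}$, the only non-trivial action of $\alh$ on $\Ker\bs{\psi}=\Span M\oplus \tP$ is $\ad_{D_i}(P_j)=\delta_{ij}M$ together with $\ad_{D_i}(M)=0$. Hence, for $D\in\tD$ and $u,v\in\Ker\bs{\psi}$, the dual-tensorial action gives $(\ad_D^{*}\bs{\gamma})(u,v)=-\bs{\gamma}(\ad_D u,\,v)-\bs{\gamma}(u,\,\ad_D v)=0$, since $\ad_D u$ and $\ad_D v$ are proportional to $M$, which lies in the radical of $\bs{\gamma}$. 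Therefore $\ad_\alh^{*}\bs{\gamma}=0$. Assembling the three verifications exhibits $(\bs{\xi},\bs{\psi},\bs{\gamma})$ as a leibnizian metric structure on each $(\mathsf{L}_i,\alh)$, which is the claim. Alternatively --- more economically but less self-contained --- one may observe that each $(\mathsf{L}_i,\alh)$ is an ambient kinematical Klein pair and invoke the criterion recorded in footnote \ref{footnote:leibnizian criteria}: it then only remains to note that none of the obstructing brackets listed there occurs among the non-trivial commutators of $\mathsf{L1}$--$\mathsf{L6}_{\alpha_-}$ in Table \ref{Table:Reductive leibnizian lifts of galilean algebras}, the potential offenders $\br{\tD}{\tP}= M$ and $\br{C}{H}= M$ carrying no $H$-component, while $\br{\tK}{\tP}$, $\br{C}{\tP}$, $\br{\tD}{M}$, $\br{\tK}{M}$ and $\br{C}{M}$ all vanish.
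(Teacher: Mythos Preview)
Your proof is correct and takes essentially the same approach as the paper: the paper's own proof is a one-line invocation of the criterion in footnote~\ref{footnote:leibnizian criteria}, which you record as your alternative argument, while your main argument simply unpacks that criterion by directly verifying the three $\ad_\alh$-invariance conditions. Your observation that the $\br{\alh}{\alp}\subset\alp$ block is identical across $\mathsf{L1}$--$\mathsf{L6}_{\alpha_-}$ is a clean way to handle all six cases at once.
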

\begin{proof}
The proposition follows straightforwardly from the necessary and sufficient criterion given in footnote \ref{footnote:leibnizian criteria}.
\end{proof}
Overall, we defined three possible classes of lifts for galilean Klein pairs, with respective metric structure being $\mathsf{G}$-ambient, bargmannian and leibnizian. Since bargmannian and $\mathsf{G}$-ambient Klein pairs are in particular leibnizian \UnskipRef{Definition:Ambient aristotelian metric structures on Klein pairs}, a natural question one may ask is the following:
\begin{center}
\textit{
Are the $\mathsf{G}$-ambient and bargmannian Klein pairs listed in \textnormal{Table \ref{Table:Projectable ambient triplets}} \\
sub-Klein pairs of the leibnizian Klein pairs of \textnormal{Table \ref{Table:Reductive leibnizian lifts of galilean algebras}}?
}
\end{center}
As noted previously \UnskipRef{Example:Ambient aristotelian Klein pairs}, the question can be answered positively in the case of the leibniz Klein pair \hyperlink{Table:L1}{$\mathsf{L1}$} which admits both the trivial extension of the galilei Klein pair \hyperlink{Table:AE1}{$\mathsf{AE1}$} and the bargmann Klein pair \hyperlink{Table:AE1b}{$\mathsf{AE1b}$} as sub-Klein pairs. In order to address the general case, we will recast the leibnizian algebras displayed in Table \ref{Table:Reductive leibnizian lifts of galilean algebras}  in the spirit of Remark \ref{Remark:Projectable ambient triplets table}. The latter can collectively be abstracted as the Lie algebras $\mathsf{Li}_{\epsilon,\om}$ with non-trivial commutators:
 \[
 \begin{tikzcd}[row sep = small,column sep = small,ampersand replacement=\&]
\mathsf{Li}_{\epsilon,\om}\&\br{\tD}{\tK}= C\&\hspace{-11.3mm}\br{\tD}{\tP}= M\&\hspace{-12.7mm}\br{C}{H}=M\\
\&\br{\tK}{H}=\tP\&\br{H}{\tP}=\epsilon\, \tK+\omega\, \tP\&\br{H}{M}=\epsilon\, C+\omega\, M
\end{tikzcd}
 \]
where $\epsilon\in\mathbb R$ encodes the curvature and $\om\in\mathbb R$ the torsion. It can readily be seen from the above commutation relations that $M\oplus H\oplus \tP\oplus \tK\oplus \tJ$ spans a subalgebra of $\mathsf{Li}_{\epsilon,\om}$ if and only if $\epsilon=0$. In that case, the sub-Klein pair of $\mathsf{Li}_{0,\om}$ is isomorphic to the $\mathsf{G}$-ambient Klein pair $\mathsf{AE_{0,\om,\om}}$ [\cf \eqref{equation:algebra G-ambient}]. Alternatively, performing a redefinition $\tK\mapsto \tilde \tK:=\tK+\tD$, the sub-Klein pair of $\mathsf{Li}_{0,\om}$ spanned by $M\oplus H\oplus \tP\oplus \tilde\tK\oplus \tJ$ is isomorphic to the bargmannian Klein pair $\mathsf{AEb_{0,\om}}$ [\cf \eqref{equation:algebra bargmann}].

Coming back to the classification of Table \ref{Table:Projectable ambient triplets}, we conclude that  \hyperlink{Table:AE1}{$\mathsf{AE1}$} and \hyperlink{Table:AE1b}{$\mathsf{AE1b}$} (resp.  \hyperlink{Table:AE2+}{$\mathsf{AE2}^+_1$} and \hyperlink{Table:AE2b}{$\mathsf{AE2b}$}) admit an embedding into \hyperlink{Table:L1}{$\mathsf{L1}$} (resp. \hyperlink{Table:L1}{$\mathsf{L2}$}). In fact, in can be shown that these are the only $\mathsf G$-ambient or bargmannian Klein pairs admitting an embedding into the leibnizian Klein pairs of Table \ref{Table:Reductive leibnizian lifts of galilean algebras}. This apparently surprising fact can be justified and (partially) amended by generalising Table \ref{Table:Reductive leibnizian lifts of galilean algebras} to include non-reductivity. Indeed, a noticeable feature of the possible ($\mathsf G$-ambient or bargmannian) lifts of galilean Klein pairs displayed in Table \ref{Table:Possible lifts of effective galilean Klein pairs} is that they were all reductive. The exhaustive classification performed in Section \ref{section:Bargmann and his modern rivals} thus provides a \emph{no-go} result regarding non-reductive lifts of galilean Klein pairs into ambient aristotelian Klein pairs. As exemplified in the next section, this no-go result does not longer hold when lifting into ambient kinematical Klein pairs. 

\subsection{Non-reductive lifts of galilean Klein pairs}
\label{section:Non-reductive lifts of galilean Klein pairs}
As a last variation on our leitmotiv, let us indulge---without further motivation than to catch a glimpse of the richness of leibnizian projective Klein pairs---in presenting a generalisation of Table \ref{Table:Reductive leibnizian lifts of galilean algebras} including \emph{non-reductive} lifts of galilean Klein pairs. Such non-reductive deformations are collected in Table \ref{Table:Leibnizian lifts of galilean algebras}. 
\begin{table}[ht]
\begin{adjustwidth}{-0.5cm}{}
\resizebox{18cm}{!}{
\begin{tabular}{l|l|l|lll|ll|ll|ll}
\multicolumn{1}{c|}{\textbf{Label}}&\multicolumn{1}{c|}{\textbf{Comments}}&\multicolumn{1}{c|}{$\br{\alh}{\alh}\subset\alh$}&\multicolumn{3}{c|}{$\br{\alh}{\alp}\subset\alp$}&\multicolumn{2}{c|}{$\br{\alp}{\alp}\subset\alh$ (Curvature)}&\multicolumn{2}{c|}{$\br{\alp}{\alp}\subset\alp$ (Torsion)}&\multicolumn{2}{c}{$\br{\alh}{\alp}\subset\alh$ (Non-reductivity)}\\\hline
$\mathsf{L1}$&\text{leibniz}&$\br{\tD}{\tK}= C$&$\br{\tK}{H}= \tP$&$\br{\tD}{\tP}= M$&$\br{C}{H}= M$&&&&&&\\
\rowcolor{Gray}
$\mathsf{L1a}$&&$\br{\tD}{\tK}= C$&$\br{\tK}{H}= \tP$&$\br{\tD}{\tP}= M$&$\br{C}{H}= M$&&&$\br{H}{M}=M$&&$\br{H}{C}= C$&$\br{H}{\tD}= \tD$\\
$\mathsf{L2}$&&$\br{\tD}{\tK}= C$&$\br{\tK}{H}= \tP$&$\br{\tD}{\tP}= M$&$\br{C}{H}= M$&&&$\br{H}{M}=M$&$\br{H}{\tP}= \tP$&&\\
\rowcolor{Gray}
$\mathsf{L2}_r$&$r\neq0$&$\br{\tD}{\tK}= C$&$\br{\tK}{H}= \tP$&$\br{\tD}{\tP}= M$&$\br{C}{H}= M$&&&$\br{H}{M}=(1+r)\, M$&$\br{H}{\tP}= \tP$&$\br{H}{C}= r\, C$&$\br{H}{\tD}= r\tD$\\
$\mathsf{L3}$&\text{euclidean leibniz}&$\br{\tD}{\tK}= C$&$\br{\tK}{H}= \tP$& $\br{\tD}{\tP}= M$&$\br{C}{H}= M$&$\br{H}{M}= C$& $\br{H}{\tP}= \tK$&&&&\\
\rowcolor{Gray}
$\mathsf{L3}_r$&$r\neq0$&$\br{\tD}{\tK}= C$&$\br{\tK}{H}= \tP$& $\br{\tD}{\tP}= M$&$\br{C}{H}= M$&$\br{H}{M}= C$& $\br{H}{\tP}= \tK$&$\br{H}{M}=r\, M$&&$\br{H}{C}= r\, C$&$\br{H}{\tD}= r\tD$\\
$\mathsf{L4}$&\text{lorentzian leibniz}&$\br{\tD}{\tK}= C$&$\br{\tK}{H}= \tP$& $\br{\tD}{\tP}= M$&$\br{C}{H}= M$&$\br{H}{M}=-C$&$\br{H}{\tP}=-\tK$&$\br{H}{M}=r\, M$&&$\br{H}{C}= r\, C$&$\br{H}{\tD}= r\tD$\\
\rowcolor{Gray}
$\mathsf{L4}_r$&$r\neq0$&$\br{\tD}{\tK}= C$&$\br{\tK}{H}= \tP$& $\br{\tD}{\tP}= M$&$\br{C}{H}= M$&$\br{H}{M}=-C$&$\br{H}{\tP}=-\tK$&$\br{H}{M}=r\, M$&&$\br{H}{C}= r\, C$&$\br{H}{\tD}= r\tD$\\
$\mathsf{L5}_{\alpha_+}$&$\alpha_+>0$&$\br{\tD}{\tK}= C$&$\br{\tK}{H}= \tP$& $\br{\tD}{\tP}= M$&$\br{C}{H}= M$&$\br{H}{M}= C$& $\br{H}{\tP}= \tK$&$\br{H}{M}= \alpha_+M$& $\br{H}{\tP}= \alpha_+\tP$&&\\
\rowcolor{Gray}
$\mathsf{L5}_{\alpha_+,r}$&$\alpha_+>0\qq r\neq0$&$\br{\tD}{\tK}= C$&$\br{\tK}{H}= \tP$& $\br{\tD}{\tP}= M$&$\br{C}{H}= M$&$\br{H}{M}= C$& $\br{H}{\tP}= \tK$&$\br{H}{M}= (\alpha_++r)\, M$& $\br{H}{\tP}= \alpha_+\tP$&$\br{H}{C}= r\, C$&$\br{H}{\tD}= r\tD$\\
$\mathsf{L6}_{\alpha_-}$&$\alpha_->0$&$\br{\tD}{\tK}= C$&$\br{\tK}{H}= \tP$& $\br{\tD}{\tP}= M$&$\br{C}{H}= M$&$\br{H}{M}=-C$&$\br{H}{\tP}=-\tK$&$\br{H}{M}=\alpha_-M$& $\br{H}{\tP}= \alpha_-\tP$&&\\
\rowcolor{Gray}
$\mathsf{L6}_{\alpha_-,r}$&$\alpha_->0\qq r\neq0$&$\br{\tD}{\tK}= C$&$\br{\tK}{H}= \tP$& $\br{\tD}{\tP}= M$&$\br{C}{H}= M$&$\br{H}{M}=-C$&$\br{H}{\tP}=-\tK$&$\br{H}{M}=(\alpha_-+r)\, M$& $\br{H}{\tP}=\alpha_-\tP$&$\br{H}{C}= r\, C$&$\br{H}{\tD}= r\tD$
\end{tabular}
  }
      \end{adjustwidth}
      \caption{Leibnizian lifts of galilean Klein pairs}
      \label{Table:Leibnizian lifts of galilean algebras}
\end{table}

The above Klein pairs share the same $\ad_\alh$-invariants as the ones displayed in Table \ref{Table:Reductive leibnizian lifts of galilean algebras} and are thus leibnizian \UnskipRef{Definition:leibnizian metric structures on Klein pairs} for the canonical leibnizian metric structure \eqref{equation:leibniz metric structure on leibniz Klein pair}. 

As we would like to now show, including non-reductivity allows to embed (almost all) ambient aristotelian lifts of galilean Klein pairs---as displayed in Table \ref{Table:Projectable ambient triplets}---as sub-Klein pairs of leibnizian Klein pairs. 

Let us start by collectively denote the above leibnizian Klein pairs as:
 \[
 \begin{tikzcd}[row sep = small,column sep = small,ampersand replacement=\&]
\mathsf{Li}_{\epsilon,\om,r}\&\br{\tD}{\tK}= C\&\hspace{-11.5mm}\br{\tD}{\tP}= M\&\hspace{-12mm}\br{C}{H}= M-r\, C\&\br{H}{\tD}=r\, \tD\\
\&\br{\tK}{H}=\tP\&\br{H}{\tP}=\epsilon\, \tK+\omega\, \tP\&\br{H}{M}=\epsilon\, C+(\omega+r)\, M\&
\end{tikzcd}
 \]
where the extra parameter $r\in\mathbb R$ encodes non-reductivity.
\begin{description}
 \item[$\mathsf{G}$-ambient] Let us perform the change of variables $M\mapsto\tilde M:=M+x\, C$ yielding:
  \[
 \begin{tikzcd}[row sep = small,column sep = small,ampersand replacement=\&]
\mathsf{Li}_{\epsilon,\om,r}\&\br{\tD}{\tK}= C\&\hspace{-0.2cm}\br{\tD}{\tP}= \tilde M-x\, C\&
\hspace{-2.9cm}
\br{C}{H}= \tilde M-(r+x)\, C\&
\hspace{-2.5cm}
\br{H}{\tD}=r\, \tD\\
\&\br{\tK}{H}=\tP\&\br{H}{\tP}=\epsilon\, \tK+\omega\, \tP\&\br{H}{\tilde M}=(x^2-\omega\, x+\epsilon)\, C+(\omega+r-x)\, \tilde M\&
\end{tikzcd}
 \]
  In these new variables, the vector space spanned by $\tilde M\oplus H\oplus\tP\oplus \tK\oplus\tJ$ is a subalgebra of $\mathsf{Li}_{\epsilon,\om,r}$ provided $x^2-\omega\, x+\epsilon=0$. The latter admits real solutions provided $\om^2-4\, \epsilon\geq0$. Although this condition excludes the cases  \hyperlink{Table:AE3}{$\mathsf{AE3}_\kappa$} and \hyperlink{Table:AE5}{$\mathsf{AE5}_{\alpha_+<2,\kappa}$}, setting $r=\lambda-\omega+x$ allows to recover all other $\mathsf{G}$-ambient Klein pairs as sub-Klein pairs.
   \item[bargmann]
 
  Let us perform the change of variables: 
  \[
 \begin{tikzcd}[row sep = small,column sep = small,ampersand replacement=\&]
M\mapsto\tilde M:=M+x\, C\&\tP\mapsto\tilde\tP:=\tP-x\, \tD\&\tK\mapsto \tilde \tK:=\tK+\tD
\end{tikzcd}
 \]
 
 yielding:
  \[
     \hspace{-2.2cm}
\begin{tikzcd}[row sep = small,column sep = small,ampersand replacement=\&]
\mathsf{Li}_{\epsilon,\om,r}\&\br{\tD}{\tilde\tK}= C\&\br{\tD}{\tilde\tP}= \tilde M-x\, C\&\br{C}{H}= \tilde M-(r+x)\, C\&\br{H}{\tD}=r\, \tD\&\br{\tK}{\tilde\tP}= \tilde M
\end{tikzcd}
\]
\[
    \begin{tikzcd}[row sep = small,column sep = small,ampersand replacement=\&]
\br{\tilde\tK}{H}=\tilde\tP+(x-r)\tD\&\hspace{-0.3cm}\br{H}{\tilde\tP}=\epsilon\, \tilde\tK+\omega\, \tilde\tP+\big(x\, (\om-r)-\epsilon\big)\tilde\tD\&\hspace{-0.3cm}\br{H}{\tilde M}=(x^2-\omega\, x+\epsilon)\, C+(\omega+r-x)\, \tilde M
\end{tikzcd}
 \]
 In these new variables, the vector space $\tilde M\oplus H\oplus\tilde\tP\oplus\tilde \tK\oplus\tJ$ defines a subalgebra of of $\mathsf{Li}_{\epsilon,\om,r}$ provided
 \[
  \begin{tikzcd}[row sep = small,column sep = small,ampersand replacement=\&]
 r=x\&x^2-\omega\, x+\epsilon=0\, .
 \end{tikzcd}
 \]
 The second equation admits real solutions provided $\om^2-4\, \epsilon\geq0$. This allows to realise all bargmannian Klein pairs as sub-Klein pairs, with the exception of \hyperlink{Table:AE3b}{$\mathsf{AE3b}$} and \hyperlink{Table:AE5b}{$\mathsf{AE5b}_{\alpha_+<2}$} for which no real solution exists.

\end{description}

\setcounter{secnumdepth}{0}
\section{Conclusion}
\setcounter{secnumdepth}{3}

The present work aimed at providing a description \emph{more algebraico} of the ambient approach, as put forward---in its geometric form---by Eisenhart \cite{Eisenhart1928} and Duval \etal \cite{Duval1985}. To this end, we proposed that algebraic realisations of the ambient approach can be collectively subsumed under the notion of \emph{projectable triplets} \UnskipRef{Definition:Projectable triplet}, defined as effective Klein pairs supplemented with a proper ideal such that the quotient Klein pair is also effective hence susceptible of geometric realisations. The effective Klein pair obtained by taking the quotient along the canonical ideal is then said to admit a \emph{lift} into the original projectable triplet. The two paradigmatic examples of such projectable triplets are given by the bargmann \UnskipRef{Example:Bargmann Klein pair} and leibniz \UnskipRef{Example:Leibniz Klein pair} Klein pairs, both of which provides a lift of the galilei kinematical Klein pair \hyperlink{Table:E1}{$\mathsf{E1}$}. These two examples provide two natural avenues for generalisation. 

\medskip
We followed the \emph{bargmannian} route in Section \ref{section:Bargmann and his modern rivals} by classifying \emph{projectable ambient triplets} \UnskipRef{Definition:Projectable ambient triplets}, defined as projectable triplets admitting the `bargmann-like' $\so(d)$-decomposition \eqref{equation:ambient aristotelian algebra}. The classification---as reported in Table \ref{Table:Projectable ambient triplets}---admits a five-fold partition according to the type of invariant metric structure living on the corresponding quotient space. Among these five categories feature two distinct classes of lifts for each of the kinematical galilean Klein pairs, including the torsional pairs unveiled in \cite{Figueroa-OFarrill2018} [see Table \ref{Table:Possible lifts of effective galilean Klein pairs}]. The first class includes the bargmann Klein pair as well as its curved and torsional generalisations, as displayed in \cite{FigueroaOFarrill2022} and studied in a similar context in \cite{FigueroaOFarrill2022e}. A first result of the present work thus consists in integrating such bargmannian pairs into an exhaustive classification allowing to characterise them as the unique ambient Klein pairs admitting a non-degenerate invariant structure, namely a leibnizian structure supplemented with a compatible lorentzian metric \UnskipRef{Definition:Ambient aristotelian metric structures on Klein pairs}. The second class is novel and characterised by a canonical invariant linear form upgrading the underlying leibnizian structure into what we called a \emph{$\mathsf G$-ambient} structure \UnskipRef{Definition:Ambient aristotelian metric structures on Klein pairs} [see Figure \ref{Figure:Hierarchy of ambient geometries}]. While each of the kinematical galilean Klein pairs admits a unique bargmannian lift, our classification reveals for each galilean Klein pair the existence of a one-parameter family of lifts into $\mathsf G$-ambient projectable triplets, the arbitrariness thereof being encoded into the ambient torsion. In a forthcoming companion paper \cite{Morand2023}, we will provide geometric realisations of the two above classes of ambient Klein pairs as isometry algebras of homogeneous ambient manifolds. 

\medskip
Lastly, we explored the \emph{leibnizian} route (albeit via a shortcut) in Section \ref{section:Drink me} by resorting to dimensional reduction and \IW contraction so that to generate lifts of galilean Klein pairs into the maximal class of leibnizian Klein pairs [see Table \ref{Table:Reductive leibnizian lifts of galilean algebras}]. The corresponding projectable triplets admit the $\so(d)$-decomposition \eqref{equation:ambient kinematical algebra} and are of maximal dimension, generalising the leibniz Klein pair \cite{Bekaert2015b} by featuring non-trivial curvature/torsion. Contradistinctly to their bargmannian and $\mathsf G$-ambient counterparts, those leibnizian lifts admit non-reductive generalisations, as exhibited in Table \ref{Table:Leibnizian lifts of galilean algebras}. We commented on the relationships between those and the previously described bargmannian and $\mathsf G$-ambient pairs. This glimpse into the richness of the landscape of leibnizian Klein pairs prompts to integrate the ones displayed in the present work into a comprehensive classification of maximal projectable triplets. Another natural follow-up to the present study would consist in addressing the dual case by classifying Klein pairs allowing to embed kinematical Klein pairs, thus providing an algebraic counterpart of the embedding of Carrollian manifolds into ambient bargmannian \cite{Duval1991} and leibnizian \cite{Bekaert:2014bwa} manifolds. Among such embedding Klein pairs would feature the previously mentioned bargmannian Klein pairs as well as the dual counterpart of $\mathsf G$-ambient geometries---dubbed \emph{$\mathsf C$-ambient}---and introduced \emph{en passant} in Section \ref{section:Climbing up one leg of the table: an ambient perspective on Klein pairs}. We hope to be able to investigate these aforementioned directions in future works.

\section*{Acknowledgements}
\noindent 
The author is grateful to T.~Basile, X.~Bekaert, Y.~Herfray and J.~Figueroa--O'Farrill for stimulating and enlightening discussions. We also express our sincere gratitude to T.~Basile and J.~Figueroa--O'Farrill for their invaluable input and thoughtful comments on an early version of this manuscript. This work was supported by Brain Pool Program through the National Research Foundation of Korea (NRF) funded by the Ministry of Science and ICT \texttt{2018H1D3A1A01030137} and by Basic Science Research Program through the National Research Foundation of Korea (NRF) funded by the Ministry of Education \texttt{NRF-2020R1A6A1A03047877} and \texttt{NRF-2022R1I1A1A01071497}.

\bibliographystyle{authoryear}

\end{document}